\title{Misspecified Beliefs about Time Lags\thanks{We thank S. Nageeb Ali, Renee Bowen, Drew Fudenberg, Yuhta Ishii, Shengwu Li, and Bruno Strulovici for helpful comments. We thank NSF Grant SES-1947021 for financial support.}}
\author{Yingkai Li \and Harry Pei}
\date{\today}
\DeclareMathOperator{\argmax}{argmax}
\newcommand{\given}{\,\mid\,}
\newcommand{\prob}[2][]{\text{\bf Pr}\ifthenelse{\not\equal{}{#1}}{_{#1}}{}\!\left[{\def\givenn{\middle|}#2}\right]}
\newcommand{\expect}[2][]{{\mathbb{E}}\ifthenelse{\not\equal{}{#1}}{_{#1}}{}\!\left[{\def\givenn{\middle|}#2}\right]}
\newcommand{\tparen}{\big}
\newcommand{\tprob}[2][]{\text{\bf Pr}\ifthenelse{\not\equal{}{#1}}{_{#1}}{}\tparen[{\def\given{\tparen|}#2}\tparen]}
\newcommand{\texpect}[2][]{\text{\bf E}\ifthenelse{\not\equal{}{#1}}{_{#1}}{}\tparen[{\def\given{\tparen|}#2}\tparen]}
\newcommand{\sprob}[2][]{\text{\bf Pr}\ifthenelse{\not\equal{}{#1}}{_{#1}}{}[#2]}
\newcommand{\sexpect}[2][]{\text{\bf E}\ifthenelse{\not\equal{}{#1}}{_{#1}}{}[#2]}
\newcommand{\strategy}{\sigma}
\newcommand{\event}{{\cal E}}
\newcommand{\cF}{{\cal F}}
\newcommand{\az}{0}
\newcommand{\ao}{1}
\newcommand{\bE}{\mathbb{E}}
\newcommand{\qstar}{q^*}
\newcommand{\lambdaf}{\lambda}
\begin{document}

\clearpage\maketitle
\thispagestyle{empty}

\begin{abstract} \normalsize
We examine the long-term behavior of a Bayesian agent who has a misspecified belief about the time lag between actions and feedback, and learns about the payoff consequences of his actions over time. 
Misspecified beliefs about time lags result in attribution errors, which have no long-term effect when the agent's action converges, but can lead to arbitrarily large long-term inefficiencies when his action cycles. Our proof uses concentration inequalities to bound the frequency of action switches, which are useful to study learning problems with history dependence. We apply our methods to study a policy choice game between a policy-maker who has a correctly specified belief about the time lag and 
the public who has a  misspecified belief.\\

\noindent \textbf{Keywords:} time lag, misspecified belief, Bayesian learning,  action cycles, history dependence, 
concentration inequality. 
\end{abstract}

\numberwithin{equation}{section}

\begin{spacing}{1.5}
\newtheorem{Proposition}{\hskip\parindent\bf{Proposition}}
\newtheorem{Theorem}{\hskip\parindent\bf{Theorem}}
\newtheorem{Lemma}{\hskip\parindent\bf{Lemma}}[section]
\newtheorem{Corollary}{\hskip\parindent\bf{Corollary}}[section]
\newtheorem*{Condition}{\hskip\parindent\bf{Regular Prior Belief}}
\newtheorem*{Definition2}{\hskip\parindent\bf{Valid Cycle}}
\newtheorem*{Definition3}{\hskip\parindent\bf{Concentration Inequality with Unbounded Support}}
\newtheorem{Definition}{\hskip\parindent\bf{Definition}}
\newtheorem{Assumption}{\hskip\parindent\bf{Assumption}}
\newtheorem{Claim}{\hskip\parindent\bf{Claim}}

\newpage

\setcounter{page}{1}

\section{Introduction}\label{sec1}
We study learning problems faced by Bayesian decision makers who have misspecified beliefs about the time lag between decisions and feedback. We examine the long-term consequences of such belief misspecifications both in single-agent decision-making problems and in games of collective decision-making.

Misperception about time lags is prevalent among decision makers at various levels, ranging from leaders in organizations to ordinary citizens. For example, a manager decides how much resource to allocate to  R\&D. Unlike efforts on production and sales, investments in R\&D are unlikely to pay off in the short run, and moreover, it is usually unclear when and whether they will pay off.  \cite{RS-02}  show that these time lags hinder an organization's learning about the optimal resource allocation by ``\textit{complicating the attribution of causality between actions and results}''.
\citet*{RRS-09} point out that what slows down organizational learning is not the delay per se, but instead, people's misperceptions about the delay. Consequences of such misperceptions include the so-called \textit{capability traps} \citep{RS-02}, in which members of an organization work hard on production at the expense of cutting back on the time allocated to R\&D and maintenance, that ultimately results in low productivity.

Similarly, fans of football clubs tend to credit or blame their current managers for their team's performances
while ignoring the effects of previous managers' decisions. Many people believe that reopening the economy is safe amidst the COVID-19 pandemic when the number of cases and hospitalizations in Georgia, Florida, and Arizona went down three weeks after these states' reopenings.\footnote{The state of Georgia reopened in late April, and on May 23rd, Governor Brian Kemp shared the news that hospitalizations are down by 30\% since the state reopened. Similar patterns arise after the reopening of Florida, Texas and Arizona in early May. However, the number of cases and hospitalizations in these states started to surge from late June to July. See  https://www.latimes.com/world-nation/story/2020-05-23/georgia-reopened-first-the-data-say-whatever-you-want-them-to and  https://www.cnbc.com/2020/06/29/more-states-reverse-or-slow-reopening-plans-as-coronavirus-cases-climb.html} However, the number of cases and hospitalizations started to surge six to eight weeks after these states' reopennings, forcing some of them to partially return to lockdown.

We propose a model that incorporates such misperceptions. 
In every period, 
an agent chooses an action and observes an outcome that determines his payoff. The agent faces uncertainty about the \textit{state}, i.e., the mapping from his actions to
the outcome distributions. 
He observes the history of actions and outcomes and updates his belief according to Bayes rule. We assume that the true state belongs to the support of the agent's prior belief and that the outcome is informative about the state regardless of the agent's action.

The outcome distribution in period $t$ depends only on the agent's action in period $t-k^*$ while the agent believes that it depends on his action in period $t-k'$, where $k'$ is different from $k^*$.\footnote{Section \ref{sec5} extends our  result to situations where (1) the outcome distribution depends on a weighted average of the agent's current and past actions, or (2) the agent faces uncertainty about the time lag and learns about it over time.}
Our formulation can capture, for example, 
an individual underestimates the time it takes for workouts to have effects on fitness, 
a policy-maker underestimates or overestimates the time it takes for a curriculum reform to have effects on students' academic achievements, and so on.

This novel form of belief misspecification interferes learning through an \textit{attribution error}, which has no long-term effect when the agent's action converges but can lead to mislearning when the agent's action changes over time. 
Theorem \ref{Theorem1} shows that the mislearning caused by attribution errors can lead to arbitrarily large long-term inefficiencies 
in the sense that for an open set of states, there exist  prior beliefs that include the true state in their support such that the asymptotic frequency with which the agent takes his optimal action is arbitrarily close to zero. 
This stands in contrast to the benchmark scenario with a correctly  specified belief about the time lag, in which the agent chooses his optimal action almost surely in the long run.

The first challenge in establishing  this result stems from the fact that our learning problem exhibits nontrivial history-dependence. This is because the agent's current-period action directly affects his future observations. 
The second challenge arises from the observation that the
agent's action cannot converge to anything suboptimal and inefficiencies can only arise when the agent's action cycles in the long run. As a result, one needs to  bound the frequency of action switches in order to quantify the amount of mislearning, which is a key step toward showing that the posterior probability of the true state is low in the long run.

We develop a new technique using concentration inequalities. First, we examine an auxiliary problem in which the true state is excluded from the agent's prior belief. We use the Chernoff-Hoeffding inequality to show that in expectation, the agent switches actions within a finite number of periods. We then establish a concentration inequality on unbounded random variables
in order to bound the frequency of action switches. 
Next, we study situations in which the true state occurs with small but positive probability. We use the Azuma-Hoeffding inequality to show that due to the mislearning caused by frequent action switches, 
the true state occurs with low probability in the agent's posterior for all periods. This explains why the agent's actions cycle over time even when the true state belongs to the support of his prior belief.

We apply our framework to study a dynamic policy choice game between a policy-maker who has a correctly specified belief about the time lag and the public who has a misspecified belief. The policy-maker wants to implement a socially beneficial reform but cannot do so without the public's support.  The public prefers the reform to the status quo in one state and prefers the status quo in the other state. This conflict of interest can arise 
when the reform has positive externalities or the status quo has negative externalities on marginalized groups 
 (e.g., massive gatherings during a pandemic has negative externalities on the immunocompromised)
which the policy-maker cares about but the majority of citizens fail to internalize. Therefore, a reform can be both optimal for the benevolent policy-maker and suboptimal for the majority of citizens.

We characterize the maximal frequency that the policy-maker can implement the reform when he has no private information about the state. We show that the policy-maker's optimal payoff equals the maximal frequency of reform in an auxiliary game where he knows the state but the public is naive in the sense that they fail to recognize the informational content of the policy-maker's behaviors. Intuitively, this is because the policy-maker can asymptotically learn the true state, and when the reform is optimal for the public, he can implement the reform in almost every period regardless of the public's prior.

We also construct a class of strategies under which the policy-maker can approximately achieve his optimal payoff, according to which he proposes the reform with frequency close to a half when the public entertains a pessimistic belief about the reform, and proposes the reform with frequency strictly greater than a half when the public entertains an optimistic belief about the reform. The former maximizes the amount of mislearning and the latter maximizes the frequency of  reform subject to a constraint that the expected amount of mislearning is non-negative. The key step is to use
the Wald inequality and show that conditional on the reform being suboptimal for the public,
the policy-maker's future proposals are accepted with probability close to $1$
when he started to propose the reform with frequency greater than a half.

 Our work contributes to a growing literature on misspecified learning by studying  environments with history dependence. The agent in our model has a misspecified belief about the dynamic structure of the problem and his past actions can affect  future outcomes. This stands in contrast to most of the existing works such as \cite{Ber-66}, \cite{Nya-19}, \cite{EP-16}, \citet*{FRS-17}, \cite{BH-20}, \citet*{FII-20}, \citet*{EPY2020}, and \citet*{FLS-20} that exclude history dependence.

Several recent papers study misspecifed learning models with history dependence and provide conditions for the steady states. \cite*{shalizi2009dynamics} provides sufficient conditions for the convergence of posterior belief when there is no endogenous action choice and the signals in different periods can be correlated. 
\cite{He-20} examines misspecified learning in two-period optimal stopping problems in which an agent mistakenly believes that the outcome in the second period is negatively correlated with that in the first period. \cite{EP-20} study a single-agent Markov decision problem with misspecified beliefs about the state transition function. \cite{Molavi-20} examines a dynamic general equilibrium model in which an agent's choice in the current period affects the constraints he face in the future. By contrast, we focus on the dynamics of an agent's behavior in history-dependent learning problems instead of the steady states. We show that the long-run outcome can be inefficient by bounding the frequency of action switches.\footnote{\cite*{EPY2020} introduce stochastic approximation techniques and characterize the frequency of the agent's actions in misspecified learning problems without history-dependence.}

The attribution error in our model is related to 
\cite{eliaz2020model}, who study an agent's long-term behavior when he updates his belief according to a misspecified causal model. They propose a solution concept that characterizes the steady states of the above learning process, rather than examining the dynamics of actions and beliefs. 
\cite{S2013} examines the dynamic interaction between an agent and a sequence of principals, each of them acts only once and chooses whether to intervene. The agent attributes changes of a state variable to the latest intervention, which is applicable when some of the principal's actions (intervention) are more salient than others (no intervention). \cite{JS-12} characterize an informed long-run player's payoff and behavior when he faces a sequence of short-run players who mistakenly believe that all types of the long-run player use stationary strategies.
By contrast, we study a different type of attribution error, where the agent has wrong beliefs about the delay between actions and feedback.

\section{Model}\label{sec2}
Time is discrete, indexed by $t=1,2,..$. In period $t$, a Bayesian agent chooses an action $a_t \in A$, and then observes an outcome $y_t \in Y$. We assume that both $A$ and $Y$ are finite sets.

Our modeling innovation is to introduce time lags between decisions and feedback as well as the agent's misperception about the time lag. Formally, there exist two non-negative integers $k^*,k' \in \mathbb{N}$ with $k^* \neq k'$, such that the distribution of $y_t$ depends only on $a_{t-k^*}$,
while the agent believes that it depends on $a_{t-k'}$.

The agent faces uncertainty about the distribution over outcomes (which we call the \textit{state}) and learns about it over time by observing the history of actions and outcomes. A typical state is denoted by $F \equiv \{F(\cdot|a)\}_{a \in A}$, with $F(\cdot|a) \in \Delta (Y)$. Let $F^* \equiv \{F^*(\cdot|a)\}_{a \in A}$ be the \textit{true state}, namely, $y_t$ is distributed according to $F^*(\cdot|a_{t-k^*})$. The agent's prior belief about the state is $\pi_0 \in \Delta (\mathcal{Y})$, with $\mathcal{Y} \equiv \Big(\Delta (Y) \Big)^{A}$ and $\textrm{supp}(\pi_0)$ is finite.\footnote{When there are infinitely many states, \cite{diaconis1986consistency} and \cite{shalizi2009dynamics} show that the agent's posterior belief may not converge to the true state even when the true state belongs to the support of his prior belief. We abstract away from this complication in order to focus on the economic implications of misspecified beliefs about time lags.} After the agent learns that the state is $F$,  he believes that $y_t$ is distributed according to $F(\cdot|a_{t-k'})$.
The agent observes $h^t \equiv \{...,a_{-1},a_0,a_1,...,a_{t-1},y_1,....,y_{t-1}\}$ in period $t$ and his posterior belief is denoted by $\pi_t \in \Delta (\mathcal{Y})$. All the actions before period $1$ are exogenously given. 
In order to focus on misspecified belief about the time lag, we focus on prior beliefs that are \textit{regular}:
\begin{Condition}\label{cd:1}
$\pi_0$ is regular with respect to $F^*$ if
\begin{enumerate}
\item[1.]  $F^* \in \textrm{supp}(\pi_0)$, and for every $F \in \textrm{supp}(\pi_0)$ and $a \in A$, $F(\cdot|a)$ has full support.
\item[2.] for every $F,F' \in \textrm{supp}(\pi_0)$ and $a \in A$, we have $F(\cdot|a) \neq F'(\cdot|a)$.
\end{enumerate}
\end{Condition}
The first part  requires that the true state $F^*$ belongs to the support of the agent's prior belief and that the agent cannot rule out any state no matter which action he takes and which  outcome he observes. This rules out canonical forms of belief misspecifications studied by \cite{Ber-66}, \cite{Nya-19}, and \cite{EP-16} in which $F^*$ is excluded from the agent's prior belief.
The second part requires that the observed outcome is informative about the state regardless of the agent's action, which is satisfied for generic finite subsets of $\mathcal{Y}$. It rules out lack-of-identification problems, such as safe-arms in bandit models.

The agent's stage-game payoff is $v(y_t)$.
We assume that $\arg\max_{a \in A} \{ \sum_{y \in Y} v(y) F^*(y|a)\}$ is a singleton, and its unique element is denoted by $a^*$, i.e., the agent has a unique optimal action under the true state.
This is satisfied for generic $F^* \in \mathcal{Y}$ and $v: Y \rightarrow \mathbb{R}$ given that $A$ and $Y$ are finite sets. 
The agent's strategy is $\sigma: \mathcal{H} \rightarrow \Delta (A)$, where $\mathcal{H}$ is the set of histories. 
Strategy $\sigma$ is optimal if $\sigma(h^t)$ maximizes the expected value of $\sum_{s=0}^{+\infty} \delta^{s} v(y_{t+s})$ at every $h^t$, where $\delta \in [0,1)$ is the agent's discount factor.
We focus on settings such that either $\delta \in (0,1)$ or $(\delta,k')=(0,0)$. This is because the agent is indifferent between all actions when $\delta=0$ and $k' \geq 1$.
Let $\Sigma^*(\pi_0)$ be the set of  strategies that are optimal for the agent when his prior  is $\pi_0$.

For some useful benchmarks, the agent chooses $a^*$ in every period after he learns that the true state is $F^*$ even if he entertains a misspecified belief about the time lag. If there is no belief misspecification, i.e., $k^*=k'$, then according to Berk's Theorem \citep{Ber-66}, the agent's action converges to $a^*$ almost surely.

\paragraph{Remark:} Our baseline model focuses on situations in which the outcome in every period is affected only by one of the agent's actions. Section \ref{sec5} discusses extensions where the outcome in period $t$ depends on a convex combination of the agent's past and current-period actions, and the agent has a wrong belief about the weights of different actions. We also consider settings in which the agent faces uncertainty about the time lag and learns about it over time, but the support of his prior belief excludes the true time lag.

The agent's payoff in our baseline model depends only on the observed outcome.
Our results extend when the agent's payoff also depends on the state. 
When the agent's payoff depends directly on his action (e.g., different actions have different costs), his action can be suboptimal even when he learns the true state. 
This is because when $k^* \neq k'$, the agent
either overestimates or underestimates the time it takes for his action to have an effect, which can lead to suboptimal decisions 
since the agent discounts future payoffs. 
 Our results extend to settings where the agent's payoff is $v(y_t)-c(a_t)$, as long as the absolute value of $c(\cdot)$ is small enough such that the agent has a strict incentive to choose $a^*$ after he learns the true state. 

\section{Result}\label{sec3}
First, we show 
that if the agent's action converges, then it can only converge to his optimal action. Moreover, the asymptotic frequency of his optimal action must be strictly positive when his prior belief is regular. 
\begin{Lemma}\label{L2.1}
Suppose either $\delta \in (0,1)$ or $(\delta,k')=(0,0)$.
If $\pi_0$ is regular with respect to $F^*$ and $a_t$ converges to $a$ with positive probability, then $a=a^*$. Furthermore,
\begin{equation}\label{3.1}
    \liminf_{t \rightarrow +\infty}    \mathbb{E}^{\sigma} \Big[
\frac{1}{t} \sum_{s=1}^t \mathbf{1}\{a_s=a^*\} 
\Big] >0  \textrm{ for every } \sigma \in \Sigma^*(\pi_0).
\end{equation}
\end{Lemma}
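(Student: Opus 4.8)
The plan is to prove the two assertions separately, in both cases exploiting that the pairwise log-likelihood ratios in favor of $F^*$ drift upward whenever the action is locally constant. For $F \neq F^*$ write $\ell_t(F) \equiv \log\big(\pi_t(F^*)/\pi_t(F)\big)$, and set $\kappa \equiv \min_{a \in A,\, F \neq F^*} \KL\big(F^*(\cdot|a)\,\|\,F(\cdot|a)\big)$, which is strictly positive because the prior is regular (each $F(\cdot|a)$ has full support and $F(\cdot|a)\neq F^*(\cdot|a)$) and $\supp(\pi_0)$ is finite. The first preliminary I would record is a threshold property of optimal play: there exists $\bar p<1$ such that $\pi_t(F^*)\geq \bar p$ forces $\sigma(h^t)=a^*$ for every $\sigma\in\Sigma^*(\pi_0)$. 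Under the agent's own model, his current action influences only the perceived outcome $k'$ periods later, whose expected payoff in state $F$ is $\delta^{k'}\sum_y v(y)F(y|a_t)$; at the point mass on $F^*$ this is uniquely maximized by $a^*$ with a strict gap, while the informational value of any action vanishes as the belief concentrates on $F^*$. Hence the total value gap between $a^*$ and any other action is continuous in $\pi_t$ and strictly positive at $\delta_{F^*}$, which yields $\bar p$; when $(\delta,k')=(0,0)$ the problem is myopic and the conclusion is immediate.

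For the convergence claim, suppose $a_t\to a$ on an event of positive probability. After the random time at which the action settles at $a$, the outcomes are i.i.d.\ draws from $F^*(\cdot|a)$ that the agent interprets through $F(\cdot|a)$, so conditioning on the action path and applying the strong law gives $\ell_t(F)\to +\infty$ for every $F\neq F^*$, hence $\pi_t(F^*)\to 1$. The threshold property then forces $a_t=a^*$ for all large $t$ on this event, so $a=a^*$.

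The second claim is the substantive one. I would argue by contradiction, supposing the expected frequency of $a^*$ tends to zero along a subsequence. By the contrapositive of the threshold property, $a_t\neq a^*$ entails $\pi_t(F^*)\leq \bar p$, so along this subsequence the belief stays bounded away from $F^*$ with frequency tending to one. The key accounting comes from the drift of $D_t\equiv \log\pi_t(F^*)\leq 0$: writing $\bar F_{t-1}(y)\equiv\sum_F\pi_{t-1}(F)F(y|a_{t-k'})$ for the agent's predictive distribution, the conditional increment under the true law decomposes as $\KL\big(F^*(\cdot|a_{t-k^*})\,\|\,\bar F_{t-1}\big)-\mathrm{pen}_t$, where the penalty $\mathrm{pen}_t\equiv \KL\big(F^*(\cdot|a_{t-k^*})\,\|\,F^*(\cdot|a_{t-k'})\big)\geq 0$ is nonzero only in periods where the two relevant actions differ, i.e.\ where a switch occurred between $t-k^*$ and $t-k'$, and is uniformly bounded above. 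Summing and using $D_t\leq 0$ bounds the accumulated correct learning, $\sum_{s\leq t}\KL\big(F^*(\cdot|a_{s-k^*})\,\|\,\bar F_{s-1}\big)$, by $\log\big(1/\pi_0(F^*)\big)$ plus a constant times the number of misaligned periods, which is at most $|k^*-k'|$ times the number of action switches.

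The hard part will be converting this budget into a strictly positive lower bound on the frequency of $a^*$. Reformulating through the pairwise odds, each locally constant stretch contributes drift at least $\kappa$ to every $\ell_t(F)$, so if switches were rare the odds would diverge, the belief would concentrate on $F^*$, and $a^*$ would be played with frequency near one, contradicting the hypothesis. The remaining and most delicate case is frequent switching: here one must show that the mislearning sustainable through attribution errors is bounded when $F^*$ carries strictly positive prior weight, so that $\pi_t(F^*)$ recurrently exceeds $\bar p$. Establishing this requires quantitatively controlling the frequency of action switches, which is precisely where the paper's concentration-inequality technique enters, and I expect it to be the main obstacle in the argument.
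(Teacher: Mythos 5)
Your treatment of the first assertion and your preliminary threshold property are sound and consistent with the paper's logic (modulo the usual care needed when conditioning on the convergence event: ``$a_t\to a$'' is determined by future outcomes, so the post-settling outcomes are not literally an i.i.d.\ sample conditional on that event; this is repairable by a standard stopping-time argument on the unconditional process). Your drift decomposition of $\log\pi_t(F^*)$ into a KL gain minus an attribution penalty supported on the misaligned periods is also correct as an identity.

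The genuine gap is in the proof of (\ref{3.1}), and it is not merely that the last step is left open: the route you sketch aims at a claim that is false. You propose to contradict vanishing expected frequency by showing that ``the mislearning sustainable through attribution errors is bounded when $F^*$ carries strictly positive prior weight, so that $\pi_t(F^*)$ recurrently exceeds $\bar p$.'' Theorem \ref{Theorem1} is precisely the statement that this fails: for suitable regular priors, the event that $\pi_t(F^*)$ stays below an arbitrarily small $\eta$ in \emph{every} period has probability at least $1-\eta$, so no recurrence of high posteriors on typical paths can be extracted from your learning budget. What the lemma actually needs --- and what the paper's Appendix \ref{secB} delivers --- is much weaker: a single event of positive probability on which the agent eventually plays $a^*$ forever. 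Claim \ref{clm:b1} produces, via the Azuma--Hoeffding inequality applied to an auxiliary correctly specified process together with the full-support part of regularity (which guarantees that the favorable finite signal string has positive probability under the true, misspecified dynamics as well), a fixed time $T$ and $\eta>0$ such that $\pi_T(F^*)>1-\epsilon$ with probability at least $\eta$. Claim \ref{clm:b2} then uses Wald's inequality to show that, starting from such a belief, the log-likelihood ratios against $F^*$ never recross the action-switching threshold with probability bounded away from zero: once the agent locks into $a^*$ there is no attribution error, and every pairwise drift equals $-\KL\big(F^*(\cdot|a^*)\,\|\,F(\cdot|a^*)\big)<0$. The product of these two probabilities bounds the liminf in (\ref{3.1}) from below; no control of the switching frequency is required for this lemma (that machinery belongs to Theorem \ref{Theorem1}). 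As written, your proposal contains neither this positive-probability construction nor any substitute for it, and it cannot be completed along the contradiction route you describe.
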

The proof is in Appendix B. Intuitively, the only way in which misspecified beliefs about time lags can interfere learning is through an \textit{attribution error}, namely, the agent attributes the effects of $a_{t-k^*}$ to $a_{t-k'}$. 
 When the agent's action converges, $a_{t-k^*}$ and $a_{t-k'}$ are the same so the attribution error does not affect his learning. 
Since $F^* \in \textrm{supp}(\pi_0)$ and there is no lack-of identification problem, the agent will learn the true state almost surely. This implies that the agent's 
actions are asymptotically efficient, which contradicts the presumption that his action converges to something other than $a^*$.
The agent takes his optimal action with positive asymptotic frequency since
for every $\varepsilon>0$, the following event occurs with positive probability:
\begin{equation}\label{3.2}
   \mathcal{E}^{\varepsilon} \equiv \{\textrm{there exists } T \in \mathbb{N} \textrm{ such that } \pi_t(F^*) > 1-\varepsilon \textrm{ for every } t \geq T\}.
\end{equation}
Intuitively, this is because $y$ is informative about the state, so there always exists a signal realization that increases the posterior probability of $F^*$. 
The probability that the agent chooses $a^*$ in all future periods is strictly positive
when the posterior probability of $F^*$ is close to $1$,  which implies  (\ref{3.1}).

Despite the agent's action cannot converge to anything other than $a^*$, the attribution errors caused by  misspecified beliefs can lead to arbitrarily large long-term inefficiencies. Theorem \ref{Theorem1} shows that the frequency with which the agent takes his optimal action can be arbitrarily close to $0$. 
\begin{Theorem}\label{Theorem1}
Suppose either $\delta \in (0,1)$ or $(\delta,k')=(0,0)$.
For every $\gamma >0$, there exists an open set  $\mathcal{Y}^o \subset \mathcal{Y}$ such that for every $F^* \in \mathcal{Y}^o$,
there is
a prior belief $\pi_0$ that is regular with respect to $F^*$ under which 
\begin{equation}\label{eq:infinitesimal frequency} \limsup_{t \rightarrow +\infty}    \mathbb{E}^{\sigma} \Big[
\frac{1}{t} \sum_{s=1}^t \mathbf{1}\{a_s=a^*\}
\Big]   < \gamma \quad \textrm{for every} \quad \sigma \in \Sigma^*(\pi_0). 
\end{equation}
\end{Theorem}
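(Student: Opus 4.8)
The plan is to construct, for each $\gamma>0$, a family of true states $F^*$ and regular priors $\pi_0$ under which the misattribution traps the agent's belief in a long-run cycle that prescribes $a^*$ with frequency below $\gamma$. The engine is the swap implicit in the model: $y_t$ is drawn from $F^*(\cdot\mid a_{t-k^*})$ but the agent credits it to $a_{t-k'}$, so whenever the agent's action differs across the lag window the outcome he associates with a given action is really generated by a \emph{different} action. I would place in $\supp(\pi_0)$, besides $F^*$, a finite family of ``confounding'' states $\{\hat F^{(j)}\}_j$ engineered so that (i) under $\hat F^{(j)}$ the agent's unique optimal action is some $a^{(j)}$, with only a vanishing fraction of the $\hat F^{(j)}$ prescribing $a^*$; and (ii) the outcome distributions of $\hat F^{(j)}$ equal exactly the swapped distributions the agent would observe while holding action $a^{(j)}$, so that the evidence generated at $a^{(j)}$ pushes the posterior toward a different confounding state and hence toward a different action. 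A one-line argument shows that a fixed belief induces a constant optimal action, so cycling must come from the \emph{posterior itself} cycling --- which is possible precisely because the model is misspecified and no single state can explain the nonstationary data the agent generates through his own switching. Taking all distributions with full support and perturbing to separate states delivers regularity, and since the relevant strict preferences and $\KL$ gaps are open conditions, the whole construction survives on an open neighborhood $\mathcal{Y}^o$ of the constructed $F^*$.

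Following the two-step structure the paper advertises, \textbf{Stage 1} deletes $F^*$ and keeps only the confounding states. Here I would show the agent can never settle on a single action: by the logic behind Lemma~\ref{L2.1}, if his action converged the absence of residual misattribution would force the posterior onto the state best-rationalizing that action, but by construction that state prescribes a \emph{different} action --- a contradiction. To make this quantitative I would apply the Chernoff--Hoeffding inequality to each block on which a fixed action is held: the empirical outcome frequencies concentrate on the swapped distribution, driving the log-posterior-odds toward a confounding state that mandates a switch, which bounds the expected block length and shows the agent switches within a finite expected number of periods. Since block lengths can be large, I would then use the paper's concentration inequality for unbounded variables to upgrade this to an almost-sure bound on the long-run switch frequency; combined with the fact that blocks at $a^*$ are both short and rarely entered (from the cycle structure), this bounds the long-run frequency of $a^*$ below $\gamma$ in the auxiliary problem.

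\textbf{Stage 2} reintroduces $F^*$ with small prior weight $\varepsilon$. The point is that the cycling from Stage 1 keeps the observations persistently swapped, so $F^*$ is never favored: the per-period increment of the log-posterior-odds of $F^*$ against the active confounding state has conditional expectation $\sum_y F^*(y\mid a_{t-k^*})\log\frac{F^*(y\mid a_{t-k'})}{\hat F(y\mid a_{t-k'})}$, which is nonpositive because $\hat F$ is the swapped match (its conditioning argument is the true generating distribution) while $F^*$'s is not. Thus the log-odds of $F^*$ form a supermartingale with unbounded increments, and the Azuma--Hoeffding inequality yields that, with probability close to $1$ uniformly in $t$, the posterior weight on $F^*$ stays below a small threshold for all periods. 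On this high-probability event the agent's continuation problem is a small perturbation of the auxiliary problem, so his behavior inherits the Stage-1 action frequencies; on the complementary event I bound the $a^*$-frequency trivially by $1$. Taking $\varepsilon$ small and averaging gives $\limsup_t \mathbb{E}^\sigma[\tfrac1t\sum_{s\le t}\mathbf 1\{a_s=a^*\}]<\gamma$ for every $\sigma\in\Sigma^*(\pi_0)$, which is \eqref{eq:infinitesimal frequency}.

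The main obstacle, and the reason off-the-shelf tools do not suffice, is the interaction of \emph{history dependence} with \emph{unbounded} likelihood increments: the agent's current action determines which action generates a future observation, so the increments are neither independent nor bounded, and every bound must hold uniformly over all optimal strategies, including mixed ones and arbitrary tie-breaking. This is exactly what forces the tailored unbounded-support concentration inequality in Stage 1 and the careful supermartingale control in Stage 2, where full-support outcomes can occasionally produce large increments in favor of $F^*$ and one must still keep the ``$F^*$ never takes over'' event of probability bounded below for all $t$. A secondary difficulty is the combinatorial design of the confounding family so that the belief cycle provably visits $a^*$ with frequency below the prescribed $\gamma$ while remaining robust on the open set $\mathcal{Y}^o$.
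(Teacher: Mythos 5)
Your two-stage architecture (auxiliary problem without $F^*$, Chernoff--Hoeffding for block lengths, the unbounded-support concentration inequality for switch frequencies, then Azuma--Hoeffding to keep $\pi_t(F^*)$ small) matches the paper's, but there are two genuine gaps. First, your Stage-2 supermartingale claim is false. The conditional expectation of the increment of $\log\frac{\pi_t(F^*)}{\pi_t(\hat F)}$ is $\sum_y F^*(y\mid a_{t-k^*})\log\frac{F^*(y\mid a_{t-k'})}{\hat F(y\mid a_{t-k'})}$, and in every period with $a_{t-k^*}=a_{t-k'}=a$ this equals $D\bigl(F^*(\cdot\mid a)\,\big\|\,\hat F(\cdot\mid a)\bigr)>0$ by the regularity condition, so the log-odds of $F^*$ strictly \emph{increases} in expectation throughout every block and only decreases at switch periods. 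The process is not a supermartingale, and the entire content of the theorem is the quantitative balance you erase: one must show that the losses at switches (of order $D(F^*(\cdot\mid a)\|F^*(\cdot\mid a'))$, made large in the paper by sending $\zeta'\to 0$) outweigh the within-block gains (bounded because the expected block lengths from Stage 1 are finite). Without that comparison the event that $\pi_t(F^*)$ stays small for all $t$ cannot be established, and indeed for generic constructions it fails. (Relatedly, Azuma--Hoeffding requires bounded increments; with full support on finite $Y$ the increments are bounded, so your ``unbounded increments'' framing is both unnecessary and, if taken literally, incompatible with the tool you invoke.)

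Second, you never supply the mechanism that drives the frequency of $a^*$ below an \emph{arbitrary} $\gamma$; you defer ``the combinatorial design of the confounding family'' as a secondary difficulty, but it is the primary one. In a two-action cycle the blocks of $a^*$ and of the other action alternate, so $a^*$-blocks cannot be ``rarely entered''; the asymptotic frequency of $a^*$ is approximately $\mathbb{E}[\tau_0]/(\mathbb{E}[\tau_0]+\mathbb{E}[\tau_1])$, and the paper makes this small by engineering $\mathbb{E}[X_{1\to 1}(F_0,F_1)]$ to be close to zero (taking $\zeta\to 0$) so that the expected return time $\mathbb{E}[\tau_1]$ to action $a^*$ diverges while $\mathbb{E}[\tau_0]$ stays bounded. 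Your criterion that ``only a vanishing fraction of the $\hat F^{(j)}$ prescribe $a^*$'' does not control the time-average: what matters is expected block duration, not the count of states. Both gaps are fixable, but they are exactly the places where the proof has to do real work, so as written the proposal does not establish the theorem.
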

Theorem \ref{Theorem1} implies that Bayesian agents fail to take their optimal action even when the true state belongs to the support of their prior belief and 
the observed outcome can statistically identify the state. 
Intuitively, this is because attribution errors lead to \textit{mislearning} when the agent switches actions. In particular, when the agent's action changes over time, the probability of the true state can decrease in expectation. When action switches are frequent enough, the amount of mislearning outweighs the what the agent learns when he takes the same action in consecutive periods. As a result, his posterior belief may attach a low probability to $F^*$ in all periods. Under some $F^*$ and $\pi_0$ that is regular with respect to $F^*$, such an event occurs with probability arbitrarily close to $1$, which leads to arbitrarily large asymptotic inefficiencies.

The proof is in Appendix C. We explain the logic behind our argument using an example,
which illustrates how attribution errors lead to action cycles in the long run and  how to bound the frequency of action switches using concentration inequalities.

\paragraph{Illustrative Example:} 
Suppose $A \equiv \{0,1\}$, $\delta=0$, $k^*=1$, and $k'=0$. That is, the agent is myopic, the distribution of $y_t$ depends only on the agent's action in period $t-1$ while the agent believes that $y_t$ is affected by his action in period $t$.
Let $Y \equiv \{y_0,y_1,y_2\}$, $v(y_0)=v(y_1)=0$, $v(y_2)=1$, and the support of $\pi_0$ is $\{F^*,F_0,F_1\}$, with
 \begin{align*}
  &F^*(y_i | a=0) = \begin{cases}
        \epsilon & i=0 \\
        1 - 3\epsilon & i=1 \\
        2\epsilon & i = 2
        \end{cases}\quad
        &F^*(y_i | a=1) = \begin{cases}
        1-2\epsilon & i=0 \\
        \epsilon & i=1 \\
        \epsilon & i = 2
        \end{cases}
    \end{align*}
    \begin{align*}
        &F_0(y_i | a=0) = \begin{cases}
        2/3 - \epsilon & i=0 \\
        1/3 - \epsilon & i=1 \\
        2\epsilon & i = 2
        \end{cases}\quad
        &F_0(y_i | a=1) = \begin{cases}
        2/3 - \epsilon/2 & i=0 \\
        1/3 - \epsilon/2 & i=1 \\
        \epsilon & i = 2
        \end{cases}
    \end{align*}
    \begin{align*}
        &F_1(y_i | a=0) = \begin{cases}
        1/3 - \epsilon/2 & i=0 \\
        2/3 - \epsilon/2 & i=1 \\
        \epsilon & i = 2
        \end{cases}\quad
        &F_1(y_i | a=1) = \begin{cases}
        1/3 - \epsilon & i=0 \\
        2/3 - \epsilon & i=1 \\
        2\epsilon & i = 2
        \end{cases}
        \end{align*}
These distributions are depicted in Figure 1. One can verify that the optimal actions in states $F^*$ and $F_0$ are both $0$, and the optimal action in state $F_1$ is $1$. 

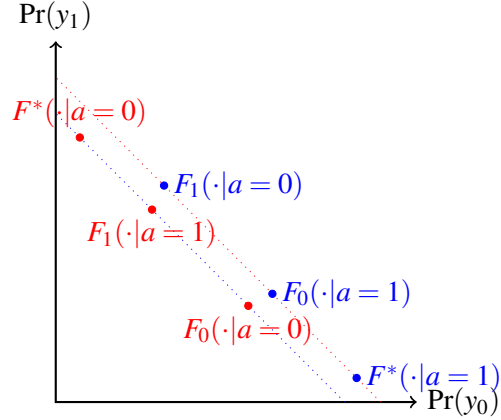
\begin{figure}
\begin{center}
\begin{tikzpicture}[scale=0.32]
\draw [<->, thick] (0,15)node[above]{$\Pr(y_1)$}--(0,0)--(15,0)node[right]{$\Pr(y_0)$};
\draw [dotted, red] (13.5,0)--(0,13.5);
\draw [dotted, blue] (12,0)--(0,12);
\draw [fill, red] (8,4) circle [radius=0.15];
\draw [fill, red] (4,8) circle [radius=0.15];
\draw [fill, red] (1,11) circle [radius=0.15];
\draw [fill, blue] (9,4.5) circle [radius=0.15];
\draw [fill, blue] (4.5,9) circle [radius=0.15];
\draw [fill, blue] (12.5,1) circle [radius=0.15];
\node [below, red] at (8,4) {$F_0(\cdot|a=0)$};
\node [below, red] at (4,8) {$F_1(\cdot|a=1)$};
\node [above, red] at (1,11) {$F^*(\cdot|a=0)$};
\node [right, blue] at (9,4.5) {$F_0(\cdot|a=1)$};
\node [right, blue] at (4.5,9) {$F_1(\cdot|a=0)$};
\node [right, blue] at (12.5,1) {$F^*(\cdot|a=1)$};
\end{tikzpicture}
\caption{Example with Action Cycles: Signal Distributions under $F^*$, $F_0$ and $F_1$}
\end{center}
\end{figure}
 
 Since the true state $F^*$ belongs to the support of $\pi_0$ and $y$ can statistically identify the state, the agent's action converges to $0$ almost surely when he has a correctly specified belief about the time lag. However, when the agent has a misspecified belief about the time lag, we sketch an argument which shows that the asymptotic frequency of the suboptimal action can be arbitrarily close to $1/2$.
 \begin{Claim}
 For every $\eta>0$,
there exists $\overline{\varepsilon} >0$ such that when $\varepsilon < \overline{\varepsilon}$, there exists a full support $\pi_0$ under which
\begin{equation*}
    \limsup_{t \rightarrow +\infty} \frac{1}{t} \mathbb{E}^{\sigma}\Big[
    \sum_{s=1}^t \mathbf{1}\{a_s=a^*\}
    \Big] < \frac{1}{2}+ \eta \quad \textrm{for every} \quad \sigma \in \Sigma^*(\pi_0). 
\end{equation*}
 \end{Claim}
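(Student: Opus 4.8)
The plan is to compare the agent's learning dynamics with those of an auxiliary two-state model in which the true state $F^*$ is deleted from the support, and then to show that the posterior probability $\pi_t(F^*)$ stays uniformly small so that the two dynamics nearly coincide. Since the agent is myopic, his action is pinned down by a cutoff: computing $\sum_y v(y)F(y_2\mid a)$ under the posterior gives $V(0)-V(1)=\epsilon\bigl(1-2\pi_t(F_1)\bigr)$, so the agent plays the optimal action $a^*=0$ exactly when $\pi_t(F_1)<1/2$ and plays $1$ otherwise. Hence behavior is governed by the log-likelihood ratio $\ell_t\equiv\log\bigl(\pi_t(F_1)/\pi_t(F_0)\bigr)$ once $\pi_t(F^*)$ is negligible, and I would track $\ell_t$ as the primary state variable.

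First I would analyze the auxiliary model with support $\{F_0,F_1\}$. Writing out the per-period increment $\log\bigl(F_1(y_t\mid a_t)/F_0(y_t\mid a_t)\bigr)$, one checks that for small $\epsilon$ it equals $+\log 2$ when $y_t=y_1$ and $-\log 2$ when $y_t=y_0$, essentially independently of the interpretation action $a_t$. Because $y_t$ is drawn from $F^*(\cdot\mid a_{t-1})$, which concentrates on $y_1$ when $a_{t-1}=0$ and on $y_0$ when $a_{t-1}=1$, the process $\ell_t$ drifts toward $0$ from whichever side it sits on: while the agent plays $0$ (so $\ell_t<0$) the lagged action eventually equals $0$ and $\ell_t$ drifts up, and symmetrically on the positive side. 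I would use the Chernoff--Hoeffding inequality to convert this drift into an expected-hitting-time bound showing that the agent switches actions within a finite expected number of periods, and then use a concentration inequality for the (unbounded) inter-switch times to conclude that the empirical frequency of switches is bounded below by some $\rho>0$ with high probability. The approximate symmetry between $F^*(\cdot\mid 0)$ and $F^*(\cdot\mid 1)$ (the asymmetry is $O(\epsilon)$) then yields that each action is played with frequency within $\eta$ of $1/2$.

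Next I would reinstate $F^*$ and show its posterior stays small. The crucial computation is that the relative log-likelihood of $F^*$ against $F_0$ (or $F_1$) increases by $O(1)$ in any period \emph{without} a switch, but decreases by an amount of order $\log(1/\epsilon)$ in every period \emph{with} a switch: a switch means the outcome $y_t$, generated by the lagged action, is evaluated under the wrong current action, and the corresponding outcome has probability $O(\epsilon)$ under $F^*$ while having probability $\Theta(1)$ under $F_0$ or $F_1$. Combining the switch-frequency lower bound $\rho$ with this per-switch loss, the drift of $\log\bigl(\pi_t(F^*)/\pi_t(F_0)\bigr)$ is negative and can be made arbitrarily negative by shrinking $\epsilon$. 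I would control the martingale fluctuations around this drift with the Azuma--Hoeffding inequality, used as a maximal inequality, to conclude that $\pi_t(F^*)$ remains below any prescribed threshold for all large $t$ with probability close to $1$. Feeding this back, the cutoff $\pi_t(F_1)<1/2$ reduces to $\ell_t<0$ up to a negligible error, the dynamics coincide with those of the auxiliary model, and the frequency of $a^*=0$ falls below $1/2+\eta$.

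The main obstacle is the two-way coupling created by history dependence: the increment of $\ell_t$ depends on the lagged action $a_{t-1}$ rather than the current one, and the switch frequency that drives $F^*$ down is itself only guaranteed while $F^*$ is already small. I expect the delicate step to be making the switch-frequency lower bound rigorous, since the inter-switch times have unbounded support and the log-likelihood losses against $F^*$ blow up as $\epsilon\to 0$; this is precisely where a concentration inequality valid for unbounded increments, rather than the bounded-difference Azuma--Hoeffding bound, is required. Closing the loop---using frequent switches to keep $F^*$ low, and low $F^*$ to guarantee the cutoff keeps producing switches---requires maintaining both estimates simultaneously on a single high-probability event.
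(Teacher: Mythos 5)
Your proposal follows essentially the same route as the paper: first analyze the auxiliary two-state problem $\{F_0,F_1\}$, using the Chernoff--Hoeffding inequality for the drift of the log-likelihood ratio toward the indifference cutoff and a concentration inequality for the unbounded inter-switch times to pin the frequency of each action near $1/2$; then reinstate $F^*$ and show its posterior stays uniformly small by balancing the $O(1)$ per-period gain against the $\Theta(\log(1/\epsilon))$ per-switch loss, controlled via an Azuma--Hoeffding maximal inequality on a single high-probability event. The quantitative mechanism you identify (attribution error at switches makes $F^*$ assign probability $O(\epsilon)$ to the realized outcome while $F_0,F_1$ assign $\Theta(1)$) and the coupling issue you flag are exactly the ones the paper's Claims C.2--C.3 are built to handle.
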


Our argument proceeds in two steps. First, we examine an auxiliary learning problem where $F^*$ occurs with zero probability.
Let 
$l_t \equiv \log \frac{\pi_t(F_0)}{\pi_t(F_1)}$. By definition, the agent has a strict incentive to take action $0$ when $l_t > 0$, and has a strict incentive to take action $1$ when $l_t <0$. 

A useful observation from Figure 1 is that when the agent takes action $0$, $F_1$ is closer to $F^*$ compared to $F_0$. As a result, the log likelihood ratio $l_t$ decreases in expectation when the agent chooses action $0$ in two consecutive periods. Let $\tau_0$ be the number of periods with which the agent's action switches back to action~$1$ when $a_{t-2}=1$ and $a_{t-1}=a_t=0$. The Chernoff-Hoeffding inequality implies that:
\begin{equation}\label{3.4}
    \Pr(\tau_0 \geq s) \leq \Pr(l_{t+s} \geq 0) \leq \exp \Big(
    -2s \Big(
    \frac{l_t}{s} +\underbrace{\mathbb{E}[l_s-l_{s-1}]}_{<0}
    \Big)^2
    \Big),
\end{equation}
from which we know that the distribution of $\tau_0$ is first order stochastically dominated by an exponential distribution and therefore, has bounded first and second moments.

Similarly,
when the agent takes action $1$, $F_0$ is closer to $F^*$ compared to $F_1$. As a result, $l_t$ increases in expectation when the agent takes action $1$ in two consecutive periods.
Let  $\tau_1$ be the number of periods with which the agent's action switches back to action $0$ when $a_{t-2}=0$ and $a_{t-1}=a_t=1$. A similar argument based on the Chernoff-Hoeffding inequality implies that $\tau_1$ has bounded first and second moments.

In order to bound the frequency of action switches from below using the above conclusions on $\tau_0$ and $\tau_1$,  we establish a concentration inequality that applies to unbounded random variables  (Lemma \ref{lem:concentration}).\footnote{The Chernoff-Hoeffding inequality only applies to bounded random variables. Corollary 5.5 in \cite{lattimore2020bandit} and \cite{JLWZ-19} establish concentration inequalities for random variables with sub-Gaussian distributions. By contrast, our result is more general since it only requires the random variable to have bounded first and second moments.}
This inequality implies that
for every $\varepsilon>0$, there exists a large enough $T \in \mathbb{N}$ such that 
\begin{equation*}
\mathbb{E}\Big[\frac{\# \{t \leq T |a_{t-1}=a_t=1\}}{\# \{t\leq T| a_{t-1}=1, a_t=0\}}\Big]=\mathbb{E}\Big[\frac{\# \{t \leq T |a_{t-1}=a_t=1\}}{\# \{t\leq T| a_{t-1}=0, a_t=1\}} \Big] \leq \mathbb{E}[\tau_1]+\epsilon,
\end{equation*}
and
\begin{equation*}
\mathbb{E}\Big[ \frac{\# \{t \leq T |a_{t-1}=a_t=0\}}{\# \{t\leq T| a_{t-1}=1, a_t=0\}} \Big]=\mathbb{E}\Big[ \frac{\# \{t \leq T |a_{t-1}=a_t=0\}}{\# \{t\leq T| a_{t-1}=0, a_t=1\}} \Big] \leq \mathbb{E}[\tau_0] + \epsilon.
\end{equation*}
When $\pi_0(F_1)$ and $\pi_0(F_0)$ are close, the expectations of $\tau_0$ and $\tau_1$ are close, and therefore, the asymptotic frequencies of both actions are close to $1/2$. The above inequalities imply that the asymptotic frequency of action switches is strictly positive and is close to $\frac{1}{1+\mathbb{E}[\tau_0]}$ and $\frac{1}{1+\mathbb{E}[\tau_1]}$.

Next, we consider the case in which the true state $F^*$ belongs to the support of $\pi_0$ but occurs with low probability. 
We show that with probability close to $1$, the 
agent's posterior belief attaches a low probability to $F^*$ in all periods.
Formally, 
for every $\eta>0$, there exists $\overline{\pi}>0$ such that when the prior probability of $F^*$ is less than  $\overline{\pi}$, the probability of the event that
\begin{equation}\label{3.5}
 \max \Big\{   \frac{\pi_t(F^*)}{\pi_t(F_0)}, \frac{\pi_t(F^*)}{\pi_t(F_1)} \Big\} < \eta \textrm{ for all } t \in \mathbb{N}
\end{equation}
is at least $1-\eta$. Intuitively, both $\log \frac{\pi_t(F^*)}{\pi_t(F_0)}$ and $\log \frac{\pi_t(F^*)}{\pi_t(F_1)}$
increase in expectation when $a_{t}=a_{t-1}$ since $F^*$ is the true state. 
However, as can be seen from Figure 1, $F^*(\cdot|a=0)$ is further away from $F^*(\cdot|a=1)$ compared to both $F_0(\cdot|a=0)$ and $F_1(\cdot|a=0)$, and 
$F^*(\cdot|a=1)$ is further away from $F^*(\cdot|a=0)$ compared to both $F_0(\cdot|a=1)$ and $F_1(\cdot|a=1)$. 
Due to the attribution errors caused by misspecified beliefs about the time lag, both log likelihood ratios \textit{decrease} in expectation when  $a_{t} \neq a_{t-1}$.

When $\pi_t(F^*)$ is low, the agent's best reply problem is similar to the one he faces in the auxiliary scenario where $F^*$ is excluded from the support of his prior, in which case he frequently switches actions. Those action switches together with the attribution error lead to mislearning. 
When $\mathbb{E}[\tau_0]$ and $\mathbb{E}[\tau_1]$ are sufficiently small, action switches are frequent, so the mislearning caused by attribution errors outweighs what the agent can learn when he takes the same action in adjacent periods.

In order to formalize this intuition, we provide a lower bound on the probability of event (\ref{3.5}) using concentration inequalities. 
The Chernoff-Hoeffding inequality does not apply
since the agent's belief affects his actions, so the log likelihood ratios between $F^*$ and $F_0$, and between $F^*$ and $F_1$  can exhibit serial correlations.  
We overcome this challenge by constructing  a martingale process with bounded increments from the log likelihood ratios and then applying the Azuma-Hoeffding inequality. We show that with probability close to $1$, the agent's belief attaches low probability to $F^*$ in all periods. This together with the frequent action switches explains why he takes the inefficient action with positive asymptotic frequency and his actions cycle over time.

\paragraph{Remark:} In our example, the asymptotic frequency with which the agent takes the inefficient action is close to $1/2$. In Appendix \ref{secB}, we construct $F^*$ as well as regular prior beliefs with respect to $F^*$ such that the frequency of the inefficient action is close to $1$. In this example, one can simply modify the outcome distributions such that both $\tau_1$ and $\tau_0$ have low expectations, but the expectation of $\tau_1$ is significantly greater than the expectation of $\tau_0$.

\section{Application: Dynamic Policy Choice Game}\label{sec4}
In order to demonstrate the applicability of our techniques to bound the frequency of action switches,
we analyze a dynamic policy choice game between
\begin{itemize}
    \item a principal who strategically makes policy proposals, learns about the state over time, and has a correctly specified belief about the time lag between the chosen policy and the observed feedback,
    \item a  Bayesian agent who can veto the principal's proposals, and learn about the outcome distribution under a misspecified belief about the time lag.\footnote{Our analysis also applies to a sequence of myopic agents, each plays the game only once.}
\end{itemize}

In every period, a society needs to make a collective choice between two policies $a_t \in A \equiv \{0,1\}$. In period $t$, the principal makes a proposal $\widetilde{a}_t \in A$. If $\widetilde{a}_t=0$, then action $0$ is automatically implemented, i.e., $a_t=0$. If $\widetilde{a}_t=1$, then the agent chooses whether to accept ($a_t=1$) or veto ($a_t=0$) the principal's proposal.

Both the principal and the agent face uncertainty about the \textit{state}, which is contained in  $\mathcal{F} \equiv \{F_0,F_1\}$. 
Their common prior belief is $\pi_0 \in \Delta (\mathcal{F})$, which we assume has full support.\footnote{Extensions to environments with more than two states are available upon request. Our results also apply when the principal and the agent agree to disagree about the state distribution. For example, when the principal's prior belief is $p_0$ that is different from $\pi_0$, one needs to replace $\pi_0$ by $p_0$ in RHS of (\ref{4.17}).}

The principal and the agent agree to disagree in terms of the time lag between decisions and feedback. 
The principal has a correctly specified belief about the time lag and knows that the distribution of $y_t$ depends only on $a_{t-k^*}$, with $k^* \geq 1$. The agent believes that the distribution of $y_t$ depends on $a_t$, i.e., $k'=0$. In period $t$, 
both players observe $h^t \equiv \{\widetilde{a}_s,a_s,y_s\}_{s=0}^{t-1}$ and update their beliefs about the state according to Bayes rule. Let $\mathcal{H}$ be the set of histories. Let $\sigma_p: \mathcal{H} \rightarrow [0,1]$ be the principal's strategy, which maps the histories to the probability that he proposes action $1$, with $\sigma_p \in \Sigma_p$.  Let $\sigma_a: \mathcal{H} \rightarrow [0,1]$ be the agent's strategy, which maps the histories to the probability with which he approves action $1$, with $\sigma_a \in \Sigma_a$.

The principal is patient and maximizes the frequency of action $1$.\footnote{We evaluate the patient principal's payoff using the long-run averages. This is a common practice in undiscounted games, see for example, \citet{Har-85} and \citet{For-92}.}
The agent is myopic and his stage-game payoff in period $t$ is $v(y_t)$.\footnote{We comment on the case in which $k' \neq 0$ and the agent's discount factor is strictly positive by the end of this section.} Since the principal has no private information about the state,
under a \textit{no signaling what you don't know} condition \citep{FT-91},
neither 
the agent's belief nor his best reply depends on the principal's proposals $\{\widetilde{a}_0,...,\widetilde{a}_t\}$ or the principal's strategy $\sigma_p$. Without loss of generality, we assume that  $\sum_{y \in Y} v(y) F_1(y|1) >  \sum_{y \in Y} v(y) F_1(y|0)$ and $\sum_{y \in Y} v(y) F_0(y|0) >  \sum_{y \in Y} v(y) F_0(y|1)$, that is, action $a$ is optimal for the agent in state $F_a$ for every $a \in A$.\footnote{If action $1$ is optimal for the agent in both states, then the principal can implement action $1$ with frequency $1$ regardless of the agent's prior belief and belief misspecification. If action $0$ is optimal for the agent in both states, then the frequency of action $1$ is zero regardless of the principal's strategy.}

 This game fits applications where a benevolent policy-maker (i.e., the principal) wants to persuade the public (i.e., the agents) to stop taking actions that have negative externalities on marginalized groups (action $0$, for example, massive gatherings during a pandemic have negative externalities on people who are immunocompromised), or to adopt reforms that have positive externalities (action $1$, for example, reducing greenhouse gas emission has positive externalities on future generations).
Action $0$ is interpreted as a \textit{status quo action}, which the policy-maker has the ability to implement by himself. By contrast, the public's cooperation is crucial for the implementation of the socially beneficial action. For example, the government can issue a mask mandate for the purpose of slowing down the spread of a virus, but this mandate won't be effective unless the majority of citizens cooperate. However, taking the socially beneficial action is against the agent's private interest in state $F_0$ and he learns about which action is optimal over time.

Our result characterizes the maximal frequency that the principal can implement the socially beneficial action by taking advantage of the agent's misspecified beliefs. We also describe the qualitative features of the principal's strategy from which he approximately attains his optimal payoff. We assume that the agent is not indifferent between action $0$ and action $1$ at any history. 
\begin{Assumption}\label{Ass1}
$\pi_0$ is such that the agent is not indifferent between action $0$ and action $1$ at every $h^t$.
\end{Assumption}
Assumption \ref{Ass1} is satisfied for generic prior belief $\pi_0$ given that $A$ and $Y$ are finite sets. This assumption implies that the agent's optimal strategy is unique, which we denote by $\sigma_a^*$. The principal's asymptotic payoff from strategy $\sigma_p$ is between
\begin{equation}\label{4.1}
\underline{V}(\sigma_p) \equiv   \liminf_{t \rightarrow +\infty} \frac{1}{t}\mathbb{E}^{(\sigma_p,\sigma_a^*)}\Big[\sum_{s=1}^{t} a_s \Big]
\quad \textrm{and} \quad
\overline{V}(\sigma_p) \equiv   \limsup_{t \rightarrow +\infty} \frac{1}{t}\mathbb{E}^{(\sigma_p,\sigma_a^*)}\Big[\sum_{s=1}^{t} a_s \Big],
\end{equation}
where $\mathbb{E}^{(\sigma_p,\sigma_a^*)}[\cdot]$ is the expectation under $(\sigma_p,\sigma_a^*)$.
The principal's payoff when he optimally chooses his strategy is bounded between
$\underline{V} \equiv   \sup_{\sigma_p \in \Sigma_p} \underline{V}(\sigma_p)$
and  $\overline{V} \equiv \sup_{\sigma_p \in \Sigma_p} \overline{V}(\sigma_p)$.

We introduce some notation to characterize the principal's optimal payoff.
Let
\begin{equation*}
    \mathcal{E}^* \equiv \Big\{
    \textrm{ there exists } t \in \mathbb{N}
    \textrm{ such that for every } s \geq t,
  \sum_{F \in \mathcal{F}} \pi_s(F) \sum_{y \in Y} v(y)F(y|1)
  >
   \sum_{F \in \mathcal{F}} \pi_s(F) \sum_{y \in Y} v(y)F(y|0)
    \Big\}
\end{equation*}
be the event that action $1$ is strictly optimal for the agent starting from some period. Let
\begin{equation}\label{4.13}
    q^* \equiv \sup_{\sigma_p \in \Sigma_p} \Pr(\mathcal{E}^*|\sigma_p,\sigma_a^*,F_0).
\end{equation}
Intuitively, $q^*$ is the maximal probability of event $\mathcal{E}^*$ when the state is $F_0$ and the agent plays according to his optimal strategy.
Let $X_{a \rightarrow a'}$ be a random variable such that
\begin{equation}\label{4.14}
    X_{a \rightarrow a'}=\log \frac{F_1(y|a')}{F_0(y|a')} \textrm{ with probability } F_0(y|a) \textrm{ for every } y \in Y.
\end{equation}
Intuitively, $X_{a \rightarrow a'}$ is the change in the log likelihood ratio between $F_0$ and $F_1$ when the true state is $F_0$, the previous period action was $a$, and the current period action is $a'$. 
Let
\begin{align}\label{lambda}
\lambda &\equiv 
\sup_{\widehat{\lambda}\geq 0} 
\frac{\widehat{\lambda}+1}{\hat{\lambda}+2} 
\end{align}
subject to
\begin{align}\label{constraint}
\widehat{\lambda} \mathbb{E}[X_{1 \rightarrow 1} ] + \mathbb{E}[X_{1 \rightarrow 0}  +X_{0 \rightarrow 1} ] > 0.
\end{align}
Moreover, we define $\lambda = 0$ if the above inequality is never satisfied. 
One can verify that $\lambda \in (\frac{1}{2},1)\cup\{0\}$. 
Intuitively, $\lambda$ is the maximal frequency of action $1$ such that the log likelihood ratio between $F_0$ and $F_1$ does not increase in expectation, or in another word, the amount of mislearning in state $F_0$ is non-negative. \begin{Theorem}\label{Theorem2}
If $\pi_0$ satisfies Assumption \ref{Ass1}, then
\begin{equation}\label{4.17}
    \overline{V}=\underline{V}= \pi_0(F_1) + \pi_0(F_0)  q^* \lambda.
\end{equation}
\end{Theorem}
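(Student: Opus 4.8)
The plan is to condition on the realized state and treat $F_1$ and $F_0$ separately, writing $\frac{1}{t}\mathbb{E}^{(\sigma_p,\sigma_a^*)}[\sum_{s\le t}a_s]$ as $\pi_0(F_1)$ times its value conditional on $F_1$ plus $\pi_0(F_0)$ times its value conditional on $F_0$. Since any frequency is at most $1$, the $F_1$ term is bounded by $\pi_0(F_1)$, and I would show this is attained: let the principal propose action $1$ in every period and track $Z_t\equiv\log\frac{\pi_t(F_1)}{\pi_t(F_0)}$. Whenever the realized action is held constant there is no attribution error, so the conditional drift of $Z_t$ equals a strictly positive Kullback--Leibler divergence because $F_1$ is the true state; the only downward pushes come from the finitely many toggles of the agent near his acceptance threshold. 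Hence $Z_t\to+\infty$ almost surely, the agent accepts in all but finitely many periods, and the conditional frequency of action $1$ equals $1$.

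The core is the state-$F_0$ analysis, for which I first prove that, conditional on $F_0$, the asymptotic frequency of action $1$ is almost surely at most $\lambda$. The key identity is that, given the outcome-generating action $a_{t-k^*}=a$ and the attributed action $a_t=a'$, the conditional expected increment of $Z_t$ is exactly $\mathbb{E}[X_{a\to a'}]$, with $\mathbb{E}[X_{0\to0}]$ and $\mathbb{E}[X_{1\to1}]$ strictly negative. If the empirical frequency of action $1$ were to exceed $\lambda$ on a positive-probability event, the definition of $\lambda$ in (\ref{lambda})--(\ref{constraint}) forces the predictable part of $Z_t$ to have strictly negative per-period drift; writing $Z_t$ as this predictable drift plus a bounded-increment martingale and applying Azuma--Hoeffding together with Borel--Cantelli yields $Z_t\to-\infty$, so action $1$ eventually stops being optimal and is vetoed forever---contradicting the supposed frequency. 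The same drift computation gives a clean dichotomy under $F_0$: either $Z_t$ stays above the acceptance threshold from some period on, which is precisely $\mathcal{E}^*$, or $Z_t$ dips below it infinitely often, in which case the strictly negative drift incurred while the agent vetoes drives $Z_t\to-\infty$ and the frequency of action $1$ vanishes. Thus the conditional frequency is at most $\lambda$ on $\mathcal{E}^*$ and tends to $0$ on its complement, so the conditional expected frequency is at most $\lambda\,\Pr(\mathcal{E}^*\mid\sigma_p,\sigma_a^*,F_0)\le\lambda q^*$ by (\ref{4.13}). Combined with the $F_1$ bound, this proves $\overline V\le\pi_0(F_1)+\pi_0(F_0)q^*\lambda$.

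For the matching lower bound I would build a two-phase strategy and show $\underline V\ge\pi_0(F_1)+\pi_0(F_0)q^*\lambda-\varepsilon$ for every $\varepsilon>0$. In the \emph{build-up} phase the principal imitates a near-maximizer of $\Pr(\mathcal{E}^*)$, proposing action $1$ so as to ride the upward fluctuations of $Z_t$ and push the belief above a high optimism level; by the definition of $q^*$ this succeeds with probability arbitrarily close to $q^*$. Upon crossing that level the principal switches to the \emph{exploitation} phase and plays the periodic pattern of (approximately) $\hat\lambda+1$ proposals of action $1$ per proposal of action $0$, with $\hat\lambda$ chosen to nearly attain the supremum in (\ref{lambda}); this realizes action $1$ at frequency $\frac{\hat\lambda+1}{\hat\lambda+2}\approx\lambda$ and, by the strict inequality (\ref{constraint}), gives $Z_t$ a strictly positive drift. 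The decisive step---where I would invoke Wald's inequality---is to show that, conditional on $F_0$ and on entering exploitation from a sufficiently high level, the probability that $Z_t$ ever falls back below the acceptance threshold is arbitrarily small; the agent therefore keeps accepting, $\mathcal{E}^*$ obtains, and the realized frequency is $\approx\lambda$ on an event of probability $\approx q^*$.

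The main obstacle throughout is history dependence: the increments of $Z_t$ are serially correlated through the agent's endogenous acceptance, so no i.i.d.\ law of large numbers applies, and the two concentration tools must do the work in opposite directions---Azuma--Hoeffding to force $Z_t\to-\infty$ whenever action $1$ is overused (upper bound), and Wald's inequality to guarantee that a positive-drift exploitation phase launched from a high level almost never returns below the threshold (lower bound). A secondary delicacy is stitching the two phases: I must verify that grafting the frequency-$\lambda$ exploitation onto the $q^*$-maximizing build-up does not lower $\Pr(\mathcal{E}^*)$, which holds because any strategy that keeps the agent eventually optimistic can itself propose action $1$ at frequency at most $\lambda$ during exploitation without violating (\ref{constraint}), so the exploitation pattern is consistent with near-maximal $\Pr(\mathcal{E}^*)$.
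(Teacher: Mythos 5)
Your state-conditional analysis mirrors the paper's: the upper bound caps the frequency of action $1$ at $\lambda$ on $\mathcal{E}^*$ via the drift computation and kills it off $\mathcal{E}^*$ via a crossing/absorption dichotomy, and the lower bound uses a build-up phase followed by periodic exploitation at frequency just below $\lambda$, with Wald's inequality guaranteeing the belief never falls back below the acceptance threshold. However, there is a genuine gap in your lower bound. You establish the $F_1$-conditional value of $1$ with one strategy (propose action $1$ every period) and the $F_0$-conditional value of $q^*\lambda$ with a different one (build-up, then exploit), but the principal does not observe the state and must commit to a single strategy. The two-phase strategy you actually propose caps the realized frequency of action $1$ at $\lambda<1$ in state $F_1$ as well, because the exploitation phase mechanically proposes action $0$ in a fraction $1-\lambda$ of periods regardless of the state; it therefore delivers roughly $\lambda\big(\pi_0(F_1)+\pi_0(F_0)q^*\big)$, strictly below the target $\pi_0(F_1)+\pi_0(F_0)q^*\lambda$. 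The paper closes this with a step you omit (Lemmas \ref{lem:D4} and \ref{LD.5}): because the principal's model is correctly specified, his own posterior identifies the state by some finite time $\tau$ with probability $1-\varepsilon$, so he can run the $F_0$-optimal strategy until $\tau$ and then branch, proposing action $1$ in every period if his posterior favors $F_1$ and continuing with the $F_0$-strategy otherwise. Some such state-learning-and-branching argument is indispensable for the lower bound.

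A secondary, fixable weakness is your $F_1$ analysis: you assert that the agent toggles only finitely often near his acceptance threshold, but that is precisely what must be proved, since each crossing generates an attribution error whose expected increment pushes the likelihood ratio back toward $F_0$ and could in principle trap the belief in a cycle at the threshold (this is the very mechanism behind Theorem \ref{Theorem1}). The paper's Lemma \ref{LD.1} sidesteps this by having the principal propose action $0$ (incurring no attribution error) until the log likelihood ratio clears the threshold by a margin $c$, and only then switch to proposing action $1$ forever, so that Wald's inequality bounds the probability of ever returning below the threshold by $\varepsilon$.
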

Theorem \ref{Theorem2} implies that the principal's asymptotic payoff exists (i.e., $\overline{V}=\underline{V}$)
and characterizes its value. At the optimum, the asymptotic frequency of action $1$ is $1$ in state $F_1$ and is $q^*\lambda$ in state $F_0$.

The proof is in Appendix \ref{secD}. 
We provide an intuitive explanation in three steps, using an example in which there are two outcomes $Y=\{y_g,y_b\}$, the outcome distributions are given by:
\begin{equation*}
F_1(y|a) \equiv \left\{ \begin{array}{ll}
r & \textrm{ if } (a,y)=(0,y_b) \textrm{ or } (1,y_g) \\
1-r & \textrm{ if }  (a,y)=(1,y_b) \textrm{ or } (0,y_g),
\end{array} \right.
\end{equation*}
\begin{equation*}
F_0(y|a) \equiv \left\{ \begin{array}{ll}
r & \textrm{ if } (a,y)=(1,y_b) \textrm{ or } (0,y_g) \\
1-r & \textrm{ if }  (a,y)=(0,y_b) \textrm{ or } (1,y_g),
\end{array} \right.
\end{equation*}
where $r \in (1/2,1)$ is a parameter, and the agent's payoff is $1$ when the outcome is $y_g$ and is $0$ otherwise.

In this example, the agent strictly prefers action $0$ if and only if $\log \frac{\pi_t(F_0)}{\pi_t (F_1)}>0$.
One can verify that $X_{1 \rightarrow 1}$ first order stochastically dominates both 
$X_{1 \rightarrow 0}$ and
$X_{0 \rightarrow 1}$. Therefore, the maximum that defines $q^*$ is attained
when the principal proposes
the opposite action to what was implemented $k^*$ periods ago, that is, $\widetilde{a}_t= 1-a_{t-k^*}$ for every $t \in \mathbb{N}$.
According to the maximization problem that defines $\lambda$, 
\begin{equation}\label{4.6}
\lambda=\max_{\widehat{\lambda}} \frac{\widehat{\lambda}+1}{\widehat{\lambda}+2} \quad \textrm{subject to} \quad
\widehat{\lambda} \mathbb{E}[X_{1 \rightarrow 1}]+ \mathbb{E}[X_{1 \rightarrow 0}+X_{0 \rightarrow 1}] \geq 0.
\end{equation}
Since
 $X_{1 \rightarrow 1}$ first order stochastically dominates both 
$X_{1 \rightarrow 0}$ and
$X_{0 \rightarrow 1}$, 
the constraint is binding and
the maximum in (\ref{4.6}) is attained when the ratio between taking the same action in consecutive periods and action switches is $\widehat{\lambda}$.

\paragraph{Step 1:} 
We consider an \textit{auxiliary game} in which the principal knows the true state but the agent is naive in the sense that he fails to extract information from the principal's proposals.
We show that when the true state belongs to $\mathcal{F}_1$,  the principal's asymptotic payoff in the auxiliary game is $1$ regardless of the agent's prior belief. Moreover, the principal can attain this payoff by proposing action $1$ in every period.


Let $l_t \equiv \log \frac{\pi_t(F_0)}{\pi_t (F_1)}$. Agent $t$ strictly prefers action $1$ when $l_t<0$ and strictly prefers action $0$ when $l_t >0$.
Since $l_t$ decreases in expectation when the agent takes the same action in two consecutive periods and $F_1$ is the true state, the Wald's inequality (Lemma \ref{lem:wald}) implies that
\begin{enumerate}
    \item for every $l_t <0$, the probability of the event  $\{l_{\tau} < 0 \textrm{ for every } \tau \geq t\}$ is bounded away from $0$,
    \item for every $l_t >0$, the probability of the event  $\{l_{\tau} > 0 \textrm{ for every } \tau \geq t\}$ is $0$.
\end{enumerate}
Since the log likelihood ratio process is absorbed  with positive probability at negative values and is absorbed  with zero probability  at positive values, 
we know that with probability $1$, there exists $T \in \mathbb{N}$ such that $l_t <0$ for all $t \geq T$. Therefore, the principal's asymptotic payoff is $1$ regardless of the agent's prior belief.

\paragraph{Step 2:} We show that principal's optimal payoff in the auxiliary game where the state is $F_0$ is $q^*\lambda$, that is,
\begin{equation}\label{4.9}
U(F_0,\pi_0)=q^* \lambda.
\end{equation}

Recall that (1) $q^*$ is the maximal probability that the agent eventually has an incentive to approve action $1$, which is attained when the principal proposes action $1$ with frequency approximately $1/2$ the agent knows event $\mathcal{E}^*$, and (2) according to (\ref{4.6}), $\lambda$ is the maximal frequency that the principal can propose action $1$ subject to the constraint that the log likelihood ratio between $F_0$ and $F_1$ does not increase in expectation.

The principal faces a tradeoff between increasing the frequency that he proposes action $1$ and increasing the probability that the agent is willing to approve action $1$. The former allows him to propose action $1$ with frequency as high as $\lambda$, but in order to maximize the probability with which the agent approves action $1$, he needs to propose it with frequency close to $1/2$.

Equation (\ref{4.9}) suggests that such a tradeoff has no impact on the principal's asymptotic payoff since in the auxiliary game where the state is $F_0$, the principal can attain an expected payoff as if (1) the agent eventually approves action $1$ for all periods with its maximal probability $q^*$, and (2) the principal can propose action $1$ with its maximal frequency subject to the mislearning constraint, which equals $\lambda$.

The definitions of $q^*$ and $\lambda$ imply that $U(F_0,\pi_0) \leq q^* \lambda$.
We show that $U(F_0,\pi_0) \geq q^* \lambda$ by constructing a family of strategies under which the principal's asymptotic payoff is arbitrarily close to $q^*\lambda$. Each strategy in this class is characterized by a cutoff log likelihood ratio $l_{\varepsilon}^*<0$ such that
the principal proposes $\widetilde{a}_t=1-a_{t-k^*}$ if the log likelihood ratio is above $l_{\varepsilon}^*$ in all previous period, and proposes action $1$ with frequency close to but less than $\lambda$ when the log likelihood ratio has fall below $l_{\varepsilon}^*$ in at least one period.

The key step is to show that under the proposed strategy (1) the probability with which the log likelihood ratio falls below $l_{\varepsilon}^*$ in at least one period is arbitrarily close to $q^*$, and (2) conditional on the log likelihood ratio falls below $l_{\varepsilon}^*$, the probability that it is strictly negative in all future periods is close to $1$.

We establish these two claims using concentration inequalities.
The intuition behind the first claim is that conditional on $l_t>0$, the probability that the log likelihood ratio is  positive in all future periods is strictly positive, so the log likelihood ratio will eventually escape any bounded interval with probability~$1$. As a result, the probability that $l_t<0$ for all $t$ large enough equals the probability that $l_t<l^*_{\varepsilon}$ for all $t$ large enough.
The intuition behind the second statement is that according to (\ref{4.9}), one can construct strategies under which the frequency of proposing action $1$ is close to $\lambda$, yet the log likelihood ratio is non-increasing in expectation. For an example of such a strategy, let $T_1, T_2 \in \mathbb{N}$ be such that $T_1$ is even and $T_2/T_1 \in (\lambda-\varepsilon,\lambda)$. The principal's strategy is divided into $T \equiv T_1+T_2$ period blocks such that he proposes action $0$ in period $1$, $3$, ... $T_1-1$ within each block, and proposes action $1$ otherwise.
The Wald's inequality implies that the probability with which the log likelihood ratio exceeds $0$ in some period after $t$ is small when $l_t < l_{\varepsilon}^*$ and $l_{\varepsilon}^*$ is small enough.

To conclude, when the principal uses this class of strategies, he can ensure that with probability close to $q^*$, the agent is willing to approve action $1$ in all future periods, and conditional on this event, he can propose policy $1$ with frequency arbitrarily close to $\lambda$. This explains why the tradeoff he faces between inducing mislearning and increasing the frequency of proposing action $1$ diminishes in the long run.

\paragraph{Step 3:} We show that the principal's payoff in our dynamic policy choice game with symmetric uncertainty equals his expected payoff in the auxiliary game. Formally, let $U(F,\pi_0)$ be the principal's payoff in the auxiliary game when the state is $F$ and the agent's prior belief is $\pi_0$. We show that:
\begin{equation}\label{4.8}
   \underline{V}=\overline{V}=  \pi_0(F_0)U(F_0,\pi_0)+\pi_0(F_1) U(F_1,\pi_0).
\end{equation}
This is implied by the following two inequalities:
\begin{equation}\label{4.10}
  \overline{V} \leq \pi_0(F_0)U(F_0,\pi_0)+\pi_0(F_1) U(F_1,\pi_0)
\end{equation}
and
\begin{equation}\label{4.11}
   \underline{V} \geq \pi_0(F_0)U(F_0,\pi_0)+\pi_0(F_1) U(F_1,\pi_0).
\end{equation}
Inequality (\ref{4.10}) is straightforward since the principal's payoff is weakly greater in the auxiliary game given that he has more information and the agent does not extract information from his proposals.

In order to establish inequality (\ref{4.11}), let $\sigma_{p}^{\varepsilon}$ be the principal's strategy such that his asymptotic payoff is more than $U(F_0,\pi_0)-\varepsilon$ in the auxiliary game where the state is $F_0$. Since $y$ is informative about the state, for every $\varepsilon>0$, there exists $T \in \mathbb{N}$ such that for each of the principal's strategy and every $a \in A$, the principal's posterior belief in period $T$ attaches probability greater than $1-\varepsilon$ to $F_a$ when $F_a$ is the true state.

Consider the principal's asymptotic payoff by using the following strategy in the original game with symmetric uncertainty:
\begin{enumerate}
    \item he plays according to $\sigma_p^{\varepsilon}$ in the first $T$ periods,
    \item if his period $T$ posterior belief attaches probability greater than $1-\varepsilon$ to state $F_1$, then he proposes action $1$ in all future periods,
    \item if his period $T$ posterior belief attaches probability less than $1-\varepsilon$ to state $F_1$, he continues to use strategy $\sigma_p^{\varepsilon}$.
\end{enumerate}
Under the above strategy, the probability with which the principal proposes $1$ in every period after $T$ is at least $1-\varepsilon$ conditional on the state being $F_1$. 
Since $U(F_1,\pi_0)=1$ for all $\pi_0$,
the principal's asymptotic payoff conditional on state $F_1$ is at least $1-\varepsilon$. Conditional on the true state being $F_0$, the probability with which he uses $\sigma_p^{\varepsilon}$ in every period is at least $1-\varepsilon$, so his payoff is at least $(1-\varepsilon)(U(F_0,\pi_0) -\varepsilon )$. Therefore, his expected asymptotic payoff converges to $\pi_0(F_0) U(F_0,\pi_0) +\pi_0(F_1)$ as $\varepsilon$ goes to $0$.

\paragraph{Remarks:} Our formula for the principal's optimal payoff is reminiscent of the repeated zero sum games of \cite{aumann1995repeated} and the Bayesian persuasion games of \cite{kamenica2011bayesian}, where an informed player's payoff in a binary-state setting is a piece-wise linear and concave function of the uninformed player's prior belief. Our formula for the principal's highest equilibrium payoff (\ref{4.17}) is not continuous since $q^*$ depends on the agent's prior belief and exhibits discontinuity in general.

When $\delta>0$, the agent may have incentives to experiment, which depend on the principal's  strategy. As a result, the agent's optimal strategy is not unique when the log likelihood ratio between $F_0$ and $F_1$ is close to the cutoff at which a myopic agent is indifferent. In general, for every $\delta \in (0,1)$ and $k'$, there exist two cutoffs $l^*$ and $l^{**}$ with $0<l^*<l^{**}<+\infty$ such that regardless of the principal's strategy, the agent has a strict incentive to approve action $1$ when $l_t < l^*$ and has a strict incentive to veto action $1$ when $l_t > l^{**}$. When $l_t \in [l^*,l^{**}]$, the agent's incentive depends on the principal's strategy. Nevertheless, when the discount factor is positive but small enough, the set of agent-optimal strategy is small and our approach provides lower and upper bounds on the principal's payoff.  The two bounds coincide as the agent's discount factor converges to $0$, in which case our approach can exactly characterize the principal's asymptotic payoff.

\section{Discussions}\label{sec5}
We discuss extensions and generalizations of our main result.

\paragraph{Uncertainty about the time lag:} In our baseline model, the agent faces uncertainty about the outcome distribution but has a degenerate prior about the time lag. In general, the agent may also learn about the time lag under a misspecified model.  

Our main result extends when the agent faces uncertainty both about the outcome distribution and the time lag.\footnote{If the agent only faces uncertainty about the time lag but knows the outcome distribution, then he takes his optimal action in every period and misspecified belief about the time lag is irrelevant for his behavior and payoff.} Formally, there is a finite set of states $\mathcal{F} \subset \mathcal{Y} $ and a finite set of possible time lags $K \subset \mathbb{N}$. The agent has a full support prior belief $\pi_0 \in \Delta (\mathcal{F} \times K)$.
In order to focus on the effects of misspecified belief about the time lag, we assume that $F^* \in \mathcal{F}$ and $k^* \notin K$. 

Similar to the baseline model, the agent chooses $a^* \equiv \arg\max_{a \in A} \sum_{y \in Y} v(y)F^*(y|a)$ in every period when he learns the true state regardless of his belief about the time lag. As a result, the agent's action cannot converge to actions other than $a^*$ and $a^*$ occurs with positive asymptotic frequency. When $\delta \in (0,1)$ or $(\delta,k')=(0,0)$, there exists $F^* \in \mathcal{Y}$ and a prior belief $\pi_0$ that is regular with respect to $F^*$ under which the agent takes his optimal action with frequency arbitrarily close to $0$.

\paragraph{General forms of belief misspecifications:} In our baseline model, the distribution of $y_t$ depends only on one of the agent's actions. In practice, the outcome distribution can be affected by multiple actions.

We extend our results when the distribution of $y_t$ depends on a convex combination of the agent's current-period action and his actions in the last $k \in \mathbb{N}$ periods where $k$ is an exogenous parameter. In particular, when the state is $F \equiv \{F(\cdot|a)\}_{a \in A}$, $y_t$ is distributed according to $\sum_{a \in A} \alpha_t(a) F(\cdot|a)$ where
$\alpha_t(a) \equiv \sum_{j=0}^k \beta_j \mathbf{1}\{a_{t-j}=a\}$, $\boldsymbol{\beta} \equiv (\beta_0,...,\beta_k) \in \mathbb{R}_+^{k+1}$, and $\sum_{j=0}^k \beta_j =1$. The agent has a wrong belief about the convex weights of different actions, and believes that when the state is $F$, $y_t$ is distributed according to $\sum_{a \in A} \widehat{\alpha}_t(a) F(\cdot|a)$, where
$\widehat{\alpha}_t (a) \equiv \sum_{j=0}^k \widehat{\beta}_j \mathbf{1}\{ a_{t-j}=a\}$ with $\boldsymbol{\widehat{\beta}} \equiv (\widehat{\beta}_0,...,\widehat{\beta}_k) \in \mathbb{R}_+^{k+1}$, $\sum_{j=0}^k \widehat{\beta}_j =1$, and $\boldsymbol{\widehat{\beta}}  \neq \boldsymbol{\beta}$. This general formulation captures for example, when the agent overestimates or underestimates the effects of his current-period action on the current-period outcome, i.e., when $\widehat{\beta}_0 \neq \beta_0$.
Theorem \ref{Theorem1} extends under a stronger identification condition that for every $F,F' \in \mathcal{F}$ and $\alpha \in \Delta (A)$, we have $F(\cdot|\alpha) \neq F'(\cdot|\alpha)$.

\newpage
\appendix
\section{Probability Tools}\label{secA}
We state three results in probability theory, which will be used in our subsequent proofs. The first result is the Wald nequality, which bounds the probability of the union of tail events from above.
\begin{Lemma}[\citealp{Wald-44}]\label{lem:wald}
Let $\{Z_t\}_{t \in \mathbb{N}}$ be a sequence of i.i.d. random variables with finite support, strictly negative mean, and takes a positive value with positive probability. Let
 $r^* > 0$ be the unique real number that satisfies 
$\expect[z\sim Z_1]{\exp(r^*z)} = 1$.
We have
\begin{align*}
\Pr\left[\bigcup_{n=1}^\infty \left\{\sum_{t=1}^{n} Z_t \geq c\right\}\right]
\leq \exp(-r^* \cdot c) \textrm{ for every } c>0.
\end{align*}
\end{Lemma}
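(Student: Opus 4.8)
The plan is to run the classical exponential-martingale argument for Wald's identity. Write $S_n \equiv \sum_{t=1}^{n} Z_t$ with $S_0 \equiv 0$, and let $\phi(r) \equiv \mathbb{E}[\exp(r Z_1)]$ be the moment generating function of $Z_1$, which is finite and real-analytic on all of $\mathbb{R}$ because $Z_1$ has finite support.

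The first step is to pin down $r^*$ from the three hypotheses. The function $\phi$ is convex with $\phi(0)=1$ and $\phi'(0)=\mathbb{E}[Z_1]<0$, so $\phi$ dips strictly below $1$ for small $r>0$; and since $Z_1$ takes a strictly positive value with positive probability, $\phi(r)\to+\infty$ as $r\to+\infty$. Convexity then gives a unique $r^*>0$ with $\phi(r^*)=1$, which is exactly the defining equation in the statement. The second step is to form the Wald martingale $M_n \equiv \exp(r^* S_n)$, with $M_0=1$. Using independence of the $Z_t$ and the identity $\phi(r^*)=1$, one checks $\mathbb{E}[M_n \mid \mathcal{F}_{n-1}] = M_{n-1}\,\phi(r^*) = M_{n-1}$, so $(M_n)_{n\ge 0}$ is a nonnegative martingale with $\mathbb{E}[M_n]=1$ for every $n$.

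The third step converts the tail event and applies a maximal inequality. Because $r^*>0$, for each $n$ we have $\{S_n \ge c\} = \{M_n \ge e^{r^* c}\}$, hence $\bigcup_{n\ge 1}\{S_n\ge c\} = \{\sup_{n\ge 1} M_n \ge e^{r^* c}\}$. Doob's maximal inequality for the nonnegative martingale $(M_n)$ gives $\Pr[\max_{1\le k\le n} M_k \ge e^{r^* c}] \le \mathbb{E}[M_n]/e^{r^* c} = e^{-r^* c}$, and letting $n\to\infty$ by monotone convergence of the increasing events yields the bound $e^{-r^* c}$. An equivalent route that avoids quoting Doob is to use the first-passage time $T \equiv \inf\{n\ge 1 : S_n \ge c\}$: optional stopping applied to the bounded time $T\wedge n$ gives $1 = \mathbb{E}[M_{T\wedge n}] \ge \mathbb{E}[M_T \mathbf{1}\{T\le n\}] \ge e^{r^* c}\,\Pr[T\le n]$, and sending $n\to\infty$ gives $\Pr[T<\infty]\le e^{-r^* c}$, which is the desired probability of the infinite union.

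I expect the only genuinely delicate points to be (i) extracting existence and uniqueness of $r^*$ cleanly from the combination of strictly negative mean (forcing $\phi'(0)<0$) and a positive value with positive probability (forcing $\phi\to\infty$), together with convexity; and (ii) the passage from the finite-horizon maximal bound to the bound on the full infinite union, which is handled by monotone convergence. The martingale property and the optional-stopping/maximal-inequality step are routine once $r^*$ is in place, so the heart of the argument is really the one-line convexity analysis of $\phi$ and the change of variables $\{S_n\ge c\}=\{M_n\ge e^{r^*c}\}$.
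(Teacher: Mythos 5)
Your proof is correct. Note that the paper does not actually prove this lemma --- it is stated as a known result with a citation to Wald (1944) --- so there is no in-paper argument to compare against; what you have written is the standard exponential-martingale proof of that classical inequality. All the steps check out: existence and uniqueness of $r^*$ follow from convexity of $\phi$ together with $\phi(0)=1$, $\phi'(0)=\mathbb{E}[Z_1]<0$, and $\phi(r)\to\infty$ (the last using the positive-value-with-positive-probability hypothesis); $M_n=\exp(r^*S_n)$ is a nonnegative martingale with mean $1$; and either Doob's maximal inequality or optional stopping at $T\wedge n$ followed by $n\to\infty$ delivers the bound $e^{-r^*c}$ on $\Pr[T<\infty]$, which is exactly the probability of the stated union. (The only cosmetic quibble: $\bigcup_n\{M_n\ge e^{r^*c}\}$ is contained in, rather than equal to, $\{\sup_n M_n\ge e^{r^*c}\}$, but the inclusion is all the upper bound needs, and your optional-stopping route avoids the issue entirely.)
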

The second result is the Azuma-Hoeffding inequality, that applies to martingales with bounded increments. 
\begin{Lemma}[Azuma-Hoeffding inequality]\label{lem:azuma}
Let $\left\{Z_{0},Z_{1},\cdots \right\}$ be a martingale such that $|Z_{k}-Z_{k-1}|\leq c_{k}$.
Then for every $N \in \mathbb{N}$ and $\epsilon_1 > 0$, we have
$$\Pr[Z_{N}-Z_{0}\geq \epsilon_1 ]\leq \exp \left(-\frac{\epsilon_1 ^{2}}{2\sum _{k=1}^{N}c_{k}^{2}}\right).$$
\end{Lemma}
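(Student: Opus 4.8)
The statement is a standard martingale concentration bound, and the plan is to prove it by the exponential-moment (Chernoff) method combined with a sequential conditioning argument. First I would introduce the martingale's natural filtration $(\mathcal{F}_k)$ and write $D_k \equiv Z_k - Z_{k-1}$ for the martingale differences, so that the martingale property gives $\mathbb{E}[D_k \mid \mathcal{F}_{k-1}] = 0$ while the hypothesis gives $|D_k| \leq c_k$, hence $D_k \in [-c_k, c_k]$ almost surely. For any $s > 0$, Markov's inequality applied to the increasing map $x \mapsto e^{sx}$ yields
$$\Pr[Z_N - Z_0 \geq \epsilon_1] \leq e^{-s \epsilon_1} \, \mathbb{E}\Big[ \exp\Big( s \sum_{k=1}^N D_k \Big) \Big],$$
so that the task reduces to controlling the moment generating function of the sum $Z_N - Z_0 = \sum_{k=1}^N D_k$.

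The key step is to bound the conditional moment generating function of each increment. I would invoke Hoeffding's lemma: if $X$ satisfies $\mathbb{E}[X] = 0$ and $X \in [a,b]$ almost surely, then $\mathbb{E}[e^{sX}] \leq \exp(s^2 (b-a)^2 / 8)$. Applied conditionally on $\mathcal{F}_{k-1}$ with $a = -c_k$ and $b = c_k$, this gives $\mathbb{E}[e^{sD_k} \mid \mathcal{F}_{k-1}] \leq \exp(s^2 c_k^2 / 2)$. Peeling off the last increment via the tower property and iterating,
$$\mathbb{E}\Big[\exp\Big(s\sum_{k=1}^N D_k\Big)\Big] = \mathbb{E}\Big[\exp\Big(s\sum_{k=1}^{N-1} D_k\Big) \, \mathbb{E}[e^{sD_N} \mid \mathcal{F}_{N-1}]\Big] \leq \exp\Big(\tfrac{s^2 c_N^2}{2}\Big) \, \mathbb{E}\Big[\exp\Big(s\sum_{k=1}^{N-1} D_k\Big)\Big],$$
and a finite induction on the number of increments collapses the resulting product into $\exp\big(\tfrac{s^2}{2}\sum_{k=1}^N c_k^2\big)$.

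Combining the two displays gives $\Pr[Z_N - Z_0 \geq \epsilon_1] \leq \exp\big(-s\epsilon_1 + \tfrac{s^2}{2}\sum_{k=1}^N c_k^2\big)$ for every $s > 0$. The final step is to optimize the free parameter: the exponent is a convex quadratic in $s$ minimized at $s^\star = \epsilon_1 / \sum_{k=1}^N c_k^2$, and substituting $s^\star$ yields exactly $-\epsilon_1^2 / \big(2\sum_{k=1}^N c_k^2\big)$, which is the claimed bound. I expect the main obstacle to be the proof of Hoeffding's lemma itself, since the rest is routine bookkeeping. Establishing it requires a convexity argument: bound $e^{sx}$ on $[a,b]$ by the chord through its endpoints, take expectations using $\mathbb{E}[X] = 0$ to reduce to a one-variable auxiliary function $\varphi(s)$, and then verify $\varphi(0) = \varphi'(0) = 0$ together with $\varphi''(s) \leq (b-a)^2/4$ (a variance-of-a-Bernoulli estimate), so that a second-order Taylor expansion delivers $\varphi(s) \leq s^2 (b-a)^2 / 8$.
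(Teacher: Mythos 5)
Your proof is correct: this is the standard Chernoff--Hoeffding argument for Azuma's inequality (exponential Markov bound, conditional Hoeffding lemma giving $\mathbb{E}[e^{sD_k}\mid\mathcal{F}_{k-1}]\leq e^{s^2c_k^2/2}$, peeling by the tower property, and optimizing $s=\epsilon_1/\sum_k c_k^2$), and all the steps check out. The paper itself states Lemma \ref{lem:azuma} as a known result without proof, so there is nothing to compare against, but your method is the same exponential-moment technique the paper uses to prove the neighboring Lemma \ref{lem:concentration}.
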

The third result extends the Chernoff-Heoffding inequality to random variables with unbounded support and finite first and second moments.

\begin{Lemma}\label{lem:concentration}
For any $\lambda > 1$ and any sequence of i.i.d. random variables $X_1, X_2, \dots, X_n$ with finite mean $\mu>0$ and finite variance,
we have 
\begin{align*}
\Pr\left[\sum_{i=1}^n X_i \geq \lambda \mu n\right] \leq \exp(-cn)
\end{align*}
where $c \equiv \max_{t>0}\{\lambda\mu t - \ln\bE\left[e^{t X_1}\right]\}$. 
For any $\lambda \in (0, 1)$, we have 
\begin{align*}
\Pr\left[\sum_{i=1}^n X_i \leq \lambda \mu n\right] \leq \exp(-c'n)
\end{align*}
where $c' \equiv \max_{t>0}\{-\lambda\mu t - \ln\bE\left[e^{-t X_1}\right]\}$
\end{Lemma}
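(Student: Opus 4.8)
The plan is to use the exponential Markov inequality (the Cram\'er--Chernoff method), which compares the tail probability against the moment generating function and never invokes boundedness of the support. For the upper tail, fix any $t>0$. Since $x \mapsto e^{tx}$ is increasing and positive, Markov's inequality gives
\begin{equation*}
\Pr\left[\sum_{i=1}^n X_i \geq \lambda\mu n\right] = \Pr\left[e^{t\sum_{i=1}^n X_i} \geq e^{t\lambda\mu n}\right] \leq e^{-t\lambda\mu n}\,\mathbb{E}\left[e^{t\sum_{i=1}^n X_i}\right].
\end{equation*}
Because the $X_i$ are i.i.d., the expectation factorizes as $\mathbb{E}[e^{t\sum_i X_i}] = (\mathbb{E}[e^{tX_1}])^n$, an identity that remains valid (with both sides equal to $+\infty$) when the moment generating function is infinite, since each factor $\mathbb{E}[e^{tX_i}]$ is the expectation of a strictly positive random variable. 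Hence for every $t>0$,
\begin{equation*}
\Pr\left[\sum_{i=1}^n X_i \geq \lambda\mu n\right] \leq \exp\left(-n\left(\lambda\mu t - \ln\mathbb{E}\left[e^{tX_1}\right]\right)\right).
\end{equation*}

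Next I would optimize the free parameter. Taking the infimum of the right-hand side over $t>0$ converts the bound into $\exp(-cn)$ with $c = \sup_{t>0}\{\lambda\mu t - \ln\mathbb{E}[e^{tX_1}]\}$; this is simply the observation that an inequality holding for every $t$ holds for the best $t$, so it is valid whether or not the supremum is attained. Concavity of $t \mapsto \lambda\mu t - \ln\mathbb{E}[e^{tX_1}]$ together with finiteness of the moment generating function on a neighborhood of the origin guarantees that the supremum is in fact a maximum, matching the statement. The lower tail is symmetric: applying the identical argument to the i.i.d.\ variables $-X_1,\dots,-X_n$ (equivalently, using $e^{-t\sum_i X_i}$ with $t>0$ and the transform $\mathbb{E}[e^{-tX_1}]$) yields $\Pr[\sum_i X_i \leq \lambda\mu n] \leq \exp(-c'n)$ with $c' = \sup_{t>0}\{-\lambda\mu t - \ln\mathbb{E}[e^{-tX_1}]\}$.

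The part that deserves care, and where the second-moment hypothesis does its work, is the unbounded support: a priori $\mathbb{E}[e^{tX_1}]$ may be infinite for all $t>0$, in which case $c=-\infty$ and the bound is vacuously true but useless. To see that the inequality is informative under the stated moment assumptions---as it is in the application, where $\tau_0$ and $\tau_1$ are stochastically dominated by exponential variables and hence have a moment generating function finite near the origin---I would expand the cumulant generating function to second order, $\ln\mathbb{E}[e^{tX_1}] = \mu t + \tfrac12\sigma^2 t^2 + o(t^2)$ as $t \downarrow 0$, where $\sigma^2$ is the finite variance. Then $\lambda\mu t - \ln\mathbb{E}[e^{tX_1}] = (\lambda-1)\mu t - \tfrac12\sigma^2 t^2 + o(t^2)$, whose leading term is strictly positive for small $t>0$ precisely because $\lambda>1$ and $\mu>0$; the analogous computation gives $(1-\lambda)\mu t - \tfrac12\sigma^2 t^2 + o(t^2)>0$ for the lower tail when $\lambda \in (0,1)$. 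This shows $c,c'>0$, so the bounds genuinely decay exponentially in $n$, and I expect this positivity verification---rather than the Chernoff inequality itself---to be the only real obstacle.
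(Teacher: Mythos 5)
Your proposal is correct and follows essentially the same route as the paper: the exponential Markov inequality applied to $e^{t\sum_i X_i}$, factorization by independence, and optimization over $t>0$, with the lower tail handled by negating the variables. Your closing verification that $c,c'>0$ via the second-order expansion of the cumulant generating function is exactly the content of the paper's remark following the lemma, and your caveat that finite variance alone does not rule out $\mathbb{E}[e^{tX_1}]=\infty$ for all $t>0$ (making the bound vacuous) is a fair observation that the paper glosses over but that is harmless in the application, where the relevant variables are dominated by exponentials.
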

\begin{proof}
The proof is similar to that of the Chernoff-Hoeffding inequality. If $\lambda > 1$, then
\[
\Pr\left[\sum_{i=1}^n X_i  \geq \lambda n \mu\right]
= \Pr\left[e^{t\sum_{i=1}^n X_i } \geq e^{t \lambda n \mu}\right]
\leq \frac{1}{e^{t \lambda n \mu}} \cdot \prod\limits_{i = 1}^n \bE\left[e^{t X_i}\right] \textrm{ for every } t \in \mathbb{N},
\]
where the last inequality holds by the Markov inequality.
Set $t \in \mathbb{N}$ in order to maximize $\lambda\mu t - \ln\bE\left[e^{t X_1}\right]$, we have 
\begin{equation*}
\Pr\left[\sum_{i=1}^n X_i  \geq \lambda n \mu\right]
\leq \exp\left\{-cn\right\}.
\end{equation*}
If $\lambda < 1$, then
\[
\Pr\left[\sum_{i=1}^n X_i \leq \lambda n \mu\right]
= \Pr\left[e^{-t\sum_{i=1}^n X_i } \geq e^{-t \lambda n \mu}\right]
\leq \frac{1}{e^{-t \lambda n \mu}} \cdot \prod\limits_{i = 1}^n \bE\left[e^{-t X_i}\right] \textrm{ for every } t \in \mathbb{N},
\]
Set $t \in \mathbb{N}$ in order to maximize $- \lambda\mu t - \ln\bE\left[e^{-t X_1}\right]$, we have 
\begin{equation*}
\Pr\left[\sum_{i=1}^n X_i  \leq \lambda n \mu\right]
\leq \exp\left\{-c'n\right\}.
\end{equation*}
\end{proof}

\paragraph{Remark: } In Lemma \ref{lem:concentration}, 
let $c(t) = \lambda\mu t - \ln\bE\left[e^{t X_1}\right]$. 
One can verify that $c(0) = 0$ and $c'(0) = (\lambda-1)\mu > 0$ for $\lambda > 1$ and $\mu>0$. 
Moreover, $c''(0) = {\rm Var}[X_1]$ is finite. 
Therefore, $c = \max_{t>0}\{\lambda\mu t - \ln\bE\left[e^{t X_1}\right]\}$ is strictly positive for $\lambda > 1$ and $\mu>0$. 
Similarly, we can also show that $c'>0$ for $\lambda < 1$ and $\mu>0$.

\section{Proof of Lemma 3.1}\label{secB}
The conclusion of Lemma \ref{L2.1} is implied by the following two claims. 
\begin{Claim}\label{clm:b1}
For $\epsilon \in (0,1)$ and $\pi_0$ that has finite support, there exists $\eta > 0$ and a sufficiently large $T$ such that the $\pi_{T}(F^*) > 1-\epsilon$ with probability at least $\eta$.
\end{Claim}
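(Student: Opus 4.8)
The plan is to fix any optimal strategy $\sigma \in \Sigma^*(\pi_0)$ and track, for each competitor state $F \in \supp(\pi_0) \setminus \{F^*\}$, the log-likelihood ratio $L_t(F) \equiv \log\frac{\pi_t(F^*)}{\pi_t(F)}$, exhibiting a positive-probability event on which $\min_{F \neq F^*} L_T(F)$ is large, which forces $\pi_T(F^*) > 1-\epsilon$. The engine is the observation that attribution errors vanish whenever the action is locally constant: if $a_{s-k^*} = a_{s-k'} = a$, then $y_s$ is truly drawn from $F^*(\cdot|a)$ while the agent scores state $F$ by $F(\cdot|a)$, so the conditional expected increment of $L_s(F)$ equals $\KL(F^*(\cdot|a)\,\|\,F(\cdot|a))$. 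By Part 1 of regularity each $F(\cdot|a)$ has full support, so this divergence is finite, and by Part 2 we have $F^*(\cdot|a) \neq F(\cdot|a)$, so it is strictly positive. Crucially, along a constant-action stretch every competitor ratio $L_t(F)$ drifts upward \emph{simultaneously}, so there is no cross-state tension: one long stretch of a fixed action pushes the whole posterior toward $F^*$.

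First I would establish the concentration step. Let $\KL_{\min} \equiv \min_{a \in A,\, F \neq F^*} \KL(F^*(\cdot|a)\,\|\,F(\cdot|a)) > 0$, and let $C < \infty$ bound the absolute value of any one-period log-likelihood increment, finite since $Y$ and $\supp(\pi_0)$ are finite and all $F(\cdot|a)$ have full support. Along a stretch in which the agent plays a fixed action $a$, the increments of $L_t(F)$ are i.i.d.\ with mean $\KL(F^*(\cdot|a)\,\|\,F(\cdot|a)) \geq \KL_{\min}$, so the Chernoff bound (or simply the law of large numbers) gives $\Pr[\,L_{t+m}(F) - L_t(F) \geq \tfrac{m}{2}\KL_{\min}\ \text{for all } F\,] \to 1$ as $m \to \infty$. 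Thus once the belief is concentrated enough on $F^*$ that $a^*$ is the unique optimal action — which holds on a neighborhood $N \equiv \{\pi : \pi(F^*) > 1-\epsilon_0\}$ by continuity of the agent's objective and uniqueness of $a^*$ under $F^*$, noting that $a^*$ is itself fully identifying (Part 2 of regularity), so there is no incentive to experiment away from it even when $\delta>0$ — the agent plays $a^*$, and on a positive-probability event the increments keep the belief inside $N$ and drive $\pi_t(F^*)$ past $1-\epsilon$ at some finite $T$.

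The main obstacle is the remaining reaching step: showing that, starting from an arbitrary regular $\pi_0$ with $\pi_0(F^*) > 0$, the belief enters $N$ with positive probability, since here the actions are dictated by $\sigma$ rather than chosen by us, and attribution errors reappear at every action switch. I would handle this by chaining constant-action blocks: whatever action $\sigma$ prescribes, a block of favorable outcomes raises every $L_t(F)$ (the drift is positive for \emph{every} action), hence raises $\pi_t(F^*)$, and as $\pi_t(F^*)$ climbs the agent eventually enters the $a^*$-region $N$. Because each competitor ratio is monotonically increasing along the favorable blocks, the belief crosses only finitely many optimality boundaries before reaching $N$; each block succeeds with probability bounded away from zero, and the finitely many switch periods contribute increments bounded by $C$, which cannot overturn the accumulated drift once the blocks are long enough. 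Concatenating these finitely many positive-probability blocks yields a single positive-probability event reaching $N$, and composing with the concentration step delivers $\Pr[\pi_T(F^*) > 1-\epsilon] \geq \eta$ for some $\eta > 0$ and large $T$. The delicate point to nail down is the uniform bound on the number of action switches and the verification that the per-switch losses are dominated, which is exactly where the finiteness of $A$, $Y$, and $\supp(\pi_0)$ together with the bounded increments do the work.
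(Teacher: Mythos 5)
Your concentration step along a constant-action stretch is sound, but the reaching step contains a genuine gap, and it is the crux of the claim. You need the agent to hold his action fixed (over windows of length $|k^*-k'|$) often enough that the positive drift $\KL(F^*(\cdot|a)\,\|\,F(\cdot|a))$ accumulates, but the action path is dictated by the optimal strategy $\sigma$, and nothing prevents $\sigma$ from switching actions in (nearly) every period while $\pi_t(F^*)$ is still small --- indeed, Theorem \ref{Theorem1} is precisely the statement that such perpetual cycling can be optimal. At every switch period the attribution error makes the conditional drift of $L_s(F)$ equal to $\sum_y F^*(y|a_{s-k^*})\log\frac{F^*(y|a_{s-k'})}{F(y|a_{s-k'})}$, which can be strictly negative, so there need not be any accumulated drift for the bounded per-switch losses to be dominated by. Your assertion that the belief ``crosses only finitely many optimality boundaries before reaching $N$'' also conflates the ratios $\pi_t(F^*)/\pi_t(F)$, which your favorable blocks control, with the ratios $\pi_t(F)/\pi_t(F')$ among competitors, which are what determine the prescribed action while $\pi_t(F^*)$ is small and which your event does not control. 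Finiteness of $A$, $Y$, and $\supp(\pi_0)$ does not bound the number of switches.

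The paper closes exactly this gap with a change-of-measure argument that your proposal is missing. It writes the centered log-likelihood ratio as a martingale with bounded increments and applies Azuma--Hoeffding, but it evaluates the drift in an auxiliary scenario in which the data are generated with lag equal to $k'$, i.e., consistently with the agent's model. In that scenario no attribution error ever arises, so for every competitor $F$ the conditional drift of $\log\frac{\pi_t(F)}{\pi_t(F^*)}$ is strictly negative at every history \emph{regardless of how often the action switches}, and the posterior concentrates on $F^*$ by some finite $T$ with positive probability. Since $\{\pi_T(F^*)>1-\epsilon\}$ is determined by the finite history $h^T$ and every outcome has positive probability under every $F(\cdot|a)$ (Part 1 of regularity), any history realizing this event also has positive probability under the true lag-$k^*$ dynamics, which delivers the claim. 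To salvage your direct construction you would have to exhibit, for an arbitrary optimal $\sigma$, a specific positive-probability outcome sequence along which the induced action path settles down before $N$ is reached; the auxiliary-measure argument does this for you without requiring any control on the action path.
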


\begin{Claim}\label{clm:b2}
Suppose $\delta \in (0,1)$ or $(\delta,k')=(0,0)$.
For any finite set of states $\cF\subseteq \Delta (\mathcal{Y})$, 
there exist $\epsilon, \eta \in (0,1)$
such that if $\pi_0(F^*) > 1-\epsilon$ and $\pi_0$ is supported on $\cF$, 
then $a^*$ is chosen for all periods with probability at least $\eta$. 
\end{Claim}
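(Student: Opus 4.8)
The plan is to identify a ``good region'' of beliefs on which $a^*$ is the unique optimal action, and then show that, started deep inside this region, the posterior never leaves it with probability bounded away from zero. First I would pin down the good region. Under the point-mass belief $\pi=\delta_{F^*}$ the agent is certain of the state, so future outcomes are uninformative and his current action $a_t$ only influences the single future outcome it is perceived to determine; hence he optimally plays the unique stage-maximizer $a^*$ in every period, irrespective of the payoff-relevant action ``pipeline'' $(a_{t-k'+1},\dots,a_{t-1})$. Because the state space, the action set and the outcome set are finite and the agent discounts the future (or is myopic), the relevant action-value functions are continuous in $\pi$, uniformly over the finitely many pipeline configurations. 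Since $a^*$ is the \emph{strict} maximizer at $\delta_{F^*}$ (the singleton assumption on $\arg\max$), there is $\epsilon'>0$ such that $\pi_t(F^*)>1-\epsilon'$ forces $a_t=a^*$ at every history.

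Next I would track, for each $F\in\cF\setminus\{F^*\}$, the log-likelihood ratio $l_t^F\equiv\log\frac{\pi_t(F^*)}{\pi_t(F)}$, and translate ``good region'' into ``$l_t^F>\ell^*$ for all $F$'' for a constant $\ell^*=\log(|\cF|/\epsilon')$, so that small ratios $e^{-l_t^F}$ keep $\pi_t(F^*)$ above $1-\epsilon'$. The key dynamic observation is that \emph{as long as the agent plays $a^*$}, every relevant past action equals $a^*$ once we are past the finitely many ``startup'' periods $t\le K\equiv\max(k^*,k')$, so the increment is $\Delta l_t^F=\log\frac{F^*(y_t|a^*)}{F(y_t|a^*)}$ with $y_t\sim F^*(\cdot|a^*)$, i.e.\ i.i.d.\ with mean $\KL\big(F^*(\cdot|a^*)\,\|\,F(\cdot|a^*)\big)\ge 0$ and bounded support (boundedness from full support of the states). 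This mean is either strictly positive, or, when $F(\cdot|a^*)=F^*(\cdot|a^*)$, the increment is identically zero; in either case $l_t^F$ has no downward drift. During the at most $K$ startup periods the increments are bounded by some $M$, so $l_t^F$ can fall by at most $KM$ before the clean regime begins.

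To handle the circularity that the action determines the increments which in turn determine the action, I would introduce the stopping time $\theta\equiv\inf\{t:\pi_t(F^*)\le 1-\epsilon'\}$ and couple the true process, up to $\theta$, with the hypothetical process in which $a^*$ is played in \emph{every} period; the two coincide on $\{t\le\theta\}$ because inside the good region the agent is forced to play $a^*$. Consequently $\{\theta<\infty\}\subseteq\bigcup_{F}\{\,\hat l_t^F\le\ell^*\text{ for some }t\,\}$, where $\hat l^F$ is the hypothetical ratio with genuinely i.i.d.\ post-startup increments. For each $F$ with strictly positive drift I would apply Wald's inequality (Lemma \ref{lem:wald}) to $Z_s\equiv-\Delta\hat l_s^F$ (negative mean, finite support, positive with positive probability) to get $\Pr(\hat l^F\text{ ever drops below }\ell^*)\le\exp\!\big(-r_F^*(l_0^F-KM-\ell^*)\big)$; the zero-drift states never move and so never trigger an exit. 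A union bound over the finite set $\cF$ gives $\Pr(\theta<\infty)\le\sum_F\exp\!\big(-r_F^*(l_0^F-KM-\ell^*)\big)$, and since $l_0^F\ge\log\frac{1-\epsilon}{\epsilon}\to\infty$ as $\epsilon\to 0$, choosing $\epsilon$ small makes this at most $\tfrac12$, giving $\Pr(\theta=\infty)\ge\tfrac12=:\eta$; on $\{\theta=\infty\}$ the agent plays $a^*$ forever.

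The hardest part is the first step — pinning down a single threshold $\epsilon'$ that forces $a^*$ uniformly across all histories in a genuinely history-dependent problem — since the agent's optimal action a priori depends not only on $\pi_t$ but on the committed action pipeline; the resolution is that certainty collapses all informational incentives and renders the problem separable, so strictness of the static maximizer plus continuity of the value function suffice. A secondary subtlety is making the coupling rigorous so that Wald is applied to a bona fide i.i.d.\ sequence rather than to the serially dependent on-path increments; defining $\hat l^F$ on the always-$a^*$ path and invoking agreement up to $\theta$ circumvents this.
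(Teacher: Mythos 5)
Your proposal is correct and follows essentially the same route as the paper: fix a belief threshold $\pi_\epsilon$ below which $a^*$ is forced, note that on the always-$a^*$ path the post-startup log-likelihood increments are i.i.d.\ with favorable drift, apply Wald's inequality state by state, union bound over the finite support, absorb the $\max\{k^*,k'\}$ startup periods into a bounded correction, and take $\epsilon$ small so the exit probability is below one. You are somewhat more explicit than the paper about two points it leaves implicit --- the continuity/strictness argument pinning down the threshold $\epsilon'$ uniformly over histories, and the stopping-time coupling that justifies applying Wald to a genuinely i.i.d.\ sequence --- but these are refinements of the same argument, not a different one.
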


Combining those two claims, we have that 
for any prior $\pi$ with finite support, 
there exists $\eta > 0$ and $T>0$
such that with probability $\eta$, 
action $a^*$ is chosen for all periods after $T$, 
which implies that the limiting frequency of $a^*$ is at least $\eta > 0$.

\begin{proof}[Proof Claim \ref{clm:b1}]
Let $\cF$ be the support of $\pi_0$.
For every pair of  states $F_0, F_1 \in \cF$, let
\begin{equation*}
    X_{a \rightarrow a'}(F_0,F_1) \equiv \log \frac{F_0(y|a')}
    {F_1(y|a')} 
    \textrm{ with probability } F^*(y|a) \textrm{ for every } y \in Y.
\end{equation*}
Let $\bar{l}$ be the largest realization of $|X_{a \rightarrow a'}(F_0,F_1)|$ for any $a,a'\in A$ and $F_0, F_1\in \cF$.
One can verify that the increment of $\log\frac{\pi_t(F_0)}{\pi_t(F_1)}$ follows the same distribution as random variable
$X_{a_{t-k^*} \rightarrow a_{t-k'}}(F_0,F_1)$. Therefore,
\begin{align*}
\log\frac{\pi_T(F_0)}{\pi_T(F_1)} - \log\frac{\pi_0(F_0)}{\pi_0(F_1)} - \sum_{t=1}^T \expect{X_{a_{t-k^*} \rightarrow a_{t-k'}}(F_0,F_1)}
\end{align*}
is a martingale with bounded increments. 
According to Lemma \ref{lem:azuma}, 
for every $\epsilon>0$, $T>0$, $\pi$, and $F_0,F_1$, 
and any sequence of actions $a_0,\dots,a_T$, we have 
\begin{align}\label{eq:b1}
\Pr\left[\log\frac{\pi_T(F_0)}{\pi_T(F_1)} - \log\frac{\pi_0(F_0)}{\pi_0(F_1)} - \sum_{t=1}^T \expect{X_{a_{t-k^*} \rightarrow a_{t-k'}}(F_0,F_1)} \geq \epsilon\right] 
\leq \exp \left(-\frac{\epsilon_1^{2}}{2T\bar{l}^2}\right). 
\end{align}

Now consider an auxiliary scenario where $k^*=k'$.
One can verify that 
$X_{a_{t-k'} \rightarrow a_{t-k'}}(F,F^*) < 0$ for every $F_1\neq F^*$. 
For  every $\epsilon > 0$, 
let $T \in \mathbb{N}$ be large enough such that 
\begin{enumerate}
\item $|\cF| \cdot \exp \left(-\frac{\epsilon_1^{2}}{2T\bar{l}^2}\right) < 1$.
\item For any $F\in \cF\setminus F^*$, 
$- \log\frac{\pi_0(F)}{\pi_0(F^*)} - \sum_{t=1}^T \expect{X{a_{t-k'} \rightarrow a_{t-k'}}(F,F^*)} -\epsilon > \log \frac{|\cF|(1-\epsilon)}{\epsilon}$.
\end{enumerate}
By applying union bound to inequality \eqref{eq:b1} for all $F\in \cF\setminus F^*$, we know that with strictly positive probability, 
$\log\frac{\pi_T(F)}{\pi_T(F^*)} \geq \log \frac{|\cF|(1-\epsilon)}{\epsilon}$ for any $F\in \cF\setminus F^*$. 
This implies that $\pi_T(F^*) \geq 1-\epsilon$. 
Note that by Condition \ref{cd:1}, 
this event also occurs with strictly positive probability when the true state is $F^*$ with time lag $k^*$.
This concludes the proof of Claim \ref{clm:b1}.
\end{proof}

\begin{proof}[Proof of Claim \ref{clm:b2}]
For every finite set $\cF\subseteq \Delta (\mathcal{Y})$ such that $F^* \in \mathcal{F}$,
let $\pi_{\epsilon} \in (0,1)$ be the lowest probability such that for every $\pi \in \Delta (\mathcal{F})$, if $\pi$ attaches probability at least $\pi_{\epsilon}$ to $F^*$, then the agent has a strict incentive to choose $a^*$. 
For simplicity, let $X(F)\equiv X_{a^* \rightarrow a^*}(F,F^*)$ for any $F\in \cF\setminus F^*$.  
By definition, we have $\expect{X(F)} < 0$. 
Let $r^*_F > 0$ be defined via
$\expect[x\sim X(F)]{\exp(r^*_F \cdot x)} = 1$.
According to Lemma \ref{lem:wald}, for a sequence of i.i.d.\ random variables $X_1,\dots,X_t$ that is distributed according to $X(F)$, 
we have 
\begin{align*}
\Pr\left[\bigcup_{n=1}^\infty \left\{\sum_{t=1}^{n} X_t \geq c\right\}\right]
\leq \exp(-r^*_F \cdot c) \textrm{ for every } c>0.
\end{align*}
Let $r^* = \min_{F\in \cF\setminus F^*} r^*_F$.
Let $c$ be such that $|\cF|\cdot \exp(-r^* \cdot c) < 1$, 
and let $\epsilon>0$ be such that $c+\bar{l}\cdot\max\{k^*,k'\}+\log\frac{\epsilon}{1-\epsilon} < \log \frac{\pi_\epsilon}{|\cF|\cdot (1-\pi_\epsilon)}$,
we know that for the first $\max\{k^*,k'\}$ periods, 
the agent always chooses action $a^*$
and the log likelihood ratio between $F$ and $F^*$
increases by at most $\bar{l}\cdot\max\{k^*,k'\}$.
Moreover, for any $t>\bar{l}\cdot\max\{k^*,k'\}$, with probability $1-|\cF|\cdot \exp(-r^* \cdot c)>0$, 
we have that 
$$\log\frac{\pi_t(F)}{\pi_t(F^*)} 
< c+\bar{l}\cdot\max\{k^*,k'\}+\log\frac{\pi_0(F)}{\pi_0(F^*)} 
< c+\bar{l}\cdot\max\{k^*,k'\}+\log\frac{\epsilon}{1-\epsilon} 
< \log \frac{\pi_\epsilon}{|\cF|\cdot (1-\pi_\epsilon)}$$
for any $F\in \cF \setminus F^*$. This implies that $\pi_t(F^*) > \pi_\epsilon$ for any $t>0$
and hence the agent chooses action $a^*$ in all future periods. 
\end{proof}


\section{Proof of Theorem 1}\label{secC}
We establish inequality \eqref{eq:infinitesimal frequency}
when $|Y| = 2$, and later adjust the proof to environments where $|Y| \geq 3$. 
For simplicity, we first consider the  case in which $k^* = 1$
and $k' = 0$. Our argument straightforwardly generalizes to other values of 
$k^*$ and $k'$ as long as $k^* \neq k'$.

Since $|Y|=2$, 
the agent  chooses the action that induces the high-payoff outcome  with higher probability. 
Without loss of generality, let $Y \equiv \{y_0,y_1\}$, and let $v(y_0)=0$ and $v(y_1)=1$.
Let $\zeta, \zeta' \in (0,1/4)$, and let 
\begin{align}\label{eq:construction}
&F^*(y_1 | a=0) \equiv \frac{1}{2} \quad
&F^*(y_1 | a=1) \equiv  \zeta';\nonumber
\\
&F_0(y_1 | a=0) \equiv 2\zeta\quad
&F_0(y_1 | a=1) \equiv \zeta;
\\
&F_1(y_1 | a=0) \equiv 3\zeta \quad
&F_1(y_1 | a=1) \equiv 4\zeta.\nonumber
\end{align}
The optimal action under $F^*$ is $0$.
Let
$\pi_0(F_0) \equiv \frac{1-\zeta'}{2}$, $\pi_0(F_1) \equiv \frac{1-\zeta'}{2}$,
and $\pi(F) \equiv \zeta'$.
First,
we show that for any $\gamma > 0$,
there exist $\zeta$ and $\zeta'$
such that the asymptotic frequency of action $0$ is less than $\gamma$. 
By the end of this section, we identify the crucial components in this construction to show that 
there exists an open set of distributions such that the agent can have arbitrarily small frequency of choosing action $a^*$.

First we bound the expected number of times with which the agent chooses action $0$ and action $1$
when the true distribution is $F^*$
and the agent's prior belief attach small but positive probability to $F^*$. 
Let 
\begin{equation}\label{4.3}
    X_{a \rightarrow a'}(F_0,F_1) \equiv \log \frac{F_0(y|a')}{F_1(y|a')} \textrm{ with probability } F^*(y|a) \textrm{ for every } y \in Y.
\end{equation}
Intuitively, this is the change in the log likelihood ratio between $F_0$ and $F_1$ when the previous period action was $a$ and the current period action is $a'$. 
Let $\overline{l} (F_0,F_1)$ be the largest realization of $X_{1 \rightarrow 0} (F_0,F_1)$, and let
$\underline{l}(F_0,F_1)$ be the smallest realization of $X_{0 \rightarrow 1} (F_0,F_1)$. 
By construction we have $\overline{l}(F_0,F_1)>0$ and $\underline{l}(F_0,F_1)<0$. 
In what follows, we omit the dependence of $(F_0,F_1)$ and write $\overline{l}$ and $\underline{l}$  instead.

Consider a hypothetical scenario in which
the support of the agent's prior belief is $\{F_0,F_1\}$.
For any discount factor $\delta\in [0,1)$,
there exists $l^* \in \mathbb{R}$ depending on $\delta$ such that the agent 
is indifferent between actions $0$ and $1$ when $\log \frac{\pi(F_0)}{\pi(F_1)}=l^*$, 
and strictly prefers action~$0$ if and only if $\log \frac{\pi(F_0)}{\pi(F_1)}\geq l^*$.
For every $l > l^*$,
let random variable $\tau_0 (l)$ be the number of consecutive periods with which the agent takes action $0$ when the initial value of $\log \frac{\pi(F_0)}{\pi(F_1)}$ is $l$.
For every $l<l^*$,
let random variable $\tau_1 (l)$ be the number of consecutive periods with which the agent takes action $1$ when the initial value of $\log \frac{\pi(F_0)}{\pi(F_1)}$ is $l$. 


\begin{Claim}\label{clm:finite}
Random variables $\tau_0(l)$ and $\tau_1(l)$ have finite mean and variance for every $l \in \mathbb{R}$. 
\end{Claim}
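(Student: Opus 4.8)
The plan is to realize $\tau_0(l)$ and $\tau_1(l)$ as first-passage times of the scalar log-likelihood-ratio process $l_t \equiv \log\frac{\pi_t(F_0)}{\pi_t(F_1)}$ across the indifference threshold $l^*$, and to show that these passage times have geometrically decaying tails. Two facts do the work. First, the increments of $l_t$ are uniformly bounded: since every $F(\cdot|a)$ has full support and $Y$ is finite, each realization $\log\frac{F_0(y|a')}{F_1(y|a')}$ is finite, so there is a common bound. Second, while the agent repeats an action the process has a strictly signed drift. Granting a tail bound of the form $\Pr(\tau_0(l)\geq s)\leq \exp(-cs)$ with $c>0$, finiteness of the mean and the variance is immediate from $\mathbb{E}[\tau_0(l)]=\sum_{s\geq1}\Pr(\tau_0(l)\geq s)$ and $\mathbb{E}[\tau_0(l)^2]=\sum_{s\geq1}(2s-1)\Pr(\tau_0(l)\geq s)$, and likewise for $\tau_1(l)$.

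To obtain the tail bound I would couple the true dynamics with an auxiliary process $\tilde l_t$ in which the agent is forced to play action $0$ in every period, with $\tilde l_0=l$. On the event $\{\tau_0(l)\geq s\}$ the agent does play action $0$ in periods $1,\dots,s$, so the two processes coincide there and $\tilde l_j\geq l^*$ for $j=0,\dots,s-1$; hence $\Pr(\tau_0(l)\geq s)\leq \Pr(\tilde l_{s-1}\geq l^*)$. Under the auxiliary process the outcomes are i.i.d.\ draws from $F^*(\cdot|0)$ from the second period on, and with $k^*=1$, $k'=0$ the per-period increment attributed to the current action is exactly $X_{0 \rightarrow 0}(F_0,F_1)$. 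Thus $\tilde l_{s-1}=l+(\text{a bounded first term})+\sum_{j}X_{0 \rightarrow 0}^{(j)}$ is a bounded i.i.d.\ sum whose summand mean is
\[
\mathbb{E}[X_{0 \rightarrow 0}] = \KL\big(F^*(\cdot|0)\,\|\,F_1(\cdot|0)\big)-\KL\big(F^*(\cdot|0)\,\|\,F_0(\cdot|0)\big)<0,
\]
the strict negativity holding for the construction in \eqref{eq:construction} because $F_1(\cdot|0)$ is strictly closer to $F^*(\cdot|0)$ than $F_0(\cdot|0)$.

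Since this is a bounded, negative-mean i.i.d.\ sum, the Chernoff-Hoeffding inequality gives $\Pr(\tilde l_{s-1}\geq l^*)=\Pr\big(\sum_j X_{0 \rightarrow 0}^{(j)}\geq l^*-l-\text{const}\big)\leq \exp(-cs)$ for some $c>0$ and all large $s$, which is the desired tail for $\tau_0(l)$. The treatment of $\tau_1(l)$ is symmetric: couple with the ``always action $1$'' process, whose increments are i.i.d.\ copies of $X_{1 \rightarrow 1}(F_0,F_1)$ with strictly positive mean $\KL(F^*(\cdot|1)\,\|\,F_0(\cdot|1))-\KL(F^*(\cdot|1)\,\|\,F_1(\cdot|1))>0$; now $\{\tau_1(l)\geq s\}$ forces a large negative deviation of a positive-mean walk, again exponentially rare. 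The starting value $l$ enters only through the additive term $l^*-l$, so the conclusion holds for every $l$.

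I expect the only genuine content, as opposed to bookkeeping, to be checking the strict sign of the two drifts, i.e.\ the $\KL$ comparisons, for the specific $F^*,F_0,F_1$ in \eqref{eq:construction}; this is the one step that uses the numerical construction, whereas boundedness, the coupling, Chernoff-Hoeffding, and the passage from exponential tails to finite moments are all generic. The points requiring care are that the coupling makes the event containment exact (the two processes are identical precisely while action $0$ is played, and share the same pre-run action $a_0$), and that one correctly identifies $X_{0 \rightarrow 0}$ (resp.\ $X_{1 \rightarrow 1}$) as the relevant increment under the $k^*=1$, $k'=0$ timing. When $\delta>0$ the threshold $l^*$ is the indifference point of a dynamic program rather than of a myopic comparison, but since in this two-state scenario the belief reduces to the scalar $l_t$ and the optimal policy is a threshold rule in $l_t$ (as recorded in the setup), no additional argument is needed.
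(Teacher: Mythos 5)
Your proposal is correct and follows essentially the same route as the paper: while the agent repeats an action, the log-likelihood-ratio increments are i.i.d.\ bounded copies of $X_{0\to 0}(F_0,F_1)$ (resp.\ $X_{1\to 1}(F_0,F_1)$) with strictly signed mean, so the Chernoff--Hoeffding inequality yields an exponential tail for the sojourn time and hence finite first and second moments; your explicit coupling with the ``always play the same action'' walk is just a more careful statement of the identification $\Pr(\tau_0(l)\geq s)\leq\Pr(l_{s}\geq l^*)$ that the paper's proof uses directly. One small slip worth fixing: $\mathbb{E}[X_{1\to 1}]=\KL\big(F^*(\cdot|1)\,\|\,F_1(\cdot|1)\big)-\KL\big(F^*(\cdot|1)\,\|\,F_0(\cdot|1)\big)$, not the difference you wrote with the two terms swapped — the substantive conclusion (strictly positive mean, because $F_0(\cdot|1)$ is closer to $F^*(\cdot|1)$ in the construction) is nonetheless correct.
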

\begin{proof}
According to Bayes rule, 
$l_t = l_{t-1} + X_{0 \rightarrow 0}(F_0,F_1)$ for every $t \in \mathbb{N}$.
Let $H$ be the maximal difference in the realization of random variable $X_{0}(F_0,F_1)$. 
The Chernoff-Hoeffding inequality implies that
\begin{align*}
\Pr[l_t > l^*]
\leq \exp\left(-\frac{2(-t\cdot \expect{X_{0 \rightarrow 0}(F_0,F_1)} - l_0+l^*)^2}{tH^2}\right).
\end{align*}
Since $\Pr[l_t > l^*]$ vanishes exponentially as $t \rightarrow +\infty$, $\tau_0(l)$ has finite mean and variance for every $l \geq l^*$. 
Similarly, one can also show that $\tau_1(l)$ has a finite mean and variance for every $l \leq l^*$.
\end{proof}

Next, suppose $F^*$ belongs to the support of agent's prior belief. 
Let $\tau_0^{\epsilon} (l)$ be the number of consecutive periods with which the agent takes action~$0$ when the initial value of $\log \frac{\pi(F_0)}{\pi(F_1)}$ is $l-\epsilon$
and switches to action~$1$ if the log likelihood ratio is below $l^*+\epsilon$
after the first period.
Let $\tau_1^{\epsilon} (l)$ be the number of consecutive periods with which the agent takes action $1$ when the initial value of $\log \frac{\pi(F_0)}{\pi(F_1)}$ is $l+\epsilon$
and switches to action 0 if the log likelihood ratio is above $l^*-\epsilon$
after the first period. 

For every $\epsilon > 0$, there exists $\pi_{\epsilon}>0$ such that when the prior belief $\pi \in \Delta (\cF)$ satisfies $\pi(F^*) < \pi_{\varepsilon}$,
the agent strictly prefers action $0$ when $l > l^* +\varepsilon$ and strictly prefers action $1$ when $l < l^* -\varepsilon$. 
Let $X_{\tau_0}(\epsilon)$ be the random variable that has the same distribution as $\tau_0(\overline{l} + l^*+2\epsilon)$
and 
$X'_{\tau_0}(\epsilon)$ be the random variable that has the same distribution as $\tau^{\epsilon}_0(l^*)$.
Let $X_{\tau_1}(\epsilon)$ be the random variable that has the same distribution as $\tau_1(\underline{l} +l^* - 2\epsilon)$
and 
$X'_{\tau_1}(\epsilon)$ be the random variable that has the same distribution as $\tau^{\epsilon}_1(l^*)$.
For every $\epsilon > 0$ and $\eta > 0$, 
let 
\begin{align*}
\hat{c}_0(\epsilon, \eta) &\equiv \max_{t>0}\left\{(1+\eta) t \bE[X_{\tau_0}(\epsilon)] 
- \ln\bE\left[e^{t X_{\tau_0}(\epsilon)}\right]\right\}\\
c'_0(\epsilon, \eta) &\equiv \max_{t>0}\left\{-(1-\eta) t \bE[X'_{\tau_0}(\epsilon)] 
- \ln\bE\left[e^{-t X'_{\tau_0}(\epsilon)}\right]\right\}\\
c_0(\epsilon, \eta) &\equiv \min\{\hat{c}_0(\epsilon, \eta), c'_0(\epsilon, \eta)\}
\end{align*}
and 
\begin{align*}
\hat{c}_1(\epsilon, \eta) &\equiv \max_{t>0}\left\{(1+\eta) t\bE[X_{\tau_1}(\epsilon)]
- \ln\bE\left[e^{t X_{\tau_1}(\epsilon)}\right]\right\}\\
c'_1(\epsilon, \eta) &\equiv \max_{t>0}\left\{-(1-\eta) t \bE[X'_{\tau_1}(\epsilon)] 
- \ln\bE\left[e^{-t X'_{\tau_1}(\epsilon)}\right]\right\}\\
c_1(\epsilon, \eta) &\equiv \min\{\hat{c}_1(\epsilon, \eta), c'_1(\epsilon, \eta)\}.
\end{align*} 
Let $$K_0(\epsilon, \eta) \equiv \frac{1}{c_0(\epsilon, \eta)} 
\cdot \log \frac{e^{c_0(\epsilon, \eta)}}{\epsilon(e^{c_0(\epsilon, \eta)}-1)}$$
and $$K_1(\epsilon, \eta) \equiv \frac{1}{c_1(\epsilon, \eta)} 
\cdot \log \frac{e^{c_1(\epsilon, \eta)}}{\epsilon(e^{c_1(\epsilon, \eta)}-1)}.$$
Let $\overline{\eta}_0(\epsilon, \eta) \in \mathbb{R}_+$ be such that 
$$c_0(\epsilon, \bar{\eta}_0(\epsilon, \eta)) 
= \frac{\log(K_0(\epsilon, \eta)/\epsilon)}{K_0(\epsilon, \eta)},$$ 
and $\overline{\eta}_1(\epsilon, \eta) \in \mathbb{R}_+$ be such that 
$$c_1(\epsilon, \bar{\eta}_1(\epsilon, \eta)) 
= \frac{\log(K_1(\epsilon, \eta)/\epsilon)}{K_1(\epsilon, \eta)}.$$ For every $k \in \mathbb{N}$, let $t^0_k \in \mathbb{N}$ be the $k$th time such that $a_{t_k} = \az$ and $a_{t_k-1} = \ao$
and $t^1_k \in \mathbb{N}$ be the $k$th time such that $a_{t_k} = \ao$ and $a_{t_k-1} = \az$.
Let $S^0_t$ be the total number of adjacent periods 
until $t$ in which the agent chooses action $\az$.
Let $S^1_t$ be the total number of adjacent periods 
until $t$ in which the agent chooses action $\ao$.

\begin{Claim}\label{clm:concentrate return time}
For every $\epsilon > 0$ and $\eta > 0$, if $\pi(F^*) < \pi_{\varepsilon}$,
then the following event happens with probability at least $1-6\epsilon$: 
\begin{align*}
S^0_{t^0_k} &\leq [k(1+\eta) + K_0(\epsilon, \eta)(1+\bar{\eta}_0(\epsilon, \eta))]
\bE[\tau_0(\bar{l}+l^* + 2\epsilon)] \\
S^0_{t^0_k} &\geq (k - K_0(\epsilon, \eta))\cdot(1-\eta)
\bE[\tau^{\epsilon}_0(l^*)]\\
S^1_{t^1_k} &\leq [k(1+\eta) + K_1(\epsilon, \eta)(1+\bar{\eta}_1(\epsilon, \eta))]
\bE[\tau_1(\underline{l} +l^*- 2\epsilon)]\\
S^1_{t^1_k} &\geq (k - K_1(\epsilon, \eta))\cdot(1-\eta)
\bE[\tau^{\epsilon}_1(l^*)]
\end{align*}
for every $k \in \mathbb{N}$. 
\end{Claim}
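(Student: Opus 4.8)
The plan is to write each occupation time as a sum over excursions, to sandwich the length of an individual excursion (in the sense of stochastic dominance) between i.i.d.\ copies of the reference variables, and then to feed these sums into the unbounded concentration inequality of Lemma~\ref{lem:concentration}, closing with a union bound over all $k$. Throughout I use that $\pi(F^*)<\pi_\varepsilon$ forces the agent to strictly prefer action $\az$ whenever $l_t>l^*+\epsilon$ and to strictly prefer action $\ao$ whenever $l_t<l^*-\epsilon$, so that the true switching threshold between two consecutive excursions lies in the window $[l^*-\epsilon,\,l^*+\epsilon]$.

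\textbf{Step 1: the stochastic-dominance sandwich.} Consider any excursion in action $\az$, i.e.\ a maximal run of consecutive $\az$'s bracketed by switches. At the switch that initiates it, the agent has just come to prefer $\az$, so the log likelihood ratio is at least $l^*-\epsilon$ before the one-step update; since each increment is at most $\overline{l}$ in absolute value, the entry value is at most $\overline{l}+l^*+2\epsilon$. Because $\tau_0(l)$ is stochastically increasing in its initial value $l$ and the plain $\tau_0$ switches \emph{less} eagerly than the agent (who may abandon $\az$ already at $l^*-\epsilon$), the true excursion length is stochastically dominated by $\tau_0(\overline{l}+l^*+2\epsilon)=X_{\tau_0}(\epsilon)$. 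Symmetrically, $\tau_0^{\epsilon}(l^*)$ starts lower and switches \emph{more} eagerly, so it is stochastically dominated by the true length, yielding $X'_{\tau_0}(\epsilon)$ as a lower envelope. Using the strong-Markov structure of the increment process conditional on the action path, these two dominations can be realized jointly by a coupling in which the upper (resp.\ lower) envelopes are i.i.d.\ copies of $X_{\tau_0}(\epsilon)$ (resp.\ $X'_{\tau_0}(\epsilon)$) across excursions; the analogous statement for action $\ao$ uses $X_{\tau_1}(\epsilon)$ and $X'_{\tau_1}(\epsilon)$. Since $t^0_k$ is the $k$-th switch into $\az$, the number of completed $\az$-excursions up to $t^0_k$ is (up to an off-by-one absorbed below) equal to $k$, so $S^0_{t^0_k}$ is sandwiched between $\sum_{j=1}^{k}X'_{\tau_0,j}(\epsilon)$ and $\sum_{j=1}^{k}X_{\tau_0,j}(\epsilon)$.

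\textbf{Step 2: concentration for each fixed $k$ and uniformization over $k$.} By Claim~\ref{clm:finite} the reference variables have finite positive mean and finite variance, so Lemma~\ref{lem:concentration} applies. For the upper bound I take $\lambda=1+\eta$ to get $\Pr[\sum_{j=1}^{k}X_{\tau_0,j}(\epsilon)\ge (1+\eta)k\,\bE[X_{\tau_0}(\epsilon)]]\le e^{-\hat c_0(\epsilon,\eta)k}$, and for the lower bound I take $\lambda=1-\eta$ with $X'_{\tau_0}$, giving decay $e^{-c'_0(\epsilon,\eta)k}$; write $c_0=\min\{\hat c_0,c'_0\}$. To obtain the statement for \emph{every} $k$ I split at $K_0(\epsilon,\eta)$. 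For $k\ge K_0$ the geometric tail sum $\sum_{k\ge K_0}e^{-c_0 k}$ is at most $\epsilon$ precisely because $K_0=\tfrac{1}{c_0}\log\tfrac{e^{c_0}}{\epsilon(e^{c_0}-1)}$ is the threshold equating that tail to $\epsilon$. For $k<K_0$ I cannot afford relative tolerance $\eta$, so I instead bound the first $K_0$ excursions in one shot with the larger tolerance $\bar\eta_0(\epsilon,\eta)$, chosen through $c_0(\epsilon,\bar\eta_0)=\tfrac{\log(K_0/\epsilon)}{K_0}$ so that $e^{-c_0(\epsilon,\bar\eta_0)K_0}\le \epsilon/K_0$; this produces the additive buffer $K_0(\epsilon,\eta)(1+\bar\eta_0(\epsilon,\eta))\,\bE[\tau_0(\overline{l}+l^*+2\epsilon)]$ in the first displayed inequality and, together with the lower-envelope estimate, the second. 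The two $S^1$ inequalities follow verbatim by the mirror-image argument for action $\ao$, replacing $(F_0,F_1,\overline{l})$-data by the $(\underline{l})$-data. Summing the failure probabilities across the four displayed bounds, each controlled at level at most $\tfrac{3}{2}\epsilon$ by the above split, yields the claimed $1-6\epsilon$.

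\textbf{Main obstacle.} I expect Step~1 to be the crux. The excursion lengths are genuinely neither independent nor identically distributed: each begins at a data-dependent value of the log likelihood ratio inside the window $[l^*-\epsilon,\overline{l}+l^*+2\epsilon]$, and the presence of $F^*$ perturbs every increment relative to the clean two-state scenario in which $\tau_0,\tau_1$ were defined. The argument hinges on two enabling facts — the monotonicity of $\tau_0(\cdot)$ (and $\tau_1(\cdot)$) in the initial condition, and the uniform one-step bound $\overline{l}$ on increments — which together let me replace the dependent, heterogeneous excursion lengths by i.i.d.\ envelopes without losing more than the deterministic $2\epsilon$ slack in the thresholds; verifying that the agent's actual switching threshold stays within this $\epsilon$-window is exactly where the hypothesis $\pi(F^*)<\pi_\varepsilon$ is used.
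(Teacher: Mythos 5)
Your proposal is correct and follows essentially the same route as the paper's proof: sandwich $S^0_{t^0_k}$ and $S^1_{t^1_k}$ in first-order stochastic dominance between sums of i.i.d.\ copies of $X'_{\tau_0}(\epsilon)$ and $X_{\tau_0}(\epsilon)$ (resp.\ $X'_{\tau_1}(\epsilon)$ and $X_{\tau_1}(\epsilon)$), apply Lemma~\ref{lem:concentration} to each fixed $k$, and close with a union bound split at $K_0(\epsilon,\eta)$ using the enlarged tolerance $\bar{\eta}_0(\epsilon,\eta)$ for $k<K_0(\epsilon,\eta)$. Your Step~1 in fact supplies the dominance justification that the paper merely asserts; the only caveat is that an agent who persists until $l_t<l^*-\epsilon$ is \emph{less} eager to switch than the reference $\tau_0$ (whose threshold is $l^*$), so the upper envelope holds not because $\tau_0$ ``switches less eagerly'' but because the reference's starting point $\overline{l}+l^*+2\epsilon$ exceeds the true entry value (at most $\overline{l}+l^*+\epsilon$, since the agent took action $1$ in the prior period) by enough to absorb the $\epsilon$ gap between the thresholds.
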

\begin{proof}
We establish the upper and lower bounds for $S^0_{t^0_k}$. The ones for 
$S^1_{t^1_k}$ can be derived using a similar argument. 
 Claim \ref{clm:finite} implies that $\bE[X_{\tau_0}(\epsilon)]$ is finite. 
Suppose $\pi$ is such that $\pi(F^*) < \pi_{\varepsilon}$. 
Since $S^0_{t^0_k}$ is first order stochastically dominated by $\sum_{i=1}^k x_i$
with $x_i \sim X_{\tau_0}(\epsilon)$, 
Lemma \ref{lem:concentration} implies that
\begin{align*}
\Pr[S^0_{t^0_k} > k(1+\eta)\cdot \bE[\tau_0(\bar{l} +l^*+ 2\epsilon)]] 
\leq \exp(-k\cdot c_0(\epsilon, \eta)).
\end{align*}
The union bound implies that
\begin{align*}
\Pr\left[\bigcup_{k\geq K_0(\epsilon, \eta)} 
\{S^0_{t^0_k} > k(1+\eta) \cdot \bE[\tau_0(\bar{l} +l^*+ 2\epsilon)]\}\right]
\leq \sum_{k\geq K_0(\epsilon, \eta)} \exp(-k\cdot c_0(\epsilon, \eta)) 
\leq \epsilon.
\end{align*}
Moreover, for every $k < K_0(\epsilon, \eta)$, we have
\begin{align*}
\Pr\left[S^0_{t^0_k} > k(1+\bar{\eta}_0(\epsilon, \eta)) \cdot 
\bE[\tau_0(\bar{l} +l^*+ 2\epsilon)]) \right] 
\leq \frac{\epsilon}{K_0(\epsilon, \eta)}. 
\end{align*}
Take the union of these events, we have
\begin{align*}
&\Pr\left[\bigcup_{k\geq 1} \{S^0_{t^0_k} > k(1+\eta) \cdot 
\bE[\tau_0(\bar{l}+l^* + 2\epsilon)] 
+ K_0(\epsilon, \eta) (1+\bar{\eta}(\epsilon,\eta)) \cdot \bE[\tau_0(\bar{l} + l^*+2\epsilon)]\}\right]\\
&\leq \Pr\left[\bigcup_{k\geq K_0(\epsilon, \eta)} 
\{S^0_{t^0_k} > k(1+\eta) \cdot 
\bE[\tau_0(\bar{l} +l^*+ 2\epsilon)] \}\right]
+ \Pr\left[\bigcup_{k < K_0(\epsilon, \eta)} \{S^0_{t^0_k} > k(1+\bar{\eta}_0(\epsilon, \eta))\}\right]
\leq 2\epsilon.
\end{align*}
Moreover, $S^0_{t^0_k}$ first order stochastically dominates $\sum_{i=1}^k x_i$
with $x_i \sim X'_{\tau_0}(\epsilon)$. 
Lemma \ref{lem:concentration} implies that
\begin{align*}
\Pr[S^0_{t^0_k} < k(1-\eta)\cdot \bE[\tau_1^{\epsilon}(l^*)]] 
\leq \exp(-k\cdot c_0(\epsilon, \eta)).
\end{align*}
By union bound, we have 
\begin{equation*}
\Pr\left[\bigcup_{k\geq K_0(\epsilon, \eta)} 
\{S^0_{t^0_k} < k(1-\eta) \cdot \bE[\tau_1^{\epsilon}(l^*)]\}\right]
\leq \sum_{k\geq K_0(\epsilon, \eta)} \exp(-k\cdot c_0(\epsilon, \eta)) 
\leq \epsilon.\qedhere
\end{equation*}
\end{proof}

\begin{Claim}\label{clm:small prob on G}
For every $\epsilon > 0$, 
there exists a prior belief $\pi_0 \in \Delta (\cF)$ such that event
$\{\pi(F^*) < \pi_{\varepsilon}\}$ 
occurs
with probability at least $1-6\epsilon$.
\end{Claim}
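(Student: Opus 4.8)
The plan is to track the two log-likelihood ratios $\ell_t^j \equiv \log\frac{\pi_t(F^*)}{\pi_t(F_j)}$ for $j\in\{0,1\}$ and to show that, for a prior placing sufficiently small mass on $F^*$, both remain below a threshold determined by $\pi_{\varepsilon}$ for every $t$ with probability at least $1-6\epsilon$; since $\pi_t(F^*)<\pi_{\varepsilon}$ follows once both $\ell_t^j$ lie below such a threshold, this yields the claim (with the event read, as in \eqref{eq:construction}'s surrounding discussion, as ``$\pi_t(F^*)<\pi_{\varepsilon}$ for all $t$''). By Bayes' rule under the agent's misspecified updating, in which he attributes $y_t$ to $a_t$ while in truth $y_t\sim F^*(\cdot|a_{t-1})$, the increment $\ell_t^j-\ell_{t-1}^j=\log\frac{F^*(y_t|a_t)}{F_j(y_t|a_t)}$ has conditional mean $g_j(a_{t-1},a_t)\equiv\sum_{y}F^*(y|a_{t-1})\log\frac{F^*(y|a_t)}{F_j(y|a_t)}$. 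The diagonal terms $g_j(a,a)=\KL\big(F^*(\cdot|a)\,\|\,F_j(\cdot|a)\big)$ are positive (the true state is confirmed when the action is stable), whereas the off-diagonal terms $g_j(a',a)$, $a'\neq a$, are the attribution errors and can be made strongly negative by the construction in \eqref{eq:construction}.

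First I would decompose $\ell_t^j=\ell_0^j+M_t^j+A_t^j$, where $A_t^j\equiv\sum_{s=1}^t g_j(a_{s-1},a_s)$ is the compensator and $M_t^j\equiv\sum_{s=1}^t\big(\ell_s^j-\ell_{s-1}^j-g_j(a_{s-1},a_s)\big)$ is a martingale. Because $Y$ is finite and every state has full support, the increments of $M_t^j$ are bounded, so Lemma~\ref{lem:azuma} applies. Writing $A_t^j$ in terms of transition counts, the diagonal contributions are proportional to the numbers of same-action periods $S^0_t,S^1_t$ and the off-diagonal contributions to the numbers of switches. Claim~\ref{clm:concentrate return time} pins down the ratio of same-action periods to switches (essentially $\mathbb{E}[\tau_0]$ and $\mathbb{E}[\tau_1]$), so that on the event described there the drift accrued over one switch cycle is, up to the concentration error, $(\mathbb{E}[\tau_0]-1)g_j(0,0)+(\mathbb{E}[\tau_1]-1)g_j(1,1)+g_j(0,1)+g_j(1,0)$. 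The heart of the argument is to choose $\zeta,\zeta'$ in \eqref{eq:construction}, taking $\zeta'$ small relative to $\zeta$, so that this per-cycle drift is strictly negative for each $j$: the off-diagonal attribution terms are made large in magnitude, while the positive diagonal terms stay controlled because frequent switching (small $\mathbb{E}[\tau_0],\mathbb{E}[\tau_1]$, via Claim~\ref{clm:finite}) makes same-action periods scarce relative to switches.

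With a strictly negative per-cycle drift in hand, I would combine it with the Azuma--Hoeffding bound on $M_t^j$ and a union bound over $t$ (equivalently a maximal-inequality argument along the switching times) to show that $\sup_t(\ell_t^j-\ell_0^j)$ exceeds any fixed level with only small probability; choosing $\pi_0(F^*)$ small makes $\ell_0^j$ very negative, so both ratios stay below the $\pi_{\varepsilon}$-threshold at every $t$ on an event of probability at least $1-6\epsilon$. The main obstacle is the two-way coupling between beliefs and behavior: the negative drift holds only while the posterior on $F^*$ remains below $\pi_{\varepsilon}$—so that the agent's switching mimics the auxiliary two-state problem of Claim~\ref{clm:concentrate return time}—yet it is exactly this drift that keeps the posterior low. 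I would break this circularity with a stopping-time argument, running the drift and concentration estimates up to the first time $\pi_t(F^*)$ would reach $\pi_{\varepsilon}$ and showing that this time is infinite with the stated probability; this device also handles the serial correlation between actions and beliefs that precludes a direct appeal to the Chernoff--Hoeffding inequality.
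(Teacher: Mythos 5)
Your proposal follows essentially the same route as the paper's proof: the same martingale-plus-compensator decomposition of $\log\frac{\pi_t(F^*)}{\pi_t(F_j)}$ for $j\in\{0,1\}$, Azuma--Hoeffding applied to the martingale part with a union bound over $t$, the switch-frequency bounds from Claim~\ref{clm:concentrate return time} to make the accumulated drift strictly negative (by taking $\zeta'$ small so the attribution-error terms dominate the diagonal KL terms), and a sufficiently small prior weight on $F^*$ to start both log-likelihood ratios far below the $\pi_{\varepsilon}$ threshold. Your explicit stopping-time device for the belief--behavior circularity is a point the paper leaves implicit, but it does not change the substance of the argument.
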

\begin{proof}
Let $l_t(F,F') \equiv \log \frac{\pi_t(F)}{\pi_t(F')}$.
Let $X_t(F^*) \equiv l_t(F^*, F_0) - l_{t-1}(F^*, F_0)$, 
$Z_0(F^*) \equiv l_0(F^*, F_0)$ and $Z_t(F^*) = Z_{t-1}(F^*) + l_t(F^*, F_0) - l_{t-1}(F^*, F_0) - \expect{X_t(F^*) \Big| h^{t-1}}$ 
for every $t \geq 1$. 
One can verify that  $\{Z_t(F^*)\}_{t \in \mathbb{N}}$
is a martingale. 
By definition, $l_t(F^*, F_0) = Z_t(F^*) + \sum_{t'\leq t} \expect{X_{t'}(F^*) \Big| h^{t'-1}}$.
Let $H$ be the difference between the maximal realization of $X_0(F^*,F_0)$ and the minimal realization of $X_0(F^*,F_0)$. 
We have $|Z_{t}(F^*)-Z_{t-1}(F^*)|\leq H$ for every $t \in \mathbb{N}$.
According to Lemma \ref{lem:azuma}, 
\begin{align*}
\Pr[Z_{t}(F^*)-Z_{0}(F^*)\geq t\eta]
\leq \exp\left(-\frac{t\eta^2}{2H^2}\right) \textrm{ for every } \eta \in \mathbb{R}_+.
\end{align*}
Let $T \equiv \frac{2H^2}{\eta^2} \log \frac{e^{\eta^2/(2H^2)}}{\epsilon(e^{\eta^2/(2H^2)}-1)}$.
Take the union of these events, we obtain the following upper bound:
\begin{align*}
\Pr\left[\bigcup_{t\geq T} \{Z_{t}(F^*)-Z_{0}(F^*) \geq t\eta\}\right]
\leq \sum_{t\geq T}\exp\left(-\frac{t\eta^2}{2H^2}\right)
\leq \epsilon.
\end{align*}
Moreover, for every $t < T$, we have 
\begin{align*}
\Pr\left[Z_{t}(F^*)-Z_{0}(F^*) \geq \frac{2H^2\log(T/\epsilon)}{\eta^2}\right]
\leq \frac{\epsilon}{T}.
\end{align*}
Take the union bound, we obtain 
\begin{align}\label{eq:bound on 1}
\Pr\left[\bigcup_{t\geq 1} \left\{l_{t}(F^*, F_0)-l_{0}(F^*, F_0) 
\geq t\eta + \frac{2H^2\log(T/\epsilon)}{\eta^2} 
+ \sum_{t'\leq t}\expect{X_{t'}(F^*) \given h^{t'-1}} \right\}\right]
\leq 2\epsilon.
\end{align}
Similarly, let $X'_t(F^*) \equiv l_t(F^*, F_1) - l_{t-1}(F^*, F_1)$, we have 
\begin{align}\label{eq:bound on 2}
\Pr\left[\bigcup_{t\geq 1} \left\{l_{t}(F^*, F_1)-l_{0}(F^*, F_1) 
\geq t\eta + \frac{2H^2\log(T/\epsilon)}{\eta^2} 
+ \sum_{t'\leq t}\expect{X'_{t'}(F^*) \given h^{t'-1}} \right\}\right]
\leq 2\epsilon.
\end{align}
Let 
\begin{align*}
T'_0 &= K_0(\epsilon_1, \eta)(1+\bar{\eta}_0(\epsilon_1, \eta)) \bE[\tau_0(\bar{l} +l^*+ 2\epsilon_1)] 
+ K_0(\epsilon_1, \eta)\bE[\tau^{\epsilon}_0(l^*)] \\
T'_1 &=K_1(\epsilon_1, \eta)(1+\bar{\eta}_1(\epsilon_1, \eta))
\bE[\tau_1(\underline{l}+l^* -2\epsilon_1)]
+ K_1(\epsilon_1, \eta)\bE[\tau^{\epsilon}_1(l^*)].
\end{align*}
Note that the expected log likelihood of $F^*$
when switching from action $0$ to action $1$ is $\sum_{y \in Y} F^*(y|a) \log F(y|a')$, which
diverges to $-\infty$ as  $\zeta'$ goes to $0$, 
while the the expected log likelihood of $F\in\{F_0,F_1\}$ remains bounded.
Thus, according to Claim \ref{clm:concentrate return time}, 
for sufficiently small $\zeta'$, 
\begin{align*}
\sum_{t'\leq t}\expect{X_{t'}(F^*) \given h^{t'-1}}
\leq T'_0\cdot (\bar{Z}+\eta) - t\cdot \eta \textrm{ for every } t\in \mathbb{N}
\end{align*}
occurs with probability at least $1-2\epsilon_1$. 
Similar bound holds for $\sum_{t'\leq t}\expect{X'_{t'}(F^*) \given h^{t'-1}}$.
Combining these with inequalities \eqref{eq:bound on 1} and \eqref{eq:bound on 2},
for sufficiently small $l_0(F^*, F_0), l_0(F^*, F_1)$, we have
\begin{align*}
\Pr\left[\bigcup_{t\geq 1} \{l_{t}(F^*, F_0) \geq \log \pi_{\epsilon} \text{ or } l_{t}(F^*, F_1) \geq \log \pi_{\epsilon}\}\right]
\leq 6\epsilon.
\end{align*}
Therefore, the probability of the event that $\pi_t(F^*) < \pi_{\epsilon}$ for every $t \in \mathbb{N}$
is at least $1-6\epsilon$. 
\end{proof}
According to Claim \ref{clm:small prob on G}, there exists a positive probability event under which the probability that the agent's posterior belief attaches to $F^*$ is sufficiently small in all periods. 
Conditional on this event, Claim \ref{clm:concentrate return time} implies that the agent's action cycles between $0$ and $1$.
In the last step, we bound the asymptotic frequency of action $0$:
\begin{align*}
&\sup_{\sigma \in \Sigma^*(\pi_0)} \Big\{ \limsup_{t \rightarrow +\infty}    \mathbb{E}^{\sigma} \Big[
\frac{1}{t} \sum_{s=1}^t \mathbf{1}\{a_s=a^*\}
\Big]   \Big\} 
= \limsup_{k \rightarrow +\infty}\  \mathbb{E}\left[\frac{S^0_{t^0_k}}{S^1_{t^1_k}+S^0_{t^0_k}}\right]\\
&\leq \limsup_{k \rightarrow +\infty}
\frac{[k(1+\eta) + K_0(\epsilon, \eta)(1+\bar{\eta}_0(\epsilon, \eta))]
\bE[\tau_0(\bar{l}+l^* + 2\epsilon)]}
{(k - K_1(\epsilon, \eta))\cdot(1-\eta)
\bE[\tau^{\epsilon}_1(l^*)]} + 8\epsilon\\
&= \frac{\bE[\tau_0(\bar{l}+l^* + 2\epsilon)]}
{(1-\eta) \bE[\tau^{\epsilon}_1(l^*)]} 
+ 8\epsilon
\leq \gamma. 
\end{align*}
The first inequality holds by directly applying Claims \ref{clm:concentrate return time} and \ref{clm:small prob on G}
and note that when the events in the claims fails, 
the expected frequency is at most 1.
Moreover, for any $\gamma' < \gamma$, 
there exists $\zeta > 0$ in the construction of $F_0, F_1$
such that $\frac{\bE[\tau_0(\bar{l}+l^* + 2\epsilon)]}
{(1-\eta) \bE[\tau^{\epsilon}_1(l^*)]} < \gamma$
due to the fact that in our construction, 
$\bE[\tau_0(\bar{l}+l^* + 2\epsilon)]$
is bounded from above for any $\zeta > 0$
while $\bE[\tau^{\epsilon}_1(l^*)] \geq \frac{1}{2\log\frac{1-\zeta}{1-4\zeta}}$
approaches infinity as $\zeta\to 0$.
Thus, the last inequality holds by simply setting $\epsilon,\eta,\zeta$ to be small enough constants.

\paragraph{Remark:}
Note that proof of Theorem \ref{Theorem1} does not hinge on the parameters in the design of the instance in \eqref{eq:construction}. 
We summarize the important features of the construction of $F^*$ and $\pi_0$:
\begin{enumerate}
\item The KL-divergence between $F^*(\cdot|0)$ and $F^*(\cdot|1)$ is sufficiently large.\footnote{KL-divergence is not symmetric, and it is sufficient to have either $D(F^*(\cdot|0) \,||\, F^*(\cdot|1))$ or $D(F^*(\cdot|1) \,||\, F^*(\cdot|0))$ is large. } 
This is sufficient to establish that
the posterior probability of $F^*$ converges to $0$. 

\item $F_1(\cdot|0)$ is closer to $F^*(\cdot|0)$ compared to $F_0(\cdot|0)$, 
and $F_0(\cdot|1)$ is closer to $F^*(\cdot|1)$ compared to $F_1(\cdot|1)$. 
This is sufficient to established that the action cycles between $0$ and $1$ for infinite periods if the agent does not believe $F^*$ happens with high probability. 

\item The expected log likelihood ratio $\expect{X_{1 \rightarrow 1}(F_0,F_1)}$ is sufficiently close to $0$. 
This is sufficient to established that the number of time periods required for the agent to switch action from $1$ to $0$ is sufficiently large, 
which implies the limit frequency of action $0$ is sufficiently small. 
\end{enumerate}
As is evident from Claim \ref{clm:concentrate return time} and \ref{clm:small prob on G}, 
essentially any instance satisfying those three properties is sufficient to show that the limit frequency of $a*$ is sufficiently small, 
and the example in \eqref{eq:construction} is an illustration that satisfies all three properties. 
Next we discuss the generalization the inefficiency result to broader settings. 
\begin{itemize}
\item When $|Y| \geq 3$, 
there exists a subset $Y' \subseteq Y$ and $|Y'| = 2$
such that each distribution in the support of the agent's belief 
coincides for outcomes $y\in Y \backslash Y'$.
For any $\gamma > 0$, by setting distributions $F^*, F_0, F_1$ such that 
(1) the probability that the realized outcome $y\in Y \backslash Y'$ is sufficiently small , and
(2) the conditional distribution on $Y'$ is the same as what we constructed for the case there are only two outcomes, 
we can show that the expected average frequency of choosing action $a^*$ can be smaller than $\gamma$.

\item When $|A|\geq 3$, there exists a subset $A' \subseteq A$ and $|A'| = 2$. 
For any $\gamma > 0$, by setting distributions $F^*, F_0, F_1$ such that 
(1) it is always suboptimal to choose any action $a \in A\setminus A'$ for any distribution $\pi$, and
(2) the distribution when choosing action $a\in A'$ is the same as what we constructed for the case there are only two actions, 
we can show that the expected average frequency of choosing action $a^*$ can be smaller than $\gamma$.

\item For general time lag $k^*\neq k'$, we need to have an additional step to show that the number of periods before the agent switches the action is not always $|k^*-k'-1|$.
This is obvious when $k^*=1$ and $k'=0$. 
For the general case, we can show that the probability of such event happens is strictly between $(0,1)$, 
and when the KL-divergence between $F^*(\cdot|0)$ and $F^*(\cdot|1)$ is sufficiently large, 
the attribution error is sufficiently large, 
and the posterior belief on $F^*$ still converges to $0$.
\end{itemize}

\section{Proof of Theorem 2}\label{secD}
Appendix \ref{secD1} 
characterizes the principal's payoff in the auxiliary game.
Appendix \ref{secD2} establishes the connections between the principal's payoff in the auxiliary game and his payoff in the original game with symmetric uncertainty.

\subsection{Payoff in the Auxiliary Game}\label{secD1}
This section examines the principal's asymptotic payoff in an auxiliary game in which he knows the true state but the agent is naive in the sense that she ignores the informational content of the principal's proposals and updates her belief based only on the chosen policies and observed signals. For every $F \in \cF$, let
\begin{equation}\label{D.1}
\underline{u}(\sigma_p, F) \equiv    \liminf_{t \rightarrow +\infty} \frac{1}{t}\mathbb{E}^{\sigma_p}\Big[\sum_{s=1}^{t} a_s \Big| F \Big]
\end{equation}
and
\begin{equation}\label{D.2}
\overline{u}(\sigma_p,F) \equiv   \limsup_{t \rightarrow +\infty} \frac{1}{t}\mathbb{E}^{\sigma_p}\Big[\sum_{s=1}^{t} a_s \Big| F \Big]
\end{equation}
be the lower and upper bounds on the principal's asymptotic payoffs when he uses strategy $\sigma_p$ and the true state is $F$. Let
\begin{equation}\label{D.3}
    \underline{U} (F) \equiv \sup_{\sigma_p \in \Sigma_p} \underline{u}(\sigma_p,F)
   \quad \textrm{and} \quad   \overline{U} (F) \equiv \sup_{\sigma_p \in \Sigma_p} \overline{u}(\sigma_p,F)
\end{equation}
We establish two lemmas. 
\begin{Lemma}\label{LD.1}
For every $\pi_0 \in \Delta (\cF)$ that has full support, we have $\underline{U}(F_1)=\overline{U}(F_1)=1$. 
\end{Lemma}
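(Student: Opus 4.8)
The plan is to establish the two asymmetric facts behind the paper's intuition and then stitch them together. Since the implemented frequency of action $1$ never exceeds $1$, we automatically have $\overline{U}(F_1)\le 1$, so it suffices to exhibit a single strategy with $\underline{u}(\sigma_p,F_1)=1$. I would take $\sigma_p$ to be the strategy that proposes action $1$ in every period. Under it the naive agent implements $a_t=1$ exactly when she prefers action $1$. Writing $l_t\equiv\log\frac{\pi_t(F_0)}{\pi_t(F_1)}$, her myopic preference in this two-state model is monotone in $l_t$: there is a cutoff $\hat l$ with $a_t=1$ iff $l_t<\hat l$ and $a_t=0$ iff $l_t>\hat l$, and $l_t\neq\hat l$ for all $t$ by Assumption~\ref{Ass1}. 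I would then reduce the lemma to the almost-sure statement that the agent vetoes in only finitely many periods: if $a_t=1$ for all $t$ past some a.s.-finite time, then $\frac1t\sum_{s\le t}a_s\to 1$ a.s., and bounded convergence upgrades this to $\underline{u}(\sigma_p,F_1)=\lim_t\frac1t\mathbb{E}[\sum_{s\le t}a_s\mid F_1]=1$.

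The dynamics of $l_t$ form a random walk with bounded, state-dependent increments: when $a_t=a'$ is implemented, $l_t-l_{t-1}=\log\frac{F_0(y_t\mid a')}{F_1(y_t\mid a')}$ with $y_t\sim F_1(\cdot\mid a_{t-k^*})$, and these are bounded by some $\bar l<\infty$ (full support, finite $Y$). The key sign computation is that whenever $a_{t-k^*}=a_t=a$, the conditional drift equals $-D\big(F_1(\cdot\mid a)\,\|\,F_0(\cdot\mid a)\big)<0$, and attribution errors occur only during the at most $k^*$ transitional periods following a switch. From this I would prove two one-sided facts matching ``absorbed with positive probability at negative values, with zero probability at positive values.'' (A) For $l<\hat l$: conditional on the agent having played $1$ for $k^*$ periods with $l_t=l$, while she continues to play $1$ the increments are i.i.d.\ copies of $Z=\log\frac{F_0(y\mid 1)}{F_1(y\mid 1)}$, $y\sim F_1(\cdot\mid1)$, which has negative mean and is positive with positive probability; letting $r^*_1>0$ solve $\mathbb{E}[e^{r^*_1 Z}]=1$, Lemma~\ref{lem:wald} bounds the probability that the walk ever reaches $\hat l$ by $\exp(-r^*_1(\hat l-l))<1$, so with positive probability she plays $1$ forever. (B) For $l>\hat l$: while she vetoes, after $k^*$ periods the increments are i.i.d.\ with negative mean, so by the strong law the partial sums diverge to $-\infty$ and the walk a.s.\ falls below $\hat l$; hence no veto episode lasts forever.

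The hard part will be combining (A) and (B) into ``finitely many veto episodes a.s.,'' because the escape probability $1-\exp(-r^*_1(\hat l-l))$ in (A) degenerates as $l\uparrow\hat l$, while a fresh down-crossing lands within one bounded increment of $\hat l$. I would repair this by exploiting the negative drift below $\hat l$: starting from any $l\in(\hat l-\bar l,\hat l)$, a gambler's-ruin estimate for the negative-drift walk (via optional stopping with bounded increments) bounds below, uniformly in the starting point, the probability $p_1>0$ of reaching a deep level $\hat l-C$ before returning to $\hat l$; from $\hat l-C$, fact (A) gives a never-returning probability of at least $1-\exp(-r^*_1 C)$. Thus at every down-crossing time the conditional probability of never up-crossing again is at least the uniform constant $p\equiv p_1\big(1-\exp(-r^*_1 C)\big)>0$. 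L\'evy's conditional Borel--Cantelli lemma applied along the successive down-crossing times then forces at most finitely many up-crossings almost surely; combined with the a.s.\ finiteness of each veto episode from (B), this yields finitely many veto periods and closes the reduction. The two delicate points to get right are the uniform gambler's-ruin bound and the bookkeeping of the at most $k^*$ attribution-error periods per switch, which shift $l$ only by a bounded amount and can be absorbed into $C$.
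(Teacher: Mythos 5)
Your proof is correct, but it takes a genuinely different route from the paper's. The paper has the principal propose action $0$ for a long initial phase: since a proposal of $0$ is implemented automatically, the action is constant, there is \emph{no} attribution error, and the agent learns the true state cleanly; once the log-likelihood ratio clears the indifference threshold by a buffer $c$, the principal switches to proposing $1$ forever and a single application of the Wald inequality (via Claim \ref{clm:b2}) shows the belief never crosses back, giving payoff $\geq 1-\epsilon$ for each $\epsilon$ and hence $\sup = 1$. You instead fix the single strategy ``always propose $1$'' and prove it attains payoff exactly $1$ by showing the agent vetoes only finitely often almost surely. This forces you to confront the attribution errors head-on: the escape probability $1-\exp(-r_1^*(\hat{l}-l))$ degenerates at a fresh down-crossing, and the $k^*$ transitional periods after each switch can push the likelihood ratio the wrong way. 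Your patches are the right ones --- a uniform gambler's-ruin (or ``string of negative increments'') bound to reach a deep level $\hat{l}-C$ before returning, then Wald from there, then conditional Borel--Cantelli along down-crossing times --- though note that the transitional increments cannot literally be ``absorbed into $C$'': they occur \emph{before} the descent, so the uniform constant $p$ must also carry a factor like $q_0^{k^*}$ for the event that all $k^*$ transitional increments are favorable. What your approach buys is a pathwise, almost-sure convergence statement and the optimality of a single stationary proposal rule; what the paper's buys is brevity, since deterministically freezing the action during the learning phase sidesteps the entire crossing analysis.
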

\begin{proof}
Let $\pi_{t, i}$ be the posterior probability of distribution $F_i$ according to the agent's belief in period $t$.  
Let $l_t \equiv \log \frac{\pi_{t,1}}{\pi_{t,0}}$.
There exist a threshold $l^*$ such that
the agent chooses action $a_t = 0$ if and only if $l_t > l^*$.

We show that for any $\epsilon > 0$,
the following strategy for the principal achieves payoff at least $1-\epsilon$.
The strategy of the principal is to always propose action $0$ until the log likelihood satisfies 
$l_t > l^* + c$, 
where $c$ is defined later in the analysis. 
The principal switches to always proposing action $1$ if the above condition is satisfied. 
Note that when the principal propose action $0$ for all periods, there is no attribution error, 
and the agent learns the correct distribution. 
By inequality \eqref{eq:b1}, for any $\epsilon_1>0$, any prior $\pi_0$, and any parameter $c$, 
there exists $T>0$ such that with probability at least $1-\epsilon_1$, $l_T > l^* + c$.
Thus with probability at least $1-\epsilon_1$, the principal switches to proposing action 1 before time $T$. 
Moreover, by Claim~\ref{clm:b2}, for any $\epsilon_2 > 0$, 
there exists $c > 0$ such that 
with probability at least $1-\epsilon_2$, 
$l_{t+T} > l_{T} - c$ for all $t>0$.
By setting $\epsilon_1 = \epsilon_2 = \frac{\epsilon}{2}$
and apply the union bound, 
with probability at least $1-\epsilon$, 
we have $l_t > l^*$ for any $t>T$. 
Thus the payoff of the principal is at least $1-\epsilon$ with the given strategy. 
Taking $\epsilon \to 1$ gives the desired bound. 
\end{proof}

\begin{Lemma}\label{LD.2}
For every $\pi_0 \in \Delta (\cF)$ that has full support,
\begin{equation}\label{D.4}
    \underline{U}(F_0)=\overline{U}(F_0)= \qstar \lambdaf.
\end{equation}
\end{Lemma}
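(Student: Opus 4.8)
The plan is to prove \eqref{D.4} by establishing the two bounds $\overline{U}(F_0)\le \qstar\lambdaf$ and $\underline{U}(F_0)\ge \qstar\lambdaf$; since $\underline U(F_0)\le\overline U(F_0)$ holds trivially, these pin down the common value. Throughout I track the agent's log-likelihood ratio $l_t\equiv\log\frac{\pi_{t,0}}{\pi_{t,1}}$, and recall from Assumption \ref{Ass1} that there is a threshold $l^\dagger$ such that the agent approves action $1$ in period $t$ if and only if $l_t<l^\dagger$. Because the true state is $F_0$, $k'=0$, and the agent attributes $y_t$ to $a_t$, the increment $-(l_{t+1}-l_t)$ is distributed as $X_{a_{t-k^*}\rightarrow a_t}$ from \eqref{4.14}; note that $\bE[X_{a\rightarrow a}]=-\KL(F_0(\cdot|a)\,\|\,F_1(\cdot|a))<0$ on repeats, while $\bE[X_{a\rightarrow a'}]$ can be strictly positive on switches, and this is the source of mislearning.

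For the upper bound I argue directly from the definitions of $\qstar$ and $\lambdaf$, which I expect to be the easier direction. A drift argument shows that on almost every path the asymptotic frequency of action $1$ is at most $\lambdaf$: if this frequency were some $f>\lambdaf$, then the long-run ratio of repeat-periods to switch-periods would exceed the maximizer $\widehat\lambda$ in \eqref{lambda}, so by \eqref{constraint} the expected per-cycle increment of $-l_t$ would be strictly negative, forcing $l_t\to+\infty$, i.e.\ permanent vetoing and hence $f=0$, a contradiction. Moreover, a positive asymptotic frequency of action $1$ requires the agent to approve action $1$ in all but a vanishing fraction of periods, which by the same drift dichotomy can be sustained only on the event that action $1$ is eventually strictly optimal in all periods; by \eqref{4.13} this event has probability at most $\qstar$ under $F_0$. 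Combining these two observations yields $\overline{U}(F_0)\le \qstar\lambdaf$.

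For the lower bound, which is the substantive direction, I would construct a one-parameter family of principal strategies indexed by a deep cutoff $l_\varepsilon^*<l^\dagger$ and show their payoff approaches $\qstar\lambdaf$ as $\varepsilon\to0$. Each strategy has two phases. In Phase~1, as long as $l_s$ has stayed above $l_\varepsilon^*$ in every past period, the principal proposes $\widetilde a_t=1-a_{t-k^*}$; since switches have strictly positive expected $X$ while repeats have negative expected $X$ under $F_0$, this maximizes the mislearning drift of $-l_t$ and hence the chance of driving the belief toward $F_1$. In Phase~2, triggered once $l_s$ first falls below $l_\varepsilon^*$, the principal switches to a fixed cyclic pattern (a block of length $T\equiv T_1+T_2$ proposing action $0$ in the odd periods of the first $T_1$ slots and action $1$ elsewhere, with $T_2/T_1$ slightly below $\lambdaf$) that attains action-$1$ frequency arbitrarily close to $\lambdaf$ while keeping the per-block drift of $l_t$ strictly negative. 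The payoff then decomposes through two claims: (i) the probability that Phase~2 is ever reached converges to $\qstar$, and (ii) conditional on reaching Phase~2, the probability that $l_t<l^\dagger$ holds in every subsequent period converges to $1$; together they deliver an asymptotic frequency of $\qstar\lambdaf-o(1)$.

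Claim~(i) follows because, conditional on $l_t$ currently favoring vetoing, Wald's inequality (Lemma \ref{lem:wald}) gives a strictly positive probability that $l_t$ stays on that side forever, so the belief process escapes every bounded interval almost surely; hence the probability of ever dipping below $l_\varepsilon^*$ converges, as $l_\varepsilon^*\to-\infty$, to the probability of the absorbing event $\mathcal{E}^*$, whose supremum over Phase-1 proposals is $\qstar$ by \eqref{4.13}. The main obstacle is Claim~(ii). Here I would apply Wald's inequality to the block-increment process of $l_t$ in Phase~2: choosing the block length to be a fixed multiple of the pattern period makes these block increments i.i.d.\ across blocks, and taking the action-$1$ frequency strictly below $\lambdaf$ makes their common mean strictly negative (this is precisely where the binding constraint in \eqref{constraint} forces a small sacrifice in frequency). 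Wald's inequality then bounds the probability that $l_t$ ever climbs back above $l^\dagger$ from the deep starting value $l_\varepsilon^*$ by $\exp(-r^*(l^\dagger-l_\varepsilon^*))$, which vanishes as $l_\varepsilon^*\to-\infty$. The delicate points are to control the frequency loss incurred by demanding a strictly negative drift, so that it is $o(1)$ as $\varepsilon\to0$, and to verify that throughout Phase~2 the agent indeed keeps approving so that the prescribed actions are actually implemented---which is exactly what the Wald bound guarantees. Sending $\varepsilon\to0$ then yields $\underline U(F_0)\ge\qstar\lambdaf$, completing the proof.
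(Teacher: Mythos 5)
Your proposal follows essentially the same route as the paper: the upper bound via the dichotomy that under $F_0$ the agent either eventually always vetoes or eventually always approves (the latter event having probability at most $q^*$, with conditional action-$1$ frequency at most $\lambda$ by the drift constraint \eqref{constraint}), and the lower bound via a two-phase strategy whose second phase is the same cyclic block pattern whose escape probability is controlled by Wald's inequality. One small caution: in Phase 1 the paper uses the abstract strategy $\sigma_p^*$ attaining the supremum in \eqref{4.13} rather than the concrete alternating rule $\widetilde a_t = 1-a_{t-k^*}$ — that rule is optimal only in the special example of Section \ref{sec4} where $X_{1 \rightarrow 1}$ first-order stochastically dominates the switch increments, and maximizing per-period drift need not maximize the probability of the absorbing event $\mathcal{E}^*$ in general, so you should retain the supremum over Phase-1 proposals that you already invoke in your Claim (i).
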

The proof consists of two parts. In Section \ref{secD.1}, we show that 
\begin{equation}\label{D.6}
    \sup_{\sigma_p \in \Sigma_p} \overline{u} (\sigma_p, F_0) \leq \qstar \lambdaf \textrm{ for every } F_0 \in \cF_0.
\end{equation}
In Section \ref{secD.2}, we show that
\begin{equation}\label{D.5}
    \sup_{\sigma_p \in \Sigma_p} \underline{u} (\sigma_p, F_0) \geq \qstar \lambdaf \textrm{ for every } F_0 \in \cF_0.
\end{equation}
\subsubsection{Proof of Lemma D.2: Establish the Payoff Upper Bound}\label{secD.1}
Let
\begin{equation*}
    \Pi_1 \equiv  \Big\{
    \pi \in \Delta (\cF) \Big|
\argmax_{i \in \{0,1\}}  \big\{ \sum_{F \in \cF} \pi(F) \sum_{y \in Y} F(y|i)v(y)\big\}=\{1\} 
    \Big\}
\end{equation*}
be the set of beliefs under which the agent strictly prefers action $1$. 
Claim \ref{clm:b2} implies that there exists $\underline{p}>0$ such that for every $\pi_t \notin \Pi_1$ and $\sigma_p \in \Sigma_p$, we have
\begin{equation}\label{D.7}
    \Pr \Big(\pi_s \notin \Pi_1 \textrm{ for every }  s \geq t \Big| F_0,\sigma_p\Big) >\underline{p}.
\end{equation}
For every $k \in \mathbb{N}$, we say that $\pi_t$ crosses $\Pi_1$ in period $k$ (or equivalently, there is a \textit{crossing} in period $k$) if $\pi_{k-1} \in \Pi_1$ and $\pi_k \notin \Pi_1$, or $\pi_{k-1} \notin \Pi_1$ and $\pi_k \in \Pi_1$. The uniform lower bound in (\ref{D.7})
implies that for every $\sigma_p \in \Sigma_p$, the expected number of crossings is finite almost surely. Therefore, 
\begin{equation}\label{D.8}
    \Pr \Big(\exists t \in \mathbb{N} \textrm{ s.t. } \pi_s \notin \Pi_1 \textrm{ for every } s \geq t \Big| F_0,\sigma_p\Big) +
    \Pr \Big(\underbrace{\exists t \in \mathbb{N} \textrm{ s.t. } \pi_s \in \Pi_1 \textrm{ for every } s \geq t}_{\equiv\textrm{event } \mathcal{E}_{\sigma_p}} \Big| F_0,\sigma_p\Big)=1
\end{equation}
Let $\mathcal{E}_{\sigma_p}$ be the event that there exists $t \in \mathbb{N}$ such that $\pi_s \in \Pi_1$ for every $s \geq t$.
Let $\sigma_p^*$ be the strategy of the principal for maximizing probability of event $\mathcal{E}_{\sigma^*_p}$
given prior $\pi_0$.
By definition, 
$\qstar$ is the probability of event $\mathcal{E}_{\sigma_p^*}$ when the principal uses strategy $\sigma_p^*$. 

The principal's asymptotic payoff conditional on event $\{\exists t \in \mathbb{N} \textrm{ s.t. } \pi_s \notin \Pi_1 \textrm{ for every } s \geq t\}$ is zero. We conclude the proof by showing that his asymptotic payoff conditional on event 
$\mathcal{E}_{\sigma_p}$
is at most $\lambdaf$ for every $\sigma_p \in \Sigma_p$ satisfying $\Pr[\mathcal{E}_{\sigma_p}| F_0,\sigma_p]>0$.

Suppose toward a contradiction that there exists $\varepsilon>0$ such that conditional on $\mathcal{E}_{\sigma_p}$, the asymptotic frequency of policy $1$ is more than  $\lambdaf+\varepsilon$ when the true state is $F_0$. 
First, we observe that the asymptotic frequency of $(a_{t-1},a_t)=(1,0)$ equals that of $(a_{t-1},a_t)=(0,1)$ regardless of the principal's strategy $\sigma_p$. 
The definition of $\lambdaf$ then implies that  
\begin{equation}\label{D.9}
    \lim_{t \rightarrow +\infty} \mathbb{E} \Big[\log \frac{\pi_t(F_0)}{\pi_t(F_1)} \Big| \mathcal{E}_{\sigma_p}, \sigma_p \Big] =+\infty. 
\end{equation}
As a result, the agent strictly prefers action $0$ asymptotically when the principal uses strategy $\sigma_p$ conditional on event $\mathcal{E}_{\sigma_p}$. 
This contradicts the definition of event $\mathcal{E}_{\sigma_p}$ under which the agent strictly prefers policy $1$.

\subsubsection{Proof of Lemma D.2: Attain the Payoff Upper Bound}\label{secD.2}
We construct $\sigma_p^{\varepsilon} \in \Sigma_p$ for every $\varepsilon>0$ such that 
\begin{equation}\label{D.11}
    \underline{u}(\sigma_p^{\varepsilon},F_0) \geq \qstar \lambdaf-\varepsilon. 
\end{equation}
For every $\sigma_p \in  \Sigma_p$ and $l \in \mathbb{R}$, let $\mathcal{E}_{\sigma_p,l}$ be the following event when the principal uses strategy $\sigma_p$,
\begin{equation}\label{D.12}
    \log \frac{\pi_s (F_1)}{\pi_s (F_0)} \geq l \textrm{ for every } s \geq t \textrm{ for some } t\in \mathbb{N}. 
\end{equation}
If $\Pi(l) \subset \Delta (\cF)$ be the
set of beliefs that satisfy (\ref{D.12}). 
Recall the definition of $\mathcal{E}_{\sigma_p}$ in (\ref{D.8}), which implies the existence of $l^* \in \mathbb{R}_+$ such that $\mathcal{E}_{\sigma_p,l} \subset \mathcal{E}_{\sigma_p}$ 
and $\Pi(l) \subset \Pi_1$
for every $l \geq l^*$. 
Recall that $\sigma_p^*$ is the strategy that maximizes the probability of event $\mathcal{E}_{\sigma^*_p}$
given prior $\pi_0$.. 
\begin{Lemma}\label{LD.3}
For every $\pi_0 \in \Delta (\cF)$ that has full support
and $l \in \mathbb{R}$, 
we have $\Pr[\mathcal{E}_{\sigma_p^*,l}|F_0,\sigma_p^*] = \qstar$.
\end{Lemma}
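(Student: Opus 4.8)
The plan is to reduce the claim to a \emph{dichotomy} for the agent's log likelihood ratio $l_t \equiv \log\frac{\pi_t(F_0)}{\pi_t(F_1)}$: under the true state $F_0$ and the profile $(\sigma_p^*,\sigma_a^*)$, almost surely either $l_t \to +\infty$ or $l_t \to -\infty$. Granting this, the conclusion is immediate. Recall that $\mathcal{E}_{\sigma_p^*,l}$ in (D.12) is the event $\{l_s \le -l \text{ for all large } s\}$, and that $\mathcal{E}_{\sigma_p^*}=\{\pi_s\in\Pi_1\text{ for all large }s\}$ coincides (up to the threshold $l^*$) with $\{l_s<l^*\text{ for all large }s\}$, whose probability is $\qstar$ by the definition of $\sigma_p^*$. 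On $\{l_t\to-\infty\}$ the inequality $l_s\le -l$ holds eventually for \emph{every} fixed $l\in\mathbb{R}$, so both $\mathcal{E}_{\sigma_p^*,l}$ and $\mathcal{E}_{\sigma_p^*}$ occur; on $\{l_t\to+\infty\}$ both fail. Hence $\mathcal{E}_{\sigma_p^*,l}=\mathcal{E}_{\sigma_p^*}=\{l_t\to-\infty\}$ up to a null set for every $l$, giving $\Pr[\mathcal{E}_{\sigma_p^*,l}\mid F_0,\sigma_p^*]=\qstar$.

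To prove the dichotomy I would first record that the increments $l_t-l_{t-1}=\log\frac{F_0(y_t\mid a_t)}{F_1(y_t\mid a_t)}$ are uniformly bounded by some $H<\infty$, since $A,Y$ are finite and every $F(\cdot\mid a)$ has full support. The core estimate is a uniform lower bound on the probability of \emph{correct} learning: there is $p>0$ such that from any history whose current ratio lies in a fixed bounded interval $[a,b]$, the conditional probability of $\{l_s\to+\infty\}$ is at least $p$. I would build this in two steps. First, identification ($F_0(\cdot\mid a')\neq F_1(\cdot\mid a')$) guarantees, for each action, an outcome $y$ with $\log\frac{F_0(y\mid a')}{F_1(y\mid a')}\ge\delta>0$; full support makes this outcome occur with probability at least $p_0>0$ under $F_0$ regardless of the realized actions. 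Iterating over a bounded number $N$ of periods pushes $l_t$ above $l^*$ (so $\pi_t\notin\Pi_1$) with probability at least $p_0^{N}$. Second, once $\pi_t\notin\Pi_1$, inequality (D.7) yields probability at least $\underline p$ that $\pi_s\notin\Pi_1$ for all $s\ge t$; on that event the agent takes the status-quo action $0$ forever, so $a_{s-k^*}=a_s=0$ eventually and, $F_0$ being true, $l_s$ has strictly positive drift $\mathrm{KL}\!\big(F_0(\cdot\mid0)\,\|\,F_1(\cdot\mid0)\big)$, whence $l_s\to+\infty$ by the law of large numbers. Composing the two steps gives $p\equiv p_0^{N}\underline p>0$.

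Next I would use this estimate to show every bounded interval is visited finitely often. Let $\tau_1<\tau_2<\cdots$ be the successive visit times of $l_t$ to $[a,b]$; at each $\tau_n$ the strong Markov property and the core estimate give conditional probability at least $p$ that $l_s\to+\infty$, an event on which $[a,b]$ is never revisited. Conditioning successively yields $\Pr[\tau_{n}<\infty]\le(1-p)^{\,n-1}$, so there are finitely many visits almost surely. As this holds for $[-M,M]$ for every $M$, we get $|l_t|\to\infty$; combined with the increment bound $H$ (which prevents the sign from flipping once $|l_t|>H$), the sign of $l_t$ stabilizes and $l_t\to+\infty$ or $l_t\to-\infty$, establishing the dichotomy and hence the lemma.

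I expect the main obstacle to be the uniform lower bound $p$ on correct learning---specifically, arguing that $l_t$ can be driven out of $\Pi_1$ with probability bounded away from zero \emph{irrespective of the actions induced by $\sigma_p^*$ and $\sigma_a^*$}, and then splicing this with (D.7) through the strong Markov property so that the successive-visit events are handled correctly despite their dependence.
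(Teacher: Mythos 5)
Your proposal is correct and follows essentially the same route as the paper's: both arguments rest on the uniformly positive probability (from full support of outcomes combined with the permanent-escape bound (\ref{D.7})/Wald) that the log likelihood ratio permanently leaves any bounded region, which the paper packages as ``the belief cannot remain forever in $\Pi_1\setminus\Pi(l)$'' together with the finite-crossings identity (\ref{C.10})--(\ref{C.11}), and which you package as the a.s.\ dichotomy $l_t\to\pm\infty$. One small imprecision to fix: the event $\{l_s\to+\infty\}$ does not by itself rule out further visits to $[a,b]$, so in the successive-visit step you should run your escape construction up to a level above $\max\{b,l^*\}$ plus the Wald slack, so that the sub-event you actually exhibit both has probability at least $p$ and never revisits $[a,b]$.
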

\begin{proof}[Proof of Lemma D.3:] As shown before, there exists $\underline{p} >0$ such that for every $\pi_t \notin \Pi_1$, the probability with which 
$\pi_s \notin \Pi_1$ for every $s \geq t$ is at least $\underline{p}$ when the true state is $F_0$. 
As a result, for every $l \in \mathbb{R}_+$, the probability of the following event is zero under any strategy in $\Sigma_p$:
\begin{itemize}
    \item $\pi_s \in \Pi_1 \backslash \Pi(l)$ for every $s \geq t$. 
\end{itemize}
This implies that for every $l \in \mathbb{R}$ that satisfies $\Pi(l) \subset \Pi_1$, we have
\begin{equation}\label{C.10}
\Pr \Big(\exists t \in \mathbb{N} \textrm{ s.t. } \pi_t \in \Pi(l) \textrm{ for all } s \geq t \Big| F_0,\sigma_p^*\Big) 
= 
\Pr \Big(\exists t \in \mathbb{N} \textrm{ s.t. } \pi_t \in \Pi_1 \textrm{ for all } s \geq t \Big| F_0,\sigma_p^*\Big). 
\end{equation}
and moreover,
\begin{equation}\label{C.11}
    \Pr \Big(\exists t \in \mathbb{N} \textrm{ s.t. } \pi_t\in \Pi_1 \textrm{ for all } s \geq t \Big| F_0,\sigma_p^*\Big) +
    \Pr \Big(\exists t \in \mathbb{N} \textrm{ s.t. } \pi_t \notin \Pi_1 \textrm{ for all } s \geq t \Big| F_0,\sigma_p^*\Big)=1.
\end{equation}
Equations (\ref{C.10}) and (\ref{C.11}) together imply that 
$\Pr[\event_{\strategy_p^*, l}| F_0,\sigma_p^*]
=\Pr[\event_{\strategy^*_p}| F_0,\sigma_p^*]$, 
while the latter equals $\qstar$. 
\end{proof}

Next we focus on the case when $\lambdaf > 0$
since the case $\lambdaf = 0$ is trivial. 
In this case, we know that 
$\mathbb{E}[X_{1 \rightarrow 0} +X_{0 \rightarrow 1}] > 0$
since 
$\mathbb{E}[X_{1 \rightarrow 1}] < 0$.

For small enough $\varepsilon>0$, let $T_1, T_2 \in \mathbb{N}$ be two positive integers such that $T_1$ is even and 
$\frac{\frac{T_1}{2}+T_2}{T_1+T_2} \in \Big(\lambdaf-\epsilon, \lambdaf\Big)$. 
Let $T \equiv T_1+T_2$. Let $\overline{\strategy}_p \in \Sigma_p$ be defined as: 
\begin{itemize}
    \item $\overline{\sigma}_p (h^t)= 0$ if there exists $k \in \mathbb{N}$ such that $t \in \{kT+2,kT+4,...,kT+T_1\}$,
     \item $\overline{\sigma}_p (h^t)= 1$  otherwise. 
\end{itemize}
According to $\overline{\sigma}_p$, the frequency with which the principal proposes policy $1$ 
belongs to the interval $(\lambdaf - \epsilon, \lambdaf)$.

Let $l_t \equiv \log \frac{\pi_t (F_1)}{\pi_t(F_0)}$,
and let $X_T$ be the increment of $l_t$ from period $t$ to $t+T$ 
when policy $0$ is chosen in period $t+2,t+4,...,t+T_1$ 
and policy $1$ is chosen in other periods. 
Let $\overline{H}$ be the maximal realization of $X_T$.

Let $r^* > 0, \eta > 0$ be  such that 
$\expect[z\sim X_T]{\exp(r^*z)} = 1$
and $\exp(-r^* \cdot \eta) < \epsilon$. 
Let $\bar{l} \in \mathbb{R}$ be large enough such that $\Pi(\bar{l}) \subset \Pi_1$. 
Recall the definition of $\sigma_p^*$.
Let $\strategy^{\varepsilon}_p \in \Sigma_p$ be defined as:
\begin{itemize}
\item $\sigma^{\varepsilon}_p(h^t)=\sigma_p^*(h^t)$ if
$\pi_t \in \Pi(\bar{l}+\eta+\overline{H})$
for all $t'< t$;
\item $\sigma^{\varepsilon}_p(h^t)=\overline{\strategy}_p(h^t)$ otherwise. 
\end{itemize} 
Conditional on $\pi_t$ reaches $\Pi(\bar{l}+\eta+\overline{H})$, the Wald's inequality in Lemma \ref{lem:wald} implies that the probability with which $\pi_s \in \Pi(\bar{l})$ for every $s \geq t$ is at least $1-\varepsilon$, which implies that the principal's asymptotic payoff is at least 
$\qstar (\lambdaf-\varepsilon)$ when the true state is $F_0$.

\subsection{Connections between Auxiliary Game \& Original Game}\label{secD2}
We show that the principal's payoff in the original game equals his expected payoff in the auxiliary game studied in Appendix \ref{secD1}. First, we show that the principal learns the true state asymptotically regardless of the chosen policies.
\begin{Lemma}\label{lem:D4}
For every $\sigma_p \in \Sigma_p$, $F \in \cF$, and $\varepsilon >0$, there exists $\tau \in \mathbb{N}$ such that
\begin{equation}\label{E.1}
    \Pr \Big(\pi_{\tau} (F) > 1-\varepsilon \Big| F \Big) > 1-\varepsilon. 
\end{equation}
\end{Lemma}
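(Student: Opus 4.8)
The plan is to show that, under the principal's correctly specified updating, the log-likelihood ratio between the true state and the other state drifts to $+\infty$ at a linear rate no matter how the policies are chosen, and then to use the Azuma--Hoeffding inequality (Lemma \ref{lem:azuma}) to control the fluctuations around this drift. Since $\cF=\{F_0,F_1\}$ has only two elements, I would fix the true state $F$, write $F'$ for the remaining state, and let $L_t \equiv \log\frac{\pi_t(F)}{\pi_t(F')}$ under the principal's belief. Because there are two states, $\pi_\tau(F) > 1-\varepsilon$ is equivalent to $L_\tau > \log\frac{1-\varepsilon}{\varepsilon}$, so it suffices to show that $L_\tau$ exceeds this threshold with probability at least $1-\varepsilon$ conditional on $F$, for a suitable $\tau$.

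First I would decompose $L_t = L_0 + \sum_{s=1}^t \xi_s$, where $\xi_s \equiv \log\frac{F(y_s \mid a_{s-k^*})}{F'(y_s \mid a_{s-k^*})}$ is the increment contributed by outcome $y_s$. The key point is that the principal has a correct model: conditional on the true state $F$, the outcome $y_s$ is drawn from $F(\cdot \mid a_{s-k^*})$, and since $k^* \geq 1$ the lagged action $a_{s-k^*}$ is already determined before $y_s$ is realized. Letting $\mathcal{G}_{s-1}$ denote the information available before $y_s$ is drawn (which includes $a_{s-k^*}$), I obtain
\[
\mathbb{E}\big[\xi_s \mid \mathcal{G}_{s-1}, F\big]
= \sum_{y \in Y} F(y \mid a_{s-k^*}) \log\frac{F(y \mid a_{s-k^*})}{F'(y \mid a_{s-k^*})}
= \KL\big(F(\cdot \mid a_{s-k^*}) \,\big\|\, F'(\cdot \mid a_{s-k^*})\big) \geq d,
\]
where $d \equiv \min_{a \in A} \KL(F(\cdot \mid a)\,\|\,F'(\cdot \mid a)) > 0$. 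Strict positivity of $d$ comes from the identification condition $F(\cdot \mid a) \neq F'(\cdot \mid a)$ for every $a$, and full support guarantees that each $\xi_s$ is bounded in absolute value by some constant $\bar{l} < \infty$.

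Next I would form the martingale $M_t \equiv \sum_{s=1}^t \big(\xi_s - \mathbb{E}[\xi_s \mid \mathcal{G}_{s-1}, F]\big)$, whose increments are bounded by $2\bar{l}$, and observe that $L_t \geq L_0 + td + M_t$ for every realized (history-dependent) policy sequence. Applying Lemma \ref{lem:azuma} to $-M_t$ gives $\Pr(M_t \leq -td/2 \mid F) \leq \exp\!\big(-t d^2/(32\bar{l}^2)\big)$, so on the complementary event $L_t \geq L_0 + td/2$. It then remains to pick $\tau$ large enough that simultaneously $L_0 + \tau d/2 \geq \log\frac{1-\varepsilon}{\varepsilon}$ (possible since $L_0$ is a finite constant once $\pi_0$ has full support) and $\exp(-\tau d^2/(32\bar{l}^2)) < \varepsilon$; both hold for all large $\tau$, which yields $\Pr(\pi_\tau(F) > 1-\varepsilon \mid F) \geq 1-\varepsilon$.

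The main obstacle I anticipate is verifying that the drift bound $\mathbb{E}[\xi_s \mid \mathcal{G}_{s-1}, F] \geq d > 0$ holds \emph{uniformly} across all realized action sequences, including those induced by the agent's veto decisions and by an arbitrary, possibly adversarial, principal strategy $\sigma_p$. This is precisely where correct specification of the time lag (so that $a_{s-k^*}$ is $\mathcal{G}_{s-1}$-measurable) and the per-action identification condition do the work: they make the conditional KL-divergence positive for \emph{every} action, so no policy sequence can stall the principal's learning. Once this uniform drift is in hand, the remaining steps are routine martingale concentration.
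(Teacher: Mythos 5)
Your proof is correct, but it takes a genuinely different route from the paper's. The paper proves this lemma with a global information-theoretic argument: by the chain rule for relative entropy, the sum over all periods of the expected one-step predictive KL divergences $d(q_{F|h^t}\,\|\,q_{\pi_t|h^t})$ is bounded above by $-\log\pi_0(F)<\infty$, so the tail of that sum can be made arbitrarily small; since the identification condition forces each term to exceed some $\eta>0$ whenever $\pi_t(F)\leq 1-\varepsilon$, Markov's inequality delivers the conclusion. You instead track the log-likelihood ratio $L_t$ directly, show its increments have conditional mean at least $d\equiv\min_{a}\KL(F(\cdot|a)\,\|\,F'(\cdot|a))>0$ uniformly over histories (this is where correct specification of $k^*$ matters, exactly as you flag), and apply Azuma--Hoeffding to the centered partial sums. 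Both arguments rest on the same two ingredients --- $F(\cdot|a)\neq F'(\cdot|a)$ for every $a$ (so the drift or the per-period divergence is bounded away from zero) and full support (so increments are bounded) --- and your observation that the principal's updating depends only on outcomes, not on the proposals or vetoes, is the correct justification for why an arbitrary $\sigma_p$ cannot stall learning. What your approach buys is an explicit, quantitative rate: $L_\tau$ grows linearly and the failure probability decays exponentially in $\tau$, so one can actually exhibit the required $\tau$. What the paper's approach buys is that it applies verbatim to any finite $\mathcal{F}$ without tracking pairwise likelihood ratios; your argument handles $|\mathcal{F}|=2$ cleanly (which is all Section 4 needs) and would extend to more states only via a union bound over the pairwise ratios $\log\frac{\pi_t(F)}{\pi_t(F')}$. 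Either proof is acceptable here.
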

\begin{proof}[Proof of Lemma \ref{lem:D4}:]
Let $Q_{F}$ be the probability measure over $\mathcal{H}$ induced by distribution $F$ and let $Q_p$ be the probability measured over $\mathcal{H}$ 
induced by the principal's prior belief $\pi_0 \in \Delta (\cF)$. 
For every history $h^t$, let $q_{F|h^t} \in \Delta (A \times Y)$ be the principal's belief about $(a_t,y_t)$ conditional on the true state being $F$, 
and let $q_{\pi_t|h^t} \in \Delta (A \times Y)$ be the principal's belief about $(a_t,y_t)$ 
when his belief about the state is $\pi_t \in \Delta (\cF)$. 
The chain rule for relative entropy implies that
\begin{equation}\label{E.2}
  -\log \pi_0(F) \geq   d\Big(Q_{F} \Big\| Q_{p} \Big) =
    \sum_{t=0}^{\infty} \mathbb{E}_{Q_{F}} \Big[
    d\Big(
    q_{F|h^{t}}
    \Big\|
    q_{\pi_t|h^{t}}
    \Big)
    \Big].
\end{equation}
Conditional \ref{cd:1} implies that $d(q_{F|h^t} || q_{F' |h^t}) >0$ for every $F \neq F'$. Since $F$ is finite, for every $\varepsilon>0$, 
there exists $\eta>0$ such that 
$d(q_{F|h^{t}}\| q_{p|h^{t}}) > \eta$ 
for every $\pi_0 \in \Delta (\cF)$ satisfying $\pi_0(F) \leq 1-\varepsilon$.
Inequality (\ref{E.2}) implies the existence of $\tau \in \mathbb{N}$ such that 
\begin{equation}\label{E.3}
    \sum_{t=\tau}^{\infty} \mathbb{E}_{Q_{F}} \Big[
    d\Big(
    q_{F|h^{t}}
    \Big\|
    q_{\pi_t|h^{t}}
    \Big)
    \Big] \leq \eta \varepsilon. 
\end{equation}
The Markov's inequality implies that the probability with which $d(q_{F|h^{t}}\|    q_{\pi_t|h^{t}}) > \eta$ is strictly less than $\varepsilon$ for every $t \geq \tau$, or equivalently, the probability with which $\pi_t(F) \leq 1-\varepsilon$ is less than $\varepsilon$ for every $t \geq \tau$. 
\end{proof}
\begin{Lemma}\label{LD.5}
We have $\overline{V}=\underline{V}=\sum_{F \in \cF} \pi_0(F) \overline{U}(F) =\sum_{F \in \cF} \pi_0(F) \underline{U}(F)$. 
\end{Lemma}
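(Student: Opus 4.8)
The plan is to sandwich $\underline{V}$ and $\overline{V}$ between $\sum_{F\in\cF}\pi_0(F)\underline{U}(F)$ and $\sum_{F\in\cF}\pi_0(F)\overline{U}(F)$, and then invoke Lemmas \ref{LD.1} and \ref{LD.2}, which give $\underline{U}(F)=\overline{U}(F)$ for both $F\in\cF$ so that the two sandwiching quantities coincide. Concretely I would establish the upper bound $\overline{V}\leq\sum_{F}\pi_0(F)\overline{U}(F)$ and the lower bound $\underline{V}\geq\sum_{F}\pi_0(F)\underline{U}(F)$. Since $\underline{V}(\sigma_p)\leq\overline{V}(\sigma_p)$ for each $\sigma_p$ (a $\liminf$ is at most the corresponding $\limsup$), taking suprema gives $\underline{V}\leq\overline{V}$, and the chain $\sum_{F}\pi_0(F)\underline{U}(F)\leq\underline{V}\leq\overline{V}\leq\sum_{F}\pi_0(F)\overline{U}(F)=\sum_{F}\pi_0(F)\underline{U}(F)$ then forces all four quantities to be equal.

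For the upper bound, fix any $\sigma_p\in\Sigma_p$. Because the principal has no private information, the no-signaling-what-you-don't-know condition implies that the agent's belief and her best reply $\sigma_a^*$ depend only on the realized policies and outcomes; hence the agent's behavior under $(\sigma_p,\sigma_a^*)$ in the original game coincides, along every realized history, with her naive behavior in the auxiliary game, so $\mathbb{E}^{(\sigma_p,\sigma_a^*)}[\,\cdot\,]=\sum_{F\in\cF}\pi_0(F)\,\mathbb{E}^{\sigma_p}[\,\cdot\mid F\,]$. Conditioning on the state and using subadditivity of $\limsup$ for the nonnegative weights $\pi_0(F)$, I would write
$$\overline{V}(\sigma_p)=\limsup_{t\rightarrow+\infty}\frac1t\sum_{F\in\cF}\pi_0(F)\,\mathbb{E}^{\sigma_p}\Big[\sum_{s=1}^t a_s\,\Big|\,F\Big]\leq\sum_{F\in\cF}\pi_0(F)\,\overline{u}(\sigma_p,F)\leq\sum_{F\in\cF}\pi_0(F)\,\overline{U}(F),$$
and taking the supremum over $\sigma_p$ yields $\overline{V}\leq\sum_{F}\pi_0(F)\overline{U}(F)$.

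For the lower bound, I would construct, for each $\varepsilon>0$, a strategy in the original game approximately attaining $\sum_{F}\pi_0(F)\underline{U}(F)$. Let $\sigma_p^{\varepsilon}$ be the auxiliary-game strategy from Section \ref{secD.2}, which satisfies $\underline{u}(\sigma_p^{\varepsilon},F_0)\geq\underline{U}(F_0)-\varepsilon$. By Lemma \ref{lem:D4} there is $T$ such that, conditional on each true state, the principal's period-$T$ posterior places probability exceeding $1-\varepsilon$ on that state with probability at least $1-\varepsilon$. Consider the strategy: play $\sigma_p^{\varepsilon}$ for the first $T$ periods; if the period-$T$ posterior favors $F_1$, switch to a strategy attaining asymptotic payoff $\geq1-\varepsilon$ in state $F_1$ (which exists for the agent's current belief $\pi_T$ because Lemma \ref{LD.1} gives $U(F_1,\cdot)=1$ for every full-support prior); otherwise continue $\sigma_p^{\varepsilon}$. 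Conditional on $F_1$, with probability $\geq1-\varepsilon$ the switch occurs and the continuation payoff is $\geq1-\varepsilon$; conditional on $F_0$, with probability $\geq1-\varepsilon$ the principal never switches and thus replicates $\sigma_p^{\varepsilon}$ throughout, yielding payoff $\geq\underline{U}(F_0)-\varepsilon$. Hence the expected asymptotic payoff is at least $\pi_0(F_1)(1-\varepsilon)^2+\pi_0(F_0)(1-\varepsilon)(\underline{U}(F_0)-\varepsilon)$, and letting $\varepsilon\rightarrow0$ gives $\underline{V}\geq\pi_0(F_1)+\pi_0(F_0)\underline{U}(F_0)=\sum_{F}\pi_0(F)\underline{U}(F)$.

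The main obstacle is the lower bound: I must check that after the period-$T$ strategy switch the continuation behavior genuinely reproduces the auxiliary-game payoffs even though the agent's belief has moved endogenously from $\pi_0$ to $\pi_T$. This is exactly where it matters that Lemmas \ref{LD.1} and \ref{LD.2} are stated for \emph{every} full-support prior, so the guarantees are belief-uniform, and that the no-signaling condition makes the agent's continuation learning along the realized path identical in the two games; once these are in place, the transient from switching at the finite time $T$ is washed out by the long-run average, and the bookkeeping of the $(1-\varepsilon)$ factors across the two conditioning events is routine.
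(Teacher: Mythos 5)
Your proposal is correct and follows essentially the same route as the paper's proof: the upper bound via the observation that the state-conditional expectation in the original game coincides with the auxiliary game (so $\overline{V}\leq\sum_F\pi_0(F)\overline{U}(F)$), and the lower bound via the same composite strategy that lets the principal learn the state by a finite time $T$ (Lemma \ref{lem:D4}) and then continue with the appropriate near-optimal auxiliary-game strategy, using that Lemmas \ref{LD.1} and \ref{LD.2} hold for every full-support prior so the guarantee survives the endogenous move of the agent's belief to $\pi_T$. The $(1-\varepsilon)$ bookkeeping matches the paper's.
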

\begin{proof}[Proof of Lemma \ref{LD.5}:] 
Since $\overline{U}(F)=\underline{U}(F)$ for every $F \in \cF$, we have $\overline{V} \leq  \sum_{F \in \cF} \pi_0(F) \overline{U}(F)$. We show  $\underline{V} \geq  \sum_{F \in \cF} \pi_0(F) \overline{U}(F)$ by constructing a strategy $\sigma_p^{\varepsilon}$ for every $\varepsilon>0$ under which $\underline{V}(\sigma_p^{\varepsilon}) \geq  \sum_{F \in \cF} \pi_0(F) \overline{U}(F)-\varepsilon$.

For every $\varepsilon >0$, let $\tau \in \mathbb{N}$ be such that  $\Pr \Big(\pi_{\tau} (F) > 1-\varepsilon \Big| F \Big) > 1-\varepsilon$, 
and let $\sigma_p^{F_0}(\varepsilon) \in \Sigma_p$ be the strategy under which 
the principal obtains utility $\underline{u}(\sigma_p^{F_0}(\varepsilon), F_0) \geq  \overline{U} (F_0)-\varepsilon$
if the true state is $F_0$. 
Such a strategy exists according to Lemma \ref{LD.2}.
Similarly, let $\sigma_p^{F_1}(\varepsilon)$
be the strategy under which 
the principal obtains utility $\underline{u}(\sigma_p^{F_1}(\varepsilon), F_1) \geq  1-\varepsilon$.
Let $\sigma_p^{\varepsilon} \in \Sigma_p$ be a strategy under which 
\begin{itemize}
    \item The principal follows $\sigma_p^{F_0}(\varepsilon)$ for every $t \leq \tau$.
    \item If $\pi_{\tau} (F_0) \geq 1-\varepsilon$, then the principal follows $\sigma_p^{F_0}(\varepsilon)$ starting from period $\tau$. 
\item Otherwise, he follows $\sigma_p^{F_1}(\varepsilon)$ starting from period $\tau$. 
\end{itemize}
Next, we establish a lower bound on the principal's asymptotic payoff from strategy $\sigma_p^{\varepsilon}$. Conditional on the true state is $F_0$, 
the probability with which the principal plays $\sigma_p^F(\varepsilon,\tau)$ is at least $1-\varepsilon$.
Conditional on the true state is $F_1$, there exists $T>0$ such that the probability with which the principal proposes $1$ in every period after $\tau+T$ is greater than $1-\varepsilon$ according to Lemma \ref{LD.1}. 
As a result, the principal's asymptotic payoff from $\sigma_p^{\varepsilon}$ is at least 
$(1-\varepsilon)\sum_{F \in \cF} \pi_0(F) \overline{U}(F) $. Since the principal's stage-game payoff is between $0$ and $1$, we have
$(1-\varepsilon)\sum_{F \in \cF} \pi_0(F) \overline{U}(F) \geq \sum_{F \in \cF} \pi_0(F) \overline{U}(F)-\varepsilon$.
\end{proof}

\paragraph{Generalizations.}
Finally, we discuss the generalization of our result in broader settings. 
\begin{enumerate}
\item All results in this section does not hinge on the fact that $k^*=1$ and $k'=0$. 
In fact, all the lemmas and claims hold directly for general time lags. 

\item When $|\cF| > 2$, we denote $\cF_0$ as the set of distributions with optimal action $0$ for the agent 
and $\cF_1$ as the set of distributions with optimal action $1$ for the agent.
The results directly generalize when $|\cF_1| > 1$. 
As we observe from Lemma \ref{LD.1}, the payoff of the principal in the auxiliary game does not depend on the prior when the true state is in $\cF_1$. 
Thus the principal can simply learn the true distribution with high probability as described in Lemma \ref{LD.5}. 
However, things are more complicated when $|\cF_0|>1$. 
The main reason is that the payoff of the principal depends on the prior $\pi_0$ in the auxiliary game when the true state is in $\cF_1$. 
When the principal faces uncertainly over $\cF_0$, 
if there does not exist a strategy $\strategy_p$ that maximizes the probability of the event $\event_{\strategy_p}$ simultaneously for all $F_0\in \cF_0$, 
the principal suffers a non-negligible utility loss in the process of learning the true state. 
\end{enumerate}

\end{spacing}
\bibliographystyle{plainnat}
\bibliography{ref}
\end{document}